\documentclass[11pt]{article}

\usepackage{hyperref}
\usepackage{microtype,amssymb,latexsym,graphicx, xspace, url, xcolor, subfigure}
\usepackage{amsmath,amsthm}
\usepackage[margin=1.25in]{geometry}
\usepackage{mathpazo}
\usepackage{cite}
\usepackage{comment}
\usepackage{enumerate}

\makeatletter

\long\def\@makecaption#1#2{
   \vskip 10pt
   \setbox\@tempboxa\hbox{{\footnotesize {\bf #1.} #2}}
   \ifdim \wd\@tempboxa >\hsize         % IF longer than one line:
       {\footnotesize {\bf #1.} #2\par}% THEN set as ordinary paragraph.
     \else                              %   ELSE  center.
       \hbox to\hsize{\hfil\box\@tempboxa\hfil}
   \fi}
\makeatother

\long\def\oldtext#1{}

\def\reals{{\mathbb R}}

\def\adj{\mathcal{G}}
\def\HDAG{\mathcal{H}}

\def\cell{\tau}
\def\ceil#1{\lceil #1 \rceil}
\def\floor#1{\lfloor #1 \rfloor}

\def\eps{{\varepsilon}}
\def\bd{{\partial}}
\def\A{\mathcal{A}}

\def\C{\mathcal{C}}

\def\F{\mathcal{F}}

\def\K{\mathcal{K}}

\def\R{\mathcal{R}}
\def\S{\mathcal{S}}

\def\U{\mathcal{U}}

\def\EE{\mathsf{E}}

\def\XX{\mathsf{X}}

\def\VeD#1{{#1}^{\mbox{\raisebox{4pt}{\tiny$\|$}}}}

\def\visib{\varphi}

\def\VDsubstruc{\VeD{\C}}
\def\x24{{4}}
\def\subklevel#1{\C^{{\le}#1}}

\def\ASet{\Xi}
\def\CSet{\Phi}

\def\defn{\mathcal{D}}

\def\cross{\mathrm{cr}}

\def\etal{\textsl{et~al.}}
\def\M{\mathsf{M}}

\newtheorem{theorem}{Theorem}[section]
\newtheorem{lemma}[theorem]{Lemma}

\theoremstyle{remark}

\def\subsecref#1{Section~\ref{subsec:#1}}
\def\secref#1{Section~\ref{sec:#1}}
\def\thmref#1{Theorem~\ref{thm:#1}}
\def\lemref#1{Lemma~\ref{lem:#1}}
\def\eqref#1{(\ref{eq:#1})}

\begin{document}

%\begin{titlepage}

\title{Nearly-Tight Bounds for Vertical Decomposition \\ in Three and Four Dimensions\thanks{%
    A preliminary version of this paper appeared in the \emph{Proc. 35th ACM-SIAM Sympos. Discrete Algorithms}, 2024, 150--170.
    We note that the proof given in this proceeding version has a technical gap, so we present a different and more involved approach to obtain the same result. \newline
  Work by Pankaj Agarwal has been partially supported by NSF grants IIS-18-14493, CCF-20-07556, and CCF-22-23870, and by the Binational Science Foundation Grant 2022131.
  Work by Esther Ezra has been partially supported by Israel Science Foundation Grants 800/22 and  824/17, and by the Binational Science Foundation Grant 2022131.
  Work by Micha Sharir has been partially supported by Israel Science Foundation Grants 260/18 and 495/23.}}

\author{Pankaj K. Agarwal\thanks{
    Department of Computer Science, Duke University, Durham, NC 27708, USA;
    {\sf pankaj@cs.duke.edu,
    https://orcid.org/0000-0002-9439-181X}}
  \and
    Esther Ezra\thanks{
    School of Computer Science, Bar Ilan University, Ramat Gan, Israel;
    {\sf ezraest@cs.biu.ac.il,
      https://orcid.org/0000-0001-8133-1335}}
  \and
    Micha Sharir\thanks{
    School of Computer Science, Tel Aviv University, Tel Aviv, Israel;
    {\sf michas@tauex.tau.ac.il,
      http://orcid.org/0000-0002-2541-3763}}
}
\date{}

\maketitle

%\fancyfoot[R]{\scriptsize{Copyright \textcopyright\ 2024 by SIAM\\ Unauthorized reproduction of this article is prohibited}}

\begin{abstract}
 Vertical decomposition is a widely used general technique for decomposing the cells of arrangements of 
  semi-algebraic sets in $\reals^d$ into constant-complexity subcells. In this paper, we settle in the affirmative a 
  few long-standing open problems involving the vertical decomposition of substructures of 
  arrangements for $d=3,4$. For example, we obtain sharp bounds on the complexity of the vertical 
  decomposition of  the complement of the union of a set of semi-algebraic regions of constant 
  complexity in $\reals^3$, and of the minimization diagram of a set of trivariate functions.
  These results lead to efficient algorithms for a variety of problems 
  involving vertical decompositions, including algorithms for constructing the decompositions 
  themselves and for constructing $(1/r)$-cuttings of substructures of arrangements.
  They also lead to a data structure for answering point-enclosure 
  queries amid semi-algebraic sets in $\reals^3$ and $\reals^4$.
\end{abstract}

%%\end{titlepage}

%--------------------------------------------------------
\section{Introduction}
\label{sec:intro}

%\smallskip

Let $\S$ be a family of $n$ semi-algebraic sets\footnote{%
  Roughly speaking, a semi-algebraic set in $\reals^d$ is the set of points in $\reals^d$ that 
  satisfy a Boolean formula over a set of polynomial inequalities; the complexity of a semi-algebraic 
  set is the number of polynomials defining the set and their maximum degree.
  See \cite{BPR} for formal definitions of a semi-algebraic set and its complexity.}
of constant complexity in $\reals^d$. 
The \emph{arrangement} of $\S$, denoted by $\A(\S)$, is the decomposition of $\reals^d$ into 
maximal connected relatively open cells of all dimensions, so that all points within a cell 
lie in the relative interior of the same subfamily of sets of $\S$, and on the boundary of 
another such subfamily. Because of their wide range of applications, arrangements 
of semi-algebraic sets have been extensively studied; see~\cite{AS00,BPR}. The combinatorial
complexity of a cell in $\A(\S)$ can be large, and its topology can be complex~\cite{AS00}, 
so a fundamental problem in the area of arrangements, for both combinatorial and algorithmic applications,
is to decompose a cell of $\A(\S)$ into constant-complexity subcells, each homeomorphic to 
a ball of the appropriate dimension. In some applications, we wish to decompose all cells 
of $\A(\S)$ while in others only a subset of cells of $\A(\S)$.

Vertical decomposition is a popular general technique (and perhaps the only general-purpose 
technique) for constructing such a decomposition. Roughly speaking, vertical decomposition 
partitions a cell of $\A(\S)$ by recursing on the dimension. At each dimension, it partitions
a cell into vertical ``pseudo-prisms'' (prisms for short), each of which has a unique pair 
of floor and ceiling facets, but can otherwise have large complexity, until the recursion 
terminates. At the end of recursion, we get a collection of prisms, each of which consists 
of at most $2d$ facets, and is homeomorphic to a ball of the respective dimension. See 
\subsecref{VD} and~\cite{CEGS-91,Koltun-04a,SA} for a formal definition and for more details.
%recurses on the dimension $d$. Let $C$ be a cell of $\A(\S)$.
%For $d=2$, the vertical decomposition of $C$ is obtained by erecting a $y$-vertical segment up and down 
%from each vertex of $C$ and from each point of vertical tangency on the boundary of $C$,
%and extending these segments till they hit another edge of $C$, or else all the way to infinity. This results in a 
%decomposition of $C$ into vertical \emph{pseudo-trapezoids} (trapezoids, for short).
%For $d=3$, we first erect, upward and downward, $z$-vertical \emph{curtains} from each edge of $C$ and
%from the silhouette (the locus of points with $z$-vertical tangency) of each $2$-face of $C$, and extend them
%until they hit $\bd C$ (or else all the way to infinity). The resulting subcells have a unique pair of
%faces as their ``floor'' and ``ceiling,'' but their complexity can still be large. 
%In the second decomposition phase, we project each subcell onto the $xy$-plane, apply planar vertical 
%decomposition to the projection, and lift each resulting subcell (trapezoid) vertically up to $\reals^3$ 
%to the range  between the floor and ceiling of the original subcell. This results in a decomposition of 
%$C$ into vertical \emph{pseudo-prisms} (prisms for short), each bounded by up to six facets.
%This recursive scheme (on the dimension) can be generalized to higher dimensions, but it
%becomes more involved as the dimension grows. In this work, though, we only use the three- and four-dimensional 
%scenarios. See~\cite{CEGS-91,Koltun-04a,SA}.

Vertical decompositions, similar to some other geometric decomposition schemes, provide a mechanism for constructing
geometric \emph{cuttings} of various substructures of arrangements of semi-algebraic sets~\cite{AS00}, 
which in turn leads to an efficient divide-and-conquer mechanism for solving a variety of combinatorial and
algorithmic problems, as well as for constructing data structures for geometric searching problems~\cite{AM94}.
The performance of these algorithms and data structures depends on the
complexity (number of prisms) of the vertical decomposition. For $d=2$, the size of the vertical 
decomposition of a cell $C$ is proportional to the combinatorial complexity of (the undecomposed) $C$, 
but already for $d=3$, the size of the vertical decomposition of $C$ can be $\Omega(n^2)$, even when 
the complexity of $C$ is only $O(n)$. In contrast, the complexity of the vertical decomposition of the entire
arrangement in three and four dimensions is near-cubic and near-quartic, respectively, close to the
worst-case bound for the complexity of the undecomposed arrangement (see below for more details).
A challenging problem is thus to obtain sharp bounds on the complexity of the vertical decomposition 
of (the cells of) various substructures of $\A(\S)$ for $d\ge 3$. Despite extensive work on this 
problem, see, e.g., \cite{AAS-97,AES-99,AS-96,dBGH,CEGS-91,Koltun-04a,SS-97} for a sample of results, 
several basic problems remain open.

In this paper we settle some of these open problems in the affirmative, obtaining sharp bounds on 
the complexity of the vertical decomposition of various substructures of arrangements, and full arrangements,
for $d=3,4$. We also provide efficient algorithms for computing vertical decompositions and geometric 
cuttings for such substructures, as well as efficient solutions for the \emph{point-enclosure}
reporting problem in three- and four-dimensional space; see Section~\ref{subsec:pt-encl}
for a formal definition and the details.
%As a major application of these results, we study a variety of proximity problems involving lines/triangles and points in $\reals^3$.  See below for the list of our main results.

\paragraph{Related work.}
Collins~\cite{Col} (see also~\cite{BPR,SS83}) had proposed a mechanism, known as
\emph{cylindrical algebraic decomposition} (CAD), as a general technique for decomposing 
the cells of $\A(\S)$ into pseudo-prisms, for a collection $\S$ of $n$ regions 
in any dimension $d$. However, the number of cells 
produced by a CAD is $n^{2^{O(d)}}$. The vertical decomposition can be viewed as an optimized 
version of CAD, with much smaller complexity. In the plane, the complexity of the vertical 
decomposition of a single cell (resp., of the full arrangement) is proportional to the complexity
of the undecomposed cell (resp., arrangement), so in particular the latter complexity 
(for the full arrangement) is $O(n^2)$.
Vertical decompositions for $d=2,3$ have been used since the 1980's~\cite{CI84,CEGSW90}. A major
progress has been obtained by Chazelle~\etal~\cite{CEGS-91}, who described the construction of 
vertical decompositions in general, for arrangements of semi-algebraic sets in $\reals^d$, and 
proved a bound of $O^*(n^{2d-3})$ on their complexity, for $d\ge 3$ (where the $O^*(\cdot)$ 
notation hides subpolynomial factors). They also showed that the vertical decomposition of 
$\A(\S)$ can be computed in $O^*(n^{2d-3})$ expected time.
The bound was improved to $O^*(n^{2d-4})$, for $d \ge 4$, by Koltun~\cite{Koltun-04a}.
These bounds are nearly optimal for $d\le 4$, but are strongly suspected to be far from 
optimal for $d\ge 5$. Improving the bound, for $d\ge 5$, is a major 20-years-old 
%\micha{20-years-old? (after Vladlen)}
open problem in this area (which we do not address in this work).

In many applications (e.g., motion planning~\cite{SA} and range searching~\cite{Ag:rs,Ma:rph}), one is 
interested in computing the vertical decomposition of (the cells of) only a substructure of $\A(\S)$.
In this case, the goal is to show that if the substructure under
consideration has asymptotic complexity $o(n^d)$, then so should be the complexity of
its vertical decomposition. This statement is true in the plane, as already mentioned.
% and has been shown to hold for arrangements of triangles in $\reals^3$~\cite{dBGH,Tag}. 
There have been a few results 
on bounding the size of the vertical decomposition of substructures of an arrangement
of $n$ semi-algebraic sets in $\reals^3$:
Agarwal~\etal~\cite{ASS-96} showed that the vertical decomposition of the region lying 
between the lower envelope of one family and the upper envelope of another family 
of monotone surfaces in $\reals^3$ (a so-called \emph{sandwich region})
has $O^*(n^2)$ complexity, and Agarwal \etal~\cite{AES-99} showed that for 
any $k < n$, the vertical decomposition of cells of level at most $k$ (namely, cells that have
at most $k$ surfaces passing below them), again for monotone surfaces,
has complexity $O^*(n^2 (k+1))$; see also~\cite{AAS-97,AS-96}. 
De Berg~\etal~\cite{dBGH} showed that the complexity of the vertical decomposition of the arrangement 
of a set of $n$ triangles in $\reals^3$ is $O^*(n^2) + O(\chi)$, where $\chi$ is the number of vertices 
in the arrangement. However, no such output-sensitive bound is known for the arrangement of general 
surfaces, monotone or not.
The most interesting result in this direction, and the most directly related to this paper, 
is by Schwarzkopf and Sharir~\cite{SS-97} who showed that the complexity of the vertical 
decomposition of a single cell in an arrangement of arbitrary semi-algebraic sets of constant
complexity in $\reals^3$ is $O^*(n^2)$, which is near-optimal because the complexity of a 
single cell can be $\Omega(n^2)$~\cite{SA}.
%(which is also an upper bound on the complexity of an undecomposed cell (see~\cite{HS:cell}). 
Notwithstanding these results, the aforementioned fundamental 
problem, of decomposing various substructures in arrangements, has remained largely open for $d\ge 3$.
For example, even though the complexity of the union of a set of objects in $\reals^3$,
in many interesting cases, such as a set of cylinders or a set of (suitably defined) fat objects, is known to be 
$O^*(n^2)$~\cite{AS-00, AEKS-06,Ezra-11,ES-09}, no subcubic bound was known on the size of the vertical 
decomposition of such a union (or of its complement). In $\reals^4$, the complexity of the lower 
envelope of $n$ trivariate functions (whose graphs are semi-algebraic sets of constant complexity) 
is $O^*(n^3)$ (see, e.g.,~\cite{SA}), but no $o(n^4)$ bound was known on the complexity of the 
corresponding vertical decomposition, except in some rather special cases~\cite{AS-96}.

We note that special-purpose decomposition schemes have been proposed for decomposing cells in 
arrangements of hyperplanes, boxes, or simplices, using triangulations, binary space partitions, 
or variants of vertical decomposition; see, e.g., \cite{AS00,ASS21,AS-90,HS96} and references therein. 
Some of these methods also work for arrangements of semi-algebraic sets using the so-called 
linearization technique~\cite{AM94}, albeit yielding in general much weaker bounds.

Due to lack of results on the complexity of vertical decompositions, efficient algorithms for 
computing substructures of arrangements in three and higher dimensions have also remained largely 
elusive. For instance, although the union of a set of balls in $\reals^3$ can be computed in 
near-quadratic time, using the so-called lifting transform, no algorithm with similar performance
is known for computing the union of a set of cylinders in $\reals^3$. Agarwal~\etal~\cite{AAS-97} 
described a randomized algorithm for constructing the vertices, edges, and $2$-faces of
the \emph{minimization diagram} (the $xyz$-projection of the lower envelope)
of a set of trivariate (constant-complexity semi-algebraic) functions 
in $O^*(n^3)$ expected time, and with $O^*(n^3)$ preprocessing it can also compute, in $O(\log n)$ 
time per query, the function that appears on the lower envelope at a query point $\xi\in\reals^3$. 
But their algorithm cannot compute three-dimensional cells of the minimization diagram, nor does 
it compute the vertical decomposition of the minimization diagram. 
Similarly, their algorithm can compute, in $O^*(n^2+\kappa(n))$ expected time,
the vertices, edges, and $2$-faces of the union of a set $\S$ of semi-algebraic sets in $\reals^3$, 
where $\kappa(m)$, as above, is the maximum complexity of the union of a subset of $\S$ of size $m$. 
But their algorithm neither computes the cells of the complement of the union nor the 
  respective vertical decomposition.
Agarwal and Sharir~\cite{AS-96} described an algorithm for computing the minimization diagram of a 
set of trivariate functions assuming that this diagram satisfies certain monotonicity assumptions.

Vertical decomposition has been extensively used, and is often the only mechanism, for computing
geometric cuttings of arrangements, or substructures of arrangements, of semi-algebraic sets in 
$\reals^d$, for $d\ge 2$ (see \subsecref{cut} below for the definition of geometric cuttings), 
which in turn is the fundamental building block of many geometric divide-and-conquer algorithms 
and data structures~\cite{AAEKS,AMS,AAEZ}, and is also used to tackle many combinatorial
problems that exploit such a Divide-and-Conquer mechanism. 
Recently, the development of the \emph{polynomial partitioning} technique~\cite{Guth,GK15}
has lead to an alternative decomposition technique, for the general setup of semi-algebraic regions
of constant complexity in $\reals^d$, for any $d$, and has been successfully applied to 
design geometric divide-and-conquer algorithms and geometric-searching data 
structures~\cite{AAEKS,MP,AAEZ}, on top of many combinatorial results that use this
technique, which have been handled before it has become (effectively) algorithmic in
structures~\cite{AAEKS,MP,AAEZ}. As a matter of fact, for a set of $n$  regions
in $\reals^d$, the polynomial-partitioning scheme produces a decomposition of $O^*(n^d)$ 
cells,\footnote{%
  These cells have constant complexity when the degree of the partitioning polynomial
  is constant, but their shape can have a much more complicated topology.}
a bound that improves upon the best known bound for vertical decomposition for 
$d\ge 5$.  However, a major drawback of polynomial partitioning is that it does not yield improved
bounds for substructures in arrangements, the sort of problems considered in this paper.

%Considering the applications to nearest-neighbor searching with points and lines, as given later in the paper, there is some work on answering nearest-neighbor queries, with query points, amid lines or triangles in $\reals^3$ and higher dimensions. For instance, the nearest neighbor of a point amid $n$ lines in $\reals^3$ can be computed in $O(\log n)$ using a data structure of $O^*(n^3)$ size~\cite{MS}, or in $O^*(n^{3/4})$ time using a linear-size data structure~\cite{AM94}. Agarwal~\etal~\cite{ARS} presented data structures for answer approximate nearest-neighbor queries with points amid simplices in $\reals^d$: for a storage parameter $s$ and an error parameter $\eps>0$, they construct a data structure of size $O^*(s/\eps^d)$, so that an $\eps$-approximate nearest neighbor of a point amid $n$ $k$-dimensional simplices can be computed in $O^*(n/s^{\frac{1}{k+1}})$ time. See also~\cite{Mah15,AM21,AIK09}. In general, semi-algebraic range-searching data structures~\cite{AMS,MP,AAEZ} (see also~\cite{AAEKS}) along with the parametric-search technique (see e.g.~\cite{AM,AM94}) can be used to answer NN-searching queries amid geometric objects by mapping each input object to a point in some high-dimensional parametric space and performing a few range-emptiness queries with appropriate semi-algebraic ranges

\paragraph{Our contributions.}
The paper contains two sets of main results: (i) sharp bounds on the complexity of 
vertical decompositions of several substructures of arrangements in $\reals^3$ and $\reals^4$, and
(ii) efficient algorithms for constructing these decompositions and related structures.
%and (iii) as a major application domain, data structures for various line-point proximity problems in $\reals^3$.

\smallskip
\noindent\textbf{\textit{Vertical decomposition.}} We make significant progress on bounding the 
size of the vertical decomposition of several substructures of arrangements in $\reals^3$ and $\reals^4$. 
%In particular, we first show that the technique by Schwarzkopf and Sharir~\cite{SS-97} can be extended to bound the size of the vertical decompositions of substructures of arrangements in a fairly general setting in $\reals^3$.  Next, we extend our three-dimensional results to some settings in $\reals^4$.

\smallskip
\textit{Union of semi-algebraic sets} (Section~\ref{sec:union}).
%\textit{Monotone substructures of arrangements in 3D.} 
Let $\S$ be a family of $n$ semi-algebraic sets of constant complexity in $\reals^3$. Let $\U(\S)$ denote their union
and $\C(\S)$ the (closure) of the complement of their union. We also refer to $\C(\S)$ as the \emph{free space} of
$\S$, borrowing a notation from algorithmic motion planning. 
For a parameter $m \le n$, let $\psi(m)$ denote the maximum combinatorial complexity of the (undecomposed)
union of a subset of $\S$ of size at most $m$.
(as already mentioned, in many natural applications, $\psi(m)$ is $O^*(n^2)$).
By extending the technique of Schwarzkopf and Sharir~\cite{SS-97}, in a rather non-trivial manner,
we prove that the size of the vertical decompositions of $\C(\S)$ is $O^*(n^2 + \psi(n))$.
In particular, this result implies an $O^*(n^2)$ bound on the complexity of the 
vertical decomposition of $\C(\S)$ whenever the complexity of $\C(\S)$ is $O^*(n^2)$
(e.g.\ $S$ being a set of $n$ cylinders in $\reals^3$ is $O^*(n^2)$), 
considerably improving the previously known nearly-cubic bound.

\smallskip
\textit{Arrangements in 3D} (\subsecref{output_sensitive}).
Let $\S$ be a collection of $n$ semi-algebraic sets of constant complexity in $\reals^3$, and let $\chi$ denote 
the number of vertices in $\A(\S)$. By refining the analysis in the preceding portion of Section~\ref{sec:union}, 
we show that the complexity of the vertical decomposition of the entire arrangement $\A(\S)$ is $O^*(n^2 + \chi)$.

\smallskip
\textit{Lower/upper envelopes in 4D} (Section~\ref{subsec:lower_env}).
Let $\F$ be a collection of $n$ trivariate functions whose graphs are semi-algebraic sets of constant 
complexity, and let $\A(\F)$ denote the arrangement (in $\reals^4$) of their graphs. 
The \emph{lower envelope} $\EE_\F$ of $\F$ is defined as $\EE_\F(x) = \min_{F\in\F} F(x)$, 
for $x\in\reals^3$. The \emph{minimization diagram} of $\F$, denoted by $\M_\F$, is the projection of 
the graph of $\EE_\F$ onto the hyperplane $x_4=0$, i.e., a subdivision of $\reals^3$ (namely, 
of that hyperplane) into maximal connected cells so that the same function appears on the lower 
envelope for all points in a cell. In a fully symmetric manner,
we can define the upper envelope and the maximization diagram of $\F$; see~\cite{SA} 
for more details. We show that the complexity of the vertical decomposition $\VeD{\M}_\F$ 
of $\M_\F$ is $O^*(n^3)$. By extending each three-dimensional pseudo-prism $\tau$ of some
cell of $\VeD{\M}_\F$ to a semi-infinite prism $\hat\tau \subset \reals^4$, where
\[ 
\hat\tau = \{ (\xi,z) \mid \xi\in \tau, z \in (-\infty,\EE_F(\xi)] \}, 
\]
we obtain the vertical decomposition, of size $O^*(n^3)$, of the cell of $\A(\F)$ lying below 
(the graph of) $E_\F$. The same bounds hold for the complexity of the maximization diagram of 
$\F$ and of the cell of $\A(\F)$ lying above the upper envelope.

\smallskip
\textit{Arrangements in 4D} (\subsecref{4darr}).
Let $\S$ be a collection of $n$ semi-algebraic sets of constant complexity in $\reals^4$.
By extending the analysis in Section~\ref{subsec:lower_env},
we show that the complexity of the vertical decomposition of the entire arrangement $\A(\S)$ 
is $O^*(n^4)$. A similar result was obtained by Koltun~\cite{Koltun-04a}, as already mentioned,
but our proof is significantly simpler and an easy extension of the 3D result.

\smallskip
\noindent\textbf{\textit{Algorithms}} (\secref{pt-encl}).
We also present a few algorithmic results, some of which are 
immediate consequences of our combinatorial results.
As is common in computational geometry, we use the \emph{real} RAM model of computation, where we can compute exactly with arbitrary algebraic numbers and each arithmetic operation is executed in constant time.
  Whenever big-O notation and the $O^*(\cdot)$-notation are used, the implicit constant may depend on the description complexity of the semi-algebraic sets in the input.

\smallskip

\textit{Computing vertical decompositions} (\subsecref{algo}).
We present algorithms for constructing vertical decompositions of the kinds mentioned above.
That is, the algorithms construct the set of pseudo-prisms in the vertical decomposition and 
their adjacency graph (connecting pairs of prisms with a common boundary), using the lazy 
randomized incremental approach~\cite{BDS} in time comparable with their respective complexity 
bounds. \subsecref{algo} describes the construction for the complement of the union of semi-algebraic sets 
in $\reals^3$, as well as for the lower envelopes of trivariate functions (whose graphs are semi-algebraic sets of
constant complexity), as an illustration; the same approach extends to sparse 3D arrangements
(see \subsecref{output_sensitive}).

\textit{Geometric cuttings} (\subsecref{cut}).
Let $\S$ be a collection of $n$ semi-algebraic sets of constant complexity in $\reals^d$.
Let $\Pi$ be a substructure of $\A(\S)$, defined by a collection of cells of $\A(\S)$ that 
satisfy certain properties (e.g., lying in the complement of the union or lying below the lower envelope).
For a parameter $r>1$, a \emph{$(1/r)$-cutting} of $\Pi$ (with respect to $\S$) is a set 
$\Xi$ of pseudo-prisms with pairwise-disjoint relative interiors that cover $\Pi$, such that the
relative interior of each pseudo-prism $\tau \in \Xi$ is crossed by (intersected by but not 
contained in) at most $n/r$ sets of $\S$. 
%The subset of $\S$ crossed by $\tau$ is called the \emph{conflict list} of $\tau$.
Our combinatorial results lead to the construction of small-size $(1/r)$-cuttings of $\Pi$. Their
size is dictated by our new bounds for the complexity of the vertical decomposition of $\Pi$,
as made more precise in Section~\ref{sec:pt-encl}.

Specifically, for the case of the complement of the union of semi-algebraic sets in $\reals^3$, 
a $(1/r)$-cutting of size $O^*(r^2 + \psi (r))$ can be computed in $O^*(n(r+\psi(r)/r))$ expected
time, where $\psi(m)$, as above, is the maximum complexity of the (undecomposed) union of a set of at most $m$ 
input sets (of the specific family under consideration). For the case of the region below the 
lower envelope of trivariate functions in $\reals^4$, the bound is $O^*(r^3)$ and the 
expected run time is $O^*(nr^2)$. For the case of an entire three-dimensional arrangement of 
complexity $\chi$, we can compute a $(1/r)$-cutting of $\A(\S)$  of size
$O^*(r^2 + r^3\chi/n^3)$ in $O^*(nr+r^2\chi/n^2)$ expected time. 
In all these results, the respective time bounds also include the cost of computing the
\emph{conflict lists} of each prism of the cutting, namely, the list of all regions that
cross (intersect but do not contain) the prism.

\smallskip
\textit{Point-enclosure queries} (\subsecref{pt-encl}).
Let $\S$ be a family of $n$ semi-algebraic sets in $\reals^3$ of constant complexity.
We obtain a data structure of size and preprocessing cost $O^*(n^2 + \psi(n))$ that, for a 
query point $q\in\reals^3$, detects in $O(\log n)$ time whether $q\in\C(\S)$ and, if not,
returns all $k$ sets of $\S$ containing $q$ in $O(\log{n} + k)$ time. Similarly, for a 
given family $\F$ of $n$ semi-algebraic trivariate functions, we can construct a data structure of
size and preprocessing cost $O^*(n^3)$ that, for a query point $q \in \reals^4$, can report, 
in $O(\log{n} + k)$ time, 
all $k$ functions of $\F$ whose graphs lie below $q$.

  In a companion paper (a preliminary version of which can be found in the earlier conference version of the paper~\cite{AES24}), we present data structures and algorithms for answering nearest neighbor queries, where the input sites as well as query objects are points, lines, or triangles in $\reals^3$.
  These data structures crucially rely on the combinatorial and algorithmic results obtained in this paper.

%--------------------------------------------------------
\section{Vertical decomposition of the complement of the union in 3D} 
\label{sec:union}

Let $\S$ be a collection of $n$ semi-algebraic sets of constant complexity in $\reals^3$.
For the sake of simplicity, we assume that the regions of $\S$ are in general position, in the sense described
in~\cite[Section~7.1]{SA}. This assumption implies, among similar properties, that no two regions or their
$xy$-projections are tangent to each other, that the boundaries of any three of them intersect in $O(1)$ points, and
that the boundaries of no four intersect at a common point. Our results hold even without the 
general-position assumption, as in~\cite{SA}, but the presentation becomes much simpler under this assumption.
By subdividing each set into subsets, if needed,
we may assume that each set $\sigma$ of $\S$ is $xy$-monotone (i.e., any line parallel to the $z$-axis 
intersects $\sigma$ in a connected, possibly empty, interval), that each two-dimensional face of $\sigma$ 
is $xy$-monotone (i.e., any line parallel to the $z$-axis intersects a face at most once) and remains 
openly disjoint from the silhouette and the locus of any $1$-dimensional singularity, and
that each edge of $\sigma$ is $x$-monotone and does not contain any $0$-dimensional singularity.
To subdivide a region $\sigma$ that does not possess these properties, we construct the vertical
decomposition just of $\sigma$ itself. This also subdivides the boundary of $\sigma$ into faces
and edges that have these properties. Note that if an original region $\sigma$ is subdivided
in this manner, its pieces will have in general $z$-vertical faces and edges, and $y$-parallel
edges which are not monotone as above. In what follows we ignore such faces and edges, as their
contribution to the complexity of the vertical decomposition is easy to analyze.
Furthermore, vertical faces of two subregions may overlap, violating the general-position assumption, but we
will be ignoring the vertical faces and the overlap of the boundary of two subregions along their 
vertical faces---this will not cause any difficulty, nor increase the asymptotic bounds in our analysis.

Next, for each region $\sigma$, we partition $\bd\sigma$ into two
parts by splitting it at the points of vertical tangencies (comprising the \emph{silhouette} 
of $\sigma$),
%and the points of singularity, 
and we refer to these parts as the \emph{top} and 
\emph{bottom} (boundary) surfaces of $\sigma$, and denote them as $\bd\sigma^+$ and $\bd\sigma^-$, respectively. 
%\esther{Check footnote.} \footnote{We note that $\bd\sigma$ may have more than two portions if we split at points of singularity, but the number of these portions is at most some constant that depends on the description complexity of $\sigma$. This, however, will not violate our analysis, and we therefore assume without loss of generality that there are only two portions.}

Recall from Section~\ref{sec:intro} that 
the \emph{arrangement} of $\S$, denoted by $\A(\S)$, is the decomposition of $\reals^d$ into 
maximal connected relatively open cells of all dimensions, so that all points within a cell 
lie in the interiors of the same subset of regions of $\S$, and in the relative interiors of the same collection
of faces of region boundaries.
Let $\mu := \mu(\S)$ be the number of vertices in $\A(\S)$. It is well known that the total 
complexity of $\A(\S)$, i.e., the total number of faces of all dimensions in $\A(\S)$, is 
$O(n^2+\mu)$~\cite{SA}. (Informally, any other feature of $\A(\S)$ can either be charged to
a vertex or is defined by at most two regions of $\S$.)

In this section we bound the complexity of the vertical decomposition of the complement $\C(\S)$ of the union 
of $\S$, which consists of some specific subset of cells of $\A(\S)$. Note that $\C(\S)$ is a \emph{monotone} structure, 
in the sense that, for any pair of subsets $\R'\subseteq\R\subseteq\S$, we have 
$\C(\R')\supseteq \C(\R)$.

For a subset $\R\subseteq\S$, let $\psi(\R)$ denote the total complexity of the cells 
in $\C(\R)$, i.e., the overall number of faces, of all dimensions, that bound these cells, and let 
\[
\psi(m) = \max_{\R \subseteq \S,\; |\R|\le m} \psi(\R).
\]
For $k \ge 0$, we say that a point $x$ has \emph{depth} $k$ with respect to a subset $\R$ of $\S$ if $x$ is contained in
the interiors of exactly $k$ regions of $\R$. By definition, all points in a cell of $\A(\R)$ have the same depth, 
and the cells of (the closure of) $\C(\S)$ are exactly those that have depth $0$. 
Let $\subklevel{k}(\S)$ denote the collection of cells of $\A(\S)$ of depth at most 
$k$, let $\psi_k(\R)$ denote the total complexity of the cells of $\subklevel{k}(\R)$, for a subset $\R\subseteq\S$, 
and put $\psi_k(m) = \max_{|\R|\le m} \psi_k(\R)$; by definition, $\psi_0(\R) = \psi(\R)$.
Following the probabilistic argument of Clarkson and Shor~\cite{CS89}, it can be shown that 
\begin{equation}
	\label{eq:CS}
	\psi_k(\S) = O(k^3 \EE[\psi(\R)]),
\end{equation}
where $\R$ is a random subset of $\S$, obtained by choosing 
each set of $\S$ independently with probability $1/k$, and $\EE[\cdot]$ is the expectation over the random choice\footnote{%
  The exponent $3$ comes from the fact that vertices are defined by three surfaces. The number
  of edges and 2-faces is bounded either by charging them to incident vertices, or, if there
  are no such vertices, by a similar probabilistic argument, using a smaller exponent, obtaining
  a bound that is subsumed by the above bound.}
of $\R$. The analysis in \cite{CS89} implies that $\psi_k(n)=O(k^3\psi(n/k))$.

Our goal is to obtain a sharp bound on the size of the vertical decomposition of $\C(\S)$ 
as a function of $n$ and $\psi(n)$. Our main result is the following theorem.

%------------------------------------------
\begin{theorem}
  \label{thm:main}
  Let $\S$ be a collection of $n$ constant-complexity semi-algebraic regions in $\reals^3$, and let $\psi(m)$ be 
  the maximum overall complexity of the cells in $\C(\R)$, over the
	subsets $\R\subseteq \S$ of size at most $m$. Then the total size of the vertical decomposition 
	of the cells in $\C(\S)$ is $O^*(n^2+\psi(n))$.
\end{theorem}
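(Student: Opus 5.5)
The plan is to follow the strategy of Schwarzkopf and Sharir~\cite{SS-97} for a single cell: reduce the count of prisms to a count of special ``vertical visibility'' configurations, and bound those by a recurrence driven by random sampling together with the Clarkson--Shor bound \eqref{CS}.

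\emph{Step 1: reduction to vertical visibilities.} The first decomposition stage erects $z$-vertical curtains from the edges of $\C(\S)$ and from the silhouettes of its $2$-faces. The second stage decomposes each floor/ceiling cell in the $xy$-projection. The number of final prisms is then $O(\psi(\S))$ plus the number of vertices of the planar vertical decompositions. Each vertex of the latter kind that is not charged to a feature of $\C(\S)$ corresponds to a \emph{vertical visibility}. This is a pair $(e,e')$ of edges (or silhouette arcs) of $\C(\S)$, or of such an edge and a curtain edge, together with a $z$-vertical segment, inside the free space, connecting a point of $e$ to a point of $e'$ whose $xy$-projections cross. It may also be a $y$-vertical segment inside a curtain, lying in a floor/ceiling pair. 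In every case the configuration is defined by a constant number $b\le 6$ of regions of $\S$: the regions defining the two edges and the floor and ceiling surfaces. So it suffices to bound the number $V(\S)$ of such visibility configurations whose defining segment lies in the closure of $\C(\S)$.

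\emph{Step 2: a recurrence.} I will generalize the counting to $V_k(\S)$, the number of configurations whose segment meets at most $k$ regions of $\S$ in their interiors, i.e., at depth at most $k$. A Clarkson--Shor argument gives $V_k(\S)=O(k^b\,\EE[V(\R)])$ for a $1/k$-sample $\R$, but that only converts in the wrong direction. Instead, as in~\cite{SS-97}, I will charge each depth-$0$ visibility by sliding the vertical segment along one of its edges, say $e$, in a fixed direction. The segment eventually reaches one of three things: (a) an endpoint of $e$, which is a vertex of $\C(\S)$, giving an $O^*(\psi(n))$ total; (b) a point where the segment ceases to be a visibility, because $e$ and $e'$ stop crossing in projection or a curtain changes, which contributes only $O(n^2)$ events; or (c) a point where the segment first meets the interior of a new region, so it becomes a depth-$1$ configuration of a slightly different type with one extra defining region. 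Case (c) must be charged so that each configuration at depth $\le k$ receives $O(1)$ charges. This gives an inequality of the form $V(\S)\le O(n^2+\psi(n)) + \frac{c}{k}V_k(\S)$, after averaging over a random threshold as in the standard ``level-shifting'' trick. Substituting the Clarkson--Shor bound $V_k(\S)=O(k^{b}\,V(n/k))$ with an appropriately reduced effective $b$ (one of the defining objects is ``pinned'') yields $V(n)\le A(n^2+\psi(n))+ B\,k^{b'}V(n/k)$. Choosing $k$ a large constant and unrolling gives $V(n)=O(n^{2+\eps}+\psi(n)n^{\eps})$, using that $\psi(n/k)$ and $(n/k)^2$ shrink fast enough relative to the factor $k^{b'}$. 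This works because both $n^2$ and $\psi$ (via \eqref{CS}, $\psi_k(n)=O(k^3\psi(n/k))$) scale like at most cubic growth with $k$.

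\emph{Main obstacle.} The delicate step is case (c). When the sliding segment is blocked by a new region $\sigma$, the resulting configuration no longer lies in $\C(\S)$, so it is not of the same kind. The exponent balance in the recurrence requires that it be charged to a structure of depth $\le k$ with only constant multiplicity, and that the number of types of such configurations stays bounded. For general semi-algebraic regions, unlike a single cell where connectivity helps, the free space has many components. The blocking region may also enter through a face rather than cross the segment transversally, creating several subcases: tangency of $\sigma$'s boundary with the segment, $\sigma$ appearing between the floor and ceiling, and so on. I would first try to handle these by rotating the charging: when blocked by $\sigma$, I would switch to sliding along the intersection curve $\bd\sigma\cap$ curtain. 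The goal is to show that every charged object is either a vertex of $\subklevel{k}(\S)$, bounded by $O(k^3\psi(n/k))$, or a pair-defined feature, bounded by $O(k^2 n^2\cdot)$-type counts. With these charges the recurrence closes with exponent $2+\eps$.
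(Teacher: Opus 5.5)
Your recurrence does not close as set up, and the device meant to rescue it does not work. A single charging pass of the kind you describe gains at most one factor of $k$. That pass slides $g$ along $e$ and pays for each depth-$0$ visibility with $\Omega(k)$ configurations of level $\le k$. As in the paper's first stage, even this one factor is only obtained up to a $\beta(n)$ loss, coming from Davenport--Schinzel bounds on lower envelopes inside the curtains $V_e$; that loss is also why $k$ cannot be a constant. A visibility $(e,e',g)$ has four defining regions, so Clarkson--Shor gives $\visib_{k}(\S)=O(k^4\,\EE[\visib_0(\R)])$. The coefficient of $V(n/k)$ in your recurrence is therefore of order $k^3$. Since $k^3(n/k)^{2+\eps}\gg n^{2+\eps}$, this yields only a near-cubic bound, and with your count $b\le 6$ it is worse. ``Pinning'' a defining region does not lower the exponent. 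Condition on a region $\sigma$ being sampled and sum over its $n$ choices. This gives back exactly the factor of $k$ you hoped to save, because on average $\sigma$ participates in only an $O(k/n)$ fraction of the level-$0$ configurations of an $(n/k)$-sample. Clarkson--Shor genuinely requires all four defining regions to be sampled.

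What is missing is a second, independent gain of a factor of $k$, which is the core of the Schwarzkopf--Sharir machinery as extended in the paper. The first stage produces a set $\CSet$ of level-$\le k$ crossings with $\visib(\S)\le\frac{c\beta(n)}{k}|\CSet|+O\bigl(k\beta(k)(\psi(\S)+kn^2\beta(n))\bigr)$. The paper then re-charges $\CSet$ from the viewpoint of the upper edge $e'$ and splits it into uncovered and covered crossings.

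\emph{Uncovered crossings.} Such a crossing has its bottom endpoint on the upper envelope of the arcs of a fixed V-depth in $V^{e'}$. The reason is that any top boundary crossing $\bd\sigma_1^+$ must be preceded by its bottom counterpart, which would make the crossing covered. So there are $O(t\beta^2(t))$ of them, against $\Omega(tk)$ vertices of V-level $\le 3k$, which produces the term $\frac{\beta^2(n)}{k}\visib_{4k}(\S)$.

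\emph{Covered crossings.} These are bounded globally. After pruning by (P1)--(P2) and sampling, one draws pairwise-disjoint paths on $\bd\C(\R)$, along $\bd\zeta^-$ and along an $e$-incident top surface. Planarity, plus the digon argument, gives $O(\psi(\R))$ retained crossings, hence $O(k^6\EE[\psi(\R)])$ overall.

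Only with both gains does the recurrence become $\visib_0(n)\le c_1k^3\beta^2(n)(n^2+\psi(n))+c_2k^2\beta^3(n)\visib_0(n/k)$. Its leading coefficient $k^2$ is beaten by $(n/k)^{2+\eps}$ once $k^\eps\ge 4c_2\beta^3(n)$. Your idea of switching the charging onto $\bd\sigma\cap{}$curtain resembles the path construction. However, you use it only to handle case (c), not to extract the second factor of $k$, so the argument as written does not reach $O^*(n^2+\psi(n))$.
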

%------------------------------------------

%\paragraph{Remark.}
Except for a few very favorable cases, we typically have $\psi(n) = \Omega(n^2)$, and then
the term $O^*(n^2)$ in the above bound is redundant. The favorable cases include halfspaces, axis-aligned
unit cubes, and, in general, sets $\S$ of regions for which the complexity of $\C(\S)$ is subquadratic (see~\subsecref{output_sensitive}).
It is known that $\psi(m) = O^*(m^2)$ in many cases,
e.g., when $\S$ is a set of spheres, cylinders, arbitrarily oriented cubes, 
homothets of a convex polyhedron of constant complexity, 
or fat convex objects (with a suitable definition of fatness)~\cite{APS}. 
By Theorem~\ref{thm:main}, the size of the vertical decomposition of $\C(\S)$ is $O^*(n^2)$ in these cases.

The proof of Theorem~\ref{thm:main} proceeds through Subsections~\ref{subsec:VD}--\ref{subsec:SS}.
We begin by describing the construction of the vertical decomposition $\VeD{C}$ of a single cell $C$ in 
$\A(\S)$ (Section~\ref{subsec:VD}). We then introduce the notions of vertical visibilities and 
vertical edge crossings (Section~\ref{subsec:visib}), which lie at the heart of the
analysis of bounding the size of the 
vertical decomposition. Finally, we refine and extend the technique of Schwarzkopf and Sharir~\cite{SS-97},
the main technique used in our analysis (Section~\ref{subsec:SS}), to bound the size of the 
vertical decomposition of $\C(\S)$.
We note that this extension is far from being trivial. First, as in our setting, $\S$ consists of full objects rather than monotone surfaces, as considered in~\cite{SS-97}. This leads to new technical difficulties that were not considered in~\cite{SS-97}, and, in particular, implies that one cannot apply the analysis in~\cite{SS-97} as a black box. Moreover, we present two non-trivial extensions to our analysis, the first is with regard to 
%in Section~\ref{subsec:output_sensitive} we extend our analysis to obtain an
output-sensitive vertical decomposition of the entire arrangement (Section~\ref{subsec:output_sensitive}), and in the second we obtain nearly-tight bounds on the complexity of the vertical decomposition of the minimization diagram of trivariate functions (Section~\ref{sec:env}), which is one of the main results in this paper.

%--------------------------------------------
\subsection{Vertical decomposition of a single cell}
\label{subsec:VD}

The vertical decomposition $\VeD{C}$ of a cell $C$ of $\A(\S)$ is constructed in two stages, as follows.
In the first stage, for each edge $e$ of $C$, we erect within $C$ a vertical \emph{wall} (parallel to 
the $z$-axis) from $e$. This wall consists of maximal vertical 
segments contained in (the closure of) $C$ and passing through the points of $e$.
Except for silhouette edges and loci of singularity, 
these segments touch $e$ at one of their endpoints, which is either the top endpoint,
for all points on $e$, or the bottom endpoint. Note that parts of a wall 
may extend to infinity when $C$ is unbounded. These walls partition $C$ into subcells, each of which, denoted $\Delta$,
has a unique pair of faces as its \emph{floor} and \emph{ceiling}---one or both of these faces may be 
undefined when the subcell is unbounded---and all other faces of the subcell lie on the vertical walls that bound it. 
The $xy$-projections of the floor and ceiling faces are identical, and we denote these identical 
$xy$-projections by $\Delta^\downarrow$.
Note that the complexity of $\Delta^\downarrow$ can be arbitrarily large. The subcell $\Delta$ itself is of the form 
\[ 
\Delta = \{ (x,y, z) \mid (x,y) \in \Delta^\downarrow,\, z\in [f^-(x,y),f^+(x,y)] \} ,
\]
where $f^-, f^+$ are bivariate, generally partially defined, semi-algebraic functions of constant 
complexity, whose graphs are portions of the (top or bottom)
boundaries of two of the sets of $\S$, and contain the floor and ceiling faces of $\Delta$, respectively.

The second decomposition stage applies a two-dimensional vertical decomposition to $\Delta^\downarrow$, 
i.e., it erects a $y$-parallel segment from each vertex of $\Delta^\downarrow$ (within $\Delta^\downarrow$), including
singular and locally $x$-extremal points of $\bd\Delta^\downarrow$. 
These segments partition $\Delta^\downarrow$ into pseudo-trapezoidal subcells, called trapezoids for short.
The construction then lifts each resulting trapezoid $\tau^\downarrow$ vertically up in $\reals^3$ in the $z$-direction,
to form the so-called \emph{pseudo-prism} 
\[
\tau = \{ (x,y, z) \mid (x,y) \in \tau^\downarrow,\; z\in [f^-(x,y),f^+(x,y)] \}.
\]
Repeating this step for all subcells created in the first stage, we obtain
a decomposition of $C$ into pairwise disjoint vertical pseudo-prisms 
(prisms for short), each being a semi-algebraic set of constant complexity and 
bounded by up to six facets.  See~\cite{CEGS-91,SA} for further details, and see Figure~\ref{fig:prism}
for an illustration.
%\micha{Add a fig: both 2D layout and 3D prism.}
%\esther{Added, but it needs a fixing, Pankaj made a fig a while ago and he is looking for it.}

%\esther{I fixed Figure~\ref{fig:prism}, pls check.}
%\micha{Fig is nice, but (1) the names (a), (b), (c), (d) are not shown in the figure, and (2) the prism in the last subfig does not seem to bear any resemblance to anything in the other subfigs(?!)}
%\esther{I fixed the labeling. Last prism is supposed to come from a prims in (c), but I don't know how to draw it any better.}

\begin{figure}[htbp]
  \begin{center}
    % \begin{tabular}{cc}
    \includegraphics[scale=0.4]{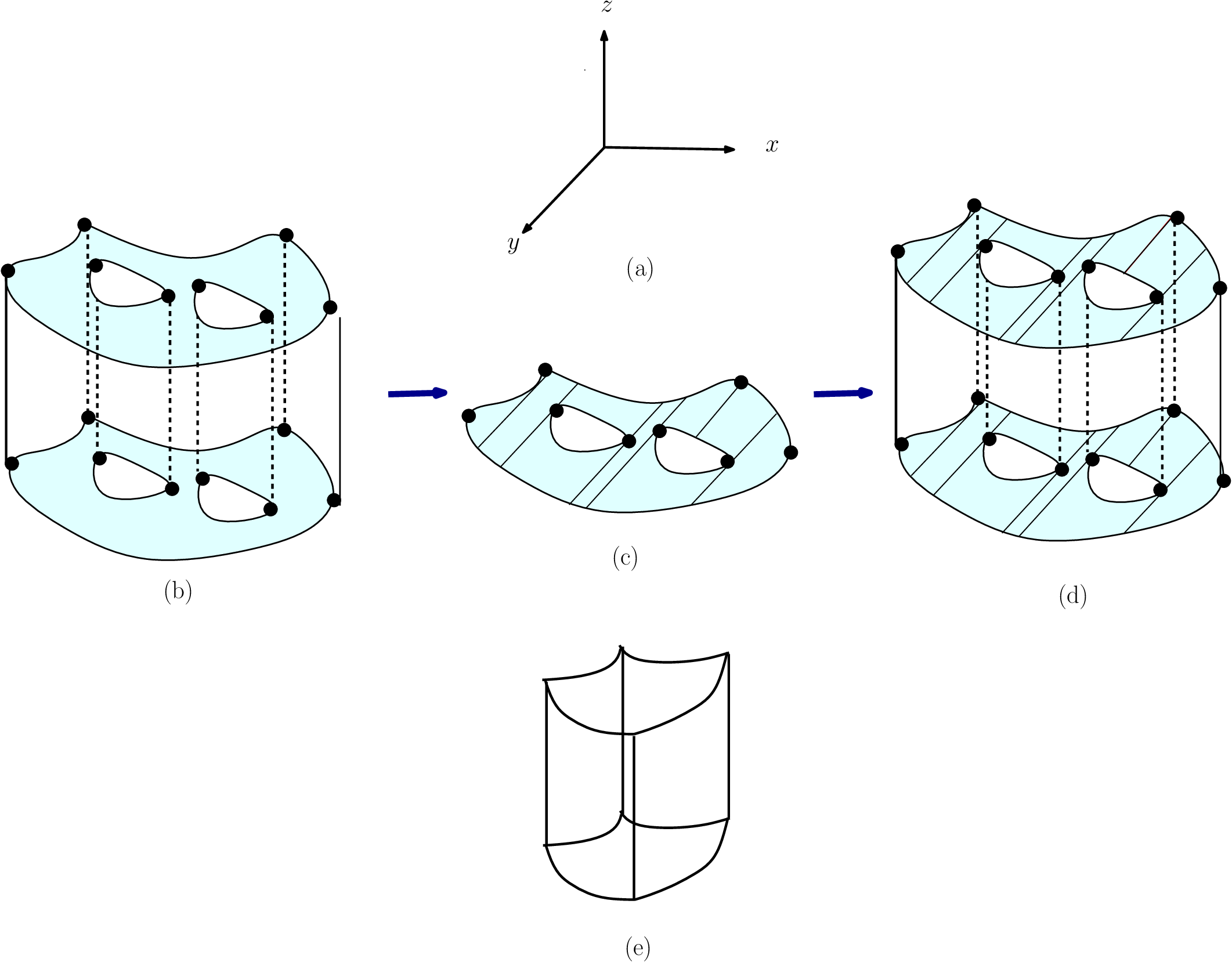}  %&\hspace*{1in}%\\
    %    \end{tabular}
    \caption{(a) The coordinate system of the underlying vertical decomposition.
      (b) A subcell of the first decomposition stage.
      (c) Its $xy$-projection and its 2D vertical decomposition.
      (d) The resulting vertical decomposition of the subcell.
      (e) An example of a final cell. %pseudo-prism of the decomposition.
    }
  \label{fig:prism}
  \end{center}
\end{figure}

The vertical decomposition of a set $\K$ of cells of $\A(\S)$, denoted by $\VeD{\K}$, is the union of
the vertical decompositions of each cell in $\K$. 
Let $|\VeD{C}|$ denote the size of $\VeD{C}$ for a cell $C$, i.e., the number of prisms in $\VeD{C}$, and let 
$|\VeD{\K}| = \sum_{C\in\K} |\VeD{C}|$ denote what we call the size of $\VeD{\K}$.
For any subset $\R \subseteq \S$, let $\VDsubstruc (\R)$ denote the vertical decomposition of 
$\C(\R)$, and let $\VeD{\psi} (\R)$ denote its size. Set 
$\VeD{\psi}(m) = \max_{\R \subseteq \S,\; |\R|\le m}\VeD{\psi}(R)$.

%---------------------------------------------------
\subsection{Vertical visibilities and crossings}
\label{subsec:visib}

For a cell $C$, a triple $\chi = (e, e', g)$ is called a \emph{vertically visible configuration}, or 
\emph{vertical visibility} for brevity, if $e, e'$ are edges of $C$ and $g \subset C$ is a vertical segment 
whose bottom (resp., top) endpoint lies on $e$ (resp., $e'$). We note that, in general, $e$ (resp., $e'$)
lies in the intersection curve $\gamma = \bd\sigma_1 \cap \bd\sigma_2$ 
(resp., $\gamma' = \bd\sigma'_1 \cap \bd\sigma'_2$) of the boundaries of a pair $\sigma_1$, $\sigma_2$
(resp., $\sigma'_1$, $\sigma'_2$) of regions in $\S$, that 
% these two intersection curves and 
the segment $g$ uniquely determines the vertical visibility $\chi$ (the corresponding regions
$\sigma_1$, $\sigma_2$, $\sigma'_1$, $\sigma'_2$ are uniquely determined, assuming general position, as we do),
in the sense that we do not need to know the endpoints of the edges $e$ and $e'$ to define $\chi$. 
This will be important because this vertical visibility will also show up in any subset $\R\subseteq\S$
that includes the four regions $\sigma_1, \sigma_2, \sigma'_1, \sigma'_2$ defining $\chi$, even though
the endpoints of $e$ and $e'$ might vary with $\R$. 
This discussion excludes cases where $e$ and/or $e'$ are silhouette edges or edges of singularity.
Handling visibilities that involve such edges is much simpler, since each such edge is defined
by only one region of $\S$, rather than two. In what follows, we will either ignore such edges
or briefly comment about handling them.

Since we consider cells in the complement of the union, the boundaries $\bd\sigma_1$ and $\bd\sigma_2$ that form $e$ are necessarily top boundaries, i.e., $\bd\sigma_1^+$, $\bd\sigma_2^+$, and the
boundaries $\bd\sigma'_1$ and $\bd\sigma'_2$ that form $e'$ are necessarily bottom boundaries, i.e., $\bd\sigma'^-_1$,
$\bd\sigma'^-_2$ (recall that, after our initial decomposition,  each set $\sigma$ of $\S$ is $xy$-monotone).

Let $\visib(C)$ denote the number of vertically visible configurations in $C$, and let 
$\visib(\S) = \sum_{C\in\C(\S)} \visib(C)$ denote the number of vertically visible 
configurations over all cells $C$ of $\C(\S)$. The same definitions apply, with obvious modifications, 
to any subset $\R\subseteq\S$. It is well known~\cite{CEGS-91}, and easy to show, that, for any cell $C$, 
$|\VeD{C}|$ is proportional to the number of vertical visibilities in $C$ plus the complexity of $C$. 
In particular, replacing $\S$ by a subset $\R$, we get:
%-------------------
\begin{lemma}
	\label{lem:VD-visib}
	For any subset $\R\subseteq \S$ and for any subset $\K$ of cells of $\A(\R)$,
	\[ |\VeD{\K}| = \sum_{C\in\K} O(\visib (C)  + |C|) . \]
\end{lemma}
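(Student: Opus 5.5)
Since the bound is summed over the cells of $\K$, it suffices to fix one cell $C\in\K$ and show $|\VeD{C}| = O(\visib(C)+|C|)$. The plan is to count prisms by charging each one to a feature of $C$ or to a vertical visibility, going through the two stages of the construction in order.

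\emph{First stage.} Walls are erected from every edge of $C$, including silhouette edges and edges of singularity. A subcell $\Delta$ has a floor and a ceiling face, and its projection $\Delta^\downarrow$ is bounded by arcs of three kinds. An arc may be the projection of an edge of $C$ on the floor or the ceiling. It may be the projection of a wall intersected with the floor or ceiling face, meaning the shadow of an edge of $C$ lying above the floor or below the ceiling. Or it may be a portion of the boundary of the floor or ceiling face itself. Vertices of $\Delta^\downarrow$ arise from vertices of $C$ and from points where the vertical projection of one edge of $C$ meets another edge of $C$. Such a meeting point is an endpoint of a vertical segment $g\subset C$ joining the two edges. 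By the general-position assumption, each maximal connected family of such segments between fixed edges $e,e'$ changes combinatorially only $O(1)$ times. The plan is therefore to interpret $\visib(C)$ as counting these triples, with $g$ taken up to $O(1)$ combinatorially distinct events per pair of edges. With that reading, each such vertex is charged to a vertical visibility $(e,e',g)$, with $O(1)$ charge per visibility. Vertices of $C$ are charged to $|C|$. Locally $x$-extremal and singular points of arcs number $O(1)$ per arc, because each arc is a constant-complexity curve piece. The number of arcs is linear in the number of vertices plus the number of faces, so the total complexity of all the $\Delta^\downarrow$ is $O(\visib(C)+|C|)$.

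\emph{Second stage.} The planar vertical decomposition of $\Delta^\downarrow$ has size linear in the complexity of $\Delta^\downarrow$, counting its vertices, arcs, and $x$-extremal points. Each of these carries only $O(1)$ prisms, since each prism is a lifted trapezoid. Summing over all subcells $\Delta$ gives $|\VeD{C}| = O(\visib(C)+|C|)$, and summing over $C\in\K$ gives the lemma. Unbounded subcells and walls reaching infinity only add $O(|C|)$ terms.

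\emph{Main obstacle.} The hard part is the charging in the first stage. We must make sure every vertex of a projected subcell $\Delta^\downarrow$ is either a vertex or feature of $C$ or the shadow of a genuine vertical visibility inside $C$. In particular, a wall can be cut off by a face rather than an edge. In that case the wall's top boundary lies on a face, and the resulting arc endpoints occur exactly where the cutting changes from one face to another, which happens along an edge and so is again a visibility. The argument also has to handle degenerate walls from silhouette and singular edges, each of which is defined by a single region, as the discussion preceding the lemma anticipates.
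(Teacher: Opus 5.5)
Your proof is correct. It is the standard charging argument that the paper does not spell out but cites as well known from Chazelle et al.: vertices of each projected subcell $\Delta^\downarrow$ are charged to vertices of $C$ or to vertical visibilities, and the planar decomposition is linear in the complexity of $\Delta^\downarrow$.

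One simplification. In $\reals^3$ under general position, the $xy$-projections of two edges cross in only $O(1)$ isolated points, so there are no continuous families of connecting segments and $\visib(C)$ needs no reinterpretation. Also, $g\subseteq\overline{\Delta}\subseteq\overline{C}$ follows directly, because both edge points over the crossing lie between the floor and ceiling of $\Delta$.
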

%------------------
Therefore $\VeD{\psi}(\S) = O(\visib(\S)+\psi(\S))$.
It thus suffices to bound the number $\visib(\S)$ of vertical visibilities in the cells of $\C(\S)$. 

We extend the notion of vertical visibility to
\emph{vertical edge crossing}, or crossing for short, in which the segment $g$ may be crossed by
boundaries of some regions of $\S$, and/or be fully contained in some regions. 
A vertical crossing is a triple $\chi = (e,e',g)$, as above, but $e$ and $e'$ are now two arbitrary edges of $\A(\S)$.
Note though that, in the context of analyzing $\C(\S)$, we will only consider crossings for which $e$ is formed by two top boundaries, and $e'$ by two bottom boundaries.
Our analysis will show that these vertical crossings are not fully contained in a region from $\S$---see below.
Let $\defn(\chi) =\{\sigma_1, \sigma_2, \sigma_3, \sigma_4\} \subseteq \S$, the \emph{defining set} of $\chi$, 
denote the (multi)set of four elements of $\S$ such that $e \subseteq \bd\sigma_1^+ \cap \bd\sigma_2^+$
and $e' \subseteq \bd\sigma_3^- \cap \bd\sigma_4^-$. If $e$ is a portion of an edge of a set of $\S$ 
(namely a silhouette edge or an edge of singularity), we encode it by saying that $\sigma_1=\sigma_2$, 
and similarly for $e'$. 
% and a similar encoding can be used for representing loci of singularity.
Like a vertical visibility, $\chi$ is defined by the vertical segment $g$, which uniquely 
determines the two intersection curves $\bd\sigma_1^+\cap\bd\sigma_2^+$ and $\bd\sigma_3^-\cap\bd\sigma_4^-$.
Note also that any pair of such curves determine up to $s$ crossings, for a suitable constant parameter $s$ that
depends on the complexity of the regions of $\S$; again, the endpoints of $e$ and $e'$ do not play a role. 

We define the \emph{level}\footnote{%
	We \emph{warn} the reader that our analysis will also involve other notions of levels; see later.}
of a crossing $\chi$ with respect to $\R\subseteq \S$ to be the number of 
regions of $\R$ that intersect $g$. Each such intersection is an interval along $g$, each of whose 
endpoints is either an endpoint or an interior point of $g$.

%----------------------------------------------
\subsection{Extending the Schwarzkopf-Sharir analysis}
\label{subsec:SS}

We are now ready to extend the analysis of Schwarzkopf and Sharir~\cite{SS-97} to prove that 
	$\varphi(\S) = O^*(|\S|^2+\psi(\S))$. We review the 
(fairly complicated) technique of \cite{SS-97}, cast in our more general and somewhat different setting, with a focus on
the (numerous) ingredients of their technique that one needs to modify in order to extend the analysis to our setting.
The general strategy in \cite{SS-97} is to fix some sufficiently large near-constant parameter $k$
(choosing $k$ to be a constant will not do; see below for details concerning the choice), and to 
apply a two-stage global \emph{charging scheme} for vertical visibilities, which eventually 
leads to the following recurrence,
where $\visib_j(\R)$ is the number of vertical crossings at level $\le j$ for a subset $\R$ of $\S$, and 
$\visib_j(m) = \max_{\R\subseteq \S,\;|\R|\le m} \visib_j(\R)$:
\[
\visib(n) = \visib_0(n) = O\Bigl(k^3\beta^2(n) (\psi(n) + n^2) + k^2\beta^3(n) \visib(n/k) \Bigr),
\]
where $\beta(n)$ is a very slowly growing near-constant function of $n$ related to the inverse Ackermann function.
It is in fact $O(\lambda_{s+2}(n)/n)$, where $\lambda_{s+2}(n)$ is the maximum length of a Davenport-Schinzel sequence 
of order $s+2$ on $n$ symbols~\cite{SA}, and $s$ is the constant parameter introduced earlier.
Standard analysis, that we detail later, shows that this recurrence solves to $\visib(n) = O^*(\psi(n) + n^2)$,
as asserted in Theorem~\ref{thm:main}.
% In other words, the number of vertical edge crossings of level at most $6k$ in $\C(\S)$ is larger 
% than the number of vertical visibilities by at least a factor close to (somewhat smaller than) $k^2$, 
% up to an overhead term that is the sum of a term close to $n^2$, plus a term that is nearly proportional to the % complexity of $\C(\R)$,
% for a sample $\R$ of size $n/k$. 

Before we describe the charging scheme, we perform a preprocessing 
step that splits some of the edges of $\A(\S)$, so that some desirable properties hold.

%-------------------------------
\paragraph{Preprocessing step.}
First, by dividing each edge of $\A(\S)$ into $O(1)$ pieces, as needed, we can assume that every 
edge of $\A(\S)$ is $x$-monotone (see also the discussion at the beginning of Section~\ref{sec:union}). 
For an edge $e$ of $\A(\S)$, the upward (resp., downward) 
\emph{vertical curtain} $V_e$ (resp., $V^e$) erected from $e$ is the union of 
all vertical rays emanating upwards (resp., downwards), in the $z$-direction, from the points of $e$
($e$ appears as a subscript (resp., superscript) to indicate that it lies at the bottom
(resp., top) of the curtain). 
Since each set $\sigma \in \S$ is $xy$-monotone, $\bd\sigma \cap V_e$ (or $\bd\sigma\cap V^e$) 
consists of \emph{top} and \emph{bottom} boundary curves, each of which might %be empty or
consist of $O(1)$ connected components, so that each $z$-vertical ray in $V_e$ or $V^e$ intersect 
each curve at most once. 
%\esther{Specifically, $\bd\sigma \cap V_e$ consists of both top and bottom boundary curves, since $e$ lies on $\C(\S)$, whereas $\bd\sigma\cap V^e$ may consist of only one boundary curve.}
Clearly, when both curves exist (for the same set $\sigma$), the bottom boundary curve lies below the top boundary curve. 
%\micha{(a) Say what happens when only one boundary crosses? and (b) Add a figure?}
%\esther{(a) This implies that $e$ lies inside the cross section of a set  $\sigma \in \S$ with the vertical curtain  of $e$, no? (b) and this is what the figure should describe.}
%\esther{Concerning vertical curtains of type $V_e$, if there is only one boundary curve that crosses it, then it must be a lower boundary. What do we have to say on vertical curtains of type $V^e$? If there is only one boundary then it should be a top boundary?}
Let $\Sigma_e$ (resp., $\Sigma^e$) 
denote the set of these bottom (resp., top)  boundary curves within $V_e$ (resp., $V^e$), and let $\A(\Sigma_e)$
(resp., $\A(\Sigma^e)$) denote their two-dimensional 
arrangement within $V_e$ (resp., $V^e$). We define the \emph{V-level} of a point $p$ in $V_e$ 
(resp., $V^e$) to be the number of arcs in $\Sigma_e$ (resp.,  $\Sigma^e)$ that lie below (resp., above) $p$.

We split the edges of $\A(\S)$ to ensure that, over each subcurve $e$, none of the curves 
of $\Sigma^e$ or $\Sigma_e$ has an endpoint (namely, $x$-extremal point, silhouette point, or 
singular point) within the first $6k$ levels of the planar arrangements $\A(\Sigma_e)$ (within the 
relative interior of $V_e$) or $\A(\Sigma^e)$ (within the relative interior of $V^e$). 
As argued in~\cite{SS-97}, using an analysis that also holds in the more general setup considered here,
the overall increase in the number of these split edges, over all edges $e$ of $\A(\S)$, is 
$O(kn^2\beta(n/k)) = O(kn^2\beta(n))$. 
Hence, the total number of edges in $\C(\S)$ after this splitting step is $O(\psi(\S)+kn^2\beta(n))$. 

\begin{figure}[htb]
  \begin{center}
  \includegraphics[scale=0.4]{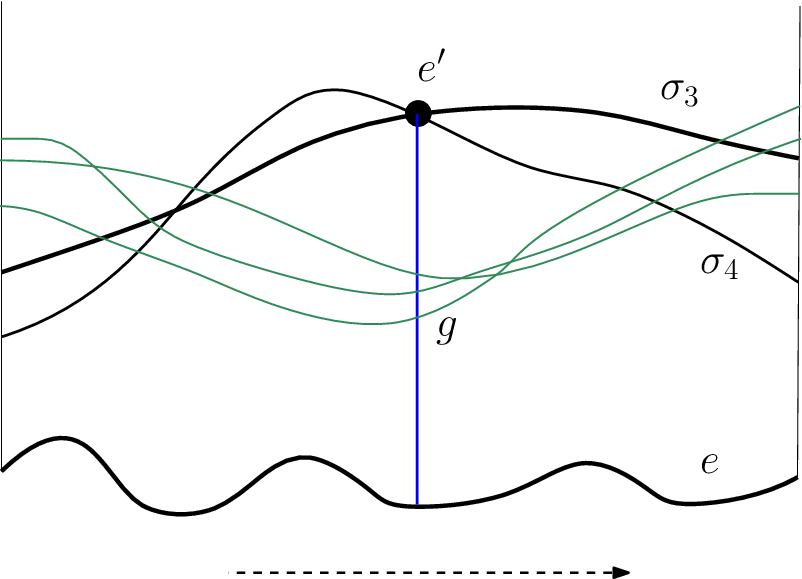}
  %\vspace{1in}
  \caption{%Draw a figure showing the vertical curtain, split edge, $\sigma(\chi)$...
    A vertical crossing $\chi_v=(e,e',g)$ with $\sigma(\chi)=\sigma_3$.}
  \label{fig:split-edge}
  \end{center}
\end{figure}

Fix a (split) edge $e$ of $\C(\S)$ and assume that $e$ is a lower edge, so $\C(\S)$ lies above $e$, locally near $e$,
and $e$ is contained in the intersection of two top region boundaries. Consider the upward 
curtain $V_e$. A vertex $v\in \A(\Sigma_e)$ corresponds to a vertical crossing $\chi_v=(e,e',g)$, where 
$v$ is the top endpoint of $g$ and lies on the edge $e'$ that crosses $V_e$ at $v$, and $e$ 
contains the bottom endpoint of $g$. Suppose that $e' \subseteq \bd\sigma_3^- \cap \bd\sigma_4^-$,
for a pair of regions $\sigma_3$, $\sigma_4\in\S$ (by definition and construction, the incident boundaries 
must be bottom boundaries, so they do appear in $\Sigma_e$), and that the incident 
portion of $\bd\sigma_3^-$ lies above $\bd\sigma_4^-$ to the right of $v$ in a sufficiently small neighborhood 
of $v$ within $V_e$. We then set $\sigma(\chi)=\sigma_3$; otherwise we set $\sigma(\chi_v)=\sigma_4$. 
%\micha{Add a figure?}
Informally, $\sigma(\chi)$ is the region whose boundary is hidden from $e$ by the other boundary, as we trace it to
the right from $v$ within $V_e$.
See Figure~\ref{fig:split-edge}.

%--------------------------------------
\paragraph{First stage.}
In the first charging step, roughly speaking, we fix a (split) lower edge $e$ of a cell of $\C(\S)$, 
and construct a set $\CSet_e$ of vertical edge-crossings of level at most $k$, that use $e$ as their lower edge
(so they all lie within $V_e$), 
and bound the overall number $\visib_e(\S)$ of vertical visibilities (within $\C(\S)$) that use $e$ as their
lower edge, in terms of $|\CSet_e|$.
% (\pankaj{Need to check whether the edges of $\subklevel{\x24}(\S)$ are enough or whether we should increase to $\le 4$.})
Set $\CSet := \bigcup_e \CSet_e$, over all these edges $e$. 
Observe that $\visib(\S) = \sum_e \visib_e(\S)$.

Let $\delta$ be a portion of a lower edge $e\in \C(\S)$, and let $\sigma_1$, $\sigma_2\in\S$
be the two regions such that $e\subseteq \bd\sigma_1^+\cap\bd\sigma_2^+$.
A set $\sigma \in \S$ is \emph{relevant} for $\delta$ if the bottom surface $\bd\sigma^-$ appears on the lower
envelope of $\Sigma_e$ over some point $p$ of $\delta$, i.e., the 
vertical ray from $p$ in the $(+z)$-direction intersects 
$\bd\sigma^-$ before intersecting any other set of $\S$.

We take each (split) edge $e$ and partition it into a right portion $e_R$ and a left portion $e_L$, where 
$e_R$ is the smallest right portion of $e$ such that $k$ distinct sets of $\S$ are relevant for $e_R$. 
(If fewer than $k$ sets are relevant for $e$ then $e_R=e$ and $e_L$ is empty.) 
By a standard lower envelope argument, the number of vertical visibilities 
that emanate from (i.e., their bottom endpoints lie on) the right portion $e_R$ of an edge $e$ is then $O(k\beta(k))$
\cite{SA}. Summing over all (split) edges of $\C(\S)$, the contribution of the right portions to $\visib(\S)$ is 
$$O\Bigl(k\beta(k)(\psi(\S)+kn^2\beta(n))\Bigr).$$ 
It thus suffices to bound the number of vertical visibilities
whose bottom endpoints lie on the left portions $e_L$ of  edges $e$ (for which $e_L$ is non-empty). 
We write the set of these visibilities as $\CSet_{e_L}(\S)$, and denote its size as $\visib_{e_L}(\S)$.
Our goal is to estimate $\sum_e \visib_{e_L}(\S)$.

Fix a pair of sets $\sigma_1, \sigma_2 \in \S$, and let $\gamma$ be one of the $x$-monotone 
connected components of $\bd\sigma_1^+ \cap \bd\sigma_2^+$. Define 
\[ 
E_\gamma = \{ e \mid e\; \mbox{is a lower (split) edge of\, } \C(\S) \;\wedge\; 
e \subseteq \gamma \;\wedge\; e_L \ne \emptyset\} ,
\]
% \micha{Rewrite the whole para}
and put $\nu_\gamma = |E_\gamma|$. Let $\CSet(\gamma)$ be the set of vertical visibilities whose bottom endpoints
lie on the left portion of an edge of $E_\gamma$. Let $V_\gamma$ be the union of the vertical curtains $V_{e_L}$
over all left portions $e_L$ of edges $e$ in $E_\gamma$ (so $V_\gamma$ is contained in the vertical curtain erected
from $\gamma$, and its sub-curtains $V_{e_L}$ are pairwise disjoint and not adjacent to one another). 
Let $\S_\gamma\subseteq \S$ be the family of regions that are relevant for the left portion $e_L$ of some edge
$e\in E_\gamma$, and let $\Sigma_\gamma$ be 
the family of connected components of the intersection curves $\bd\sigma^- \cap V_{e_L}$, 
over all edges $e\in E_\gamma$, and over all $\sigma\in\S_\gamma$. Set $t_\gamma := |\S_\gamma|$. 
Informally, we treat $\gamma$ as a single entity, but focus only on the left portions $e_L$ of the edges $e\in E_\gamma$.

Let $\chi=(e,e',g)$ be a vertical visibility whose bottom endpoint lies on 
the left portion $e_L$ of some edge $e\in E_\gamma$. Then the top endpoint of $g$ is a 
breakpoint $v$ of the lower envelope of $\Sigma_\gamma$. 
Hence, the overall number of such vertical crossings, denoted by $\visib_{\gamma}(\S)$,
is at most $\lambda_{s+2}(t_\gamma)$, where $s$ is the previously defined parameter that
depends on the complexity of the sets in $\S$.
Using our notation, we write this as $O(t_\gamma\beta(t_\gamma))$.

\begin{figure}[htb]
  \begin{center}
	  \begin{tabular}{ccc}
		  \includegraphics[scale=0.4]{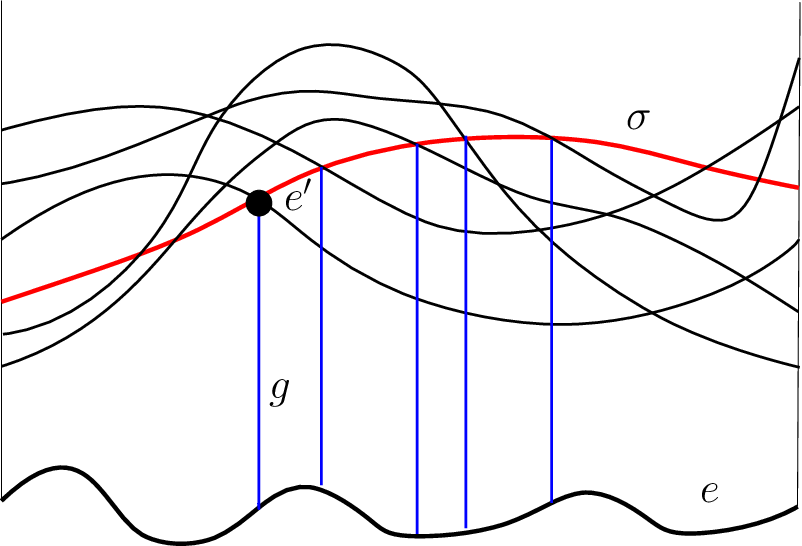} & \hspace*{1in} & \includegraphics[scale=0.4]{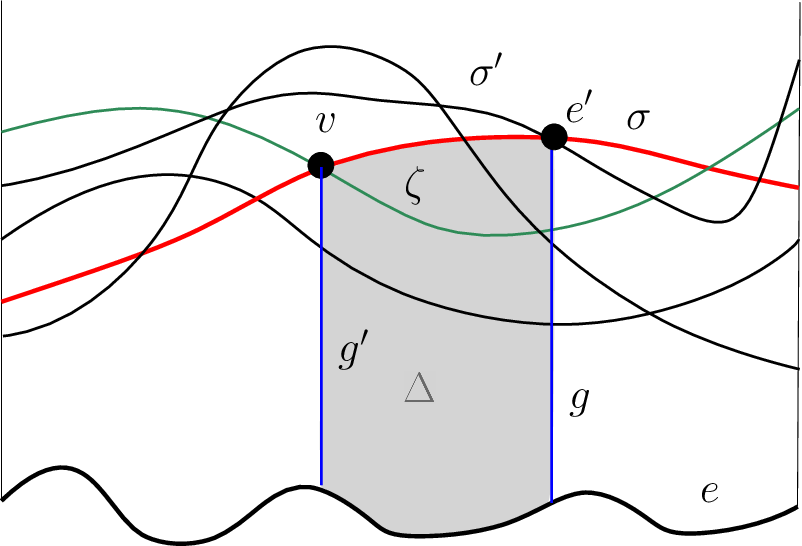}\\ %\hspace*{1in}\\
			(a)& \hspace*{1in} &(b)
	  \end{tabular}
	  \caption{(a) The setup in the construction of $\CSet_e$; (b) illustration of property (R4).}
    \label{fig:Re}
  \end{center}
\end{figure}

Now consider a set $\sigma \in \S_\gamma$, and let $p \in \bd\sigma^-\cap V_\gamma$ be a point 
(not a vertex) on the lower envelope of $\Sigma_\gamma$. Then $p\in \bd\C(\S)$.
Suppose $p$ lies above $e_L$ for some edge $e \in E_\gamma$.
We collect a set of at least $k-1$ vertices of $\A(\Sigma_\gamma)$ of V-level at most $k$, as described below.
We apply this step starting from only
one point $p$ on $\bd\sigma^-\cap V_\gamma$, for each $\sigma\in \S_\gamma$.
We start from $p$ and follow the curve $\bd\sigma^- \cap V_e$ in increasing $x$-direction. 
When we encounter the bottom boundary of a new distinct set $\sigma'$, i.e., reach the first 
intersection point $v\in\bd\sigma^-\cap\sigma'$ to the right of $p$, we pick $v$. (Since $p\in\bd\C(\S)$, the first time
we meet $\sigma'$ is necessarily on its bottom boundary.) We stop as soon as we 
have collected $k$ vertices or reached the right boundary of $V_{e}$, see Figure~\ref{fig:Re}~(a).
(We may also traverse the portion of $\bd\sigma^- \cap V_e$ lying above $e_R$.)
For each vertex $v$ that we collect, let $\chi_v$ be the vertical crossing with the lower edge $e$
and top endpoint $v$.
Note that $\sigma(\chi_v)=\sigma$, and therefore $v$ is only collected when we traverse $\bd\sigma^-\cap V_\gamma$ 
(and not when we traverse $\bd{\sigma'}^- \cap V_\gamma$, where $\sigma'$ is the other surface incident to $v$, as defined above).
The level of $\chi_v$ is at most $k$.
Let $\CSet(\gamma,\sigma)$ denote the set of vertical crossings corresponding to the collected vertices.
Since $k$ surfaces are relevant for $e_R$, i.e., they appear on the lower envelope over $e_R$,
and at least $k-1$ of them (all except $\sigma$ if it is relevant for $e_R$) lie above $\bd\sigma^-$ at 
(the starting point) $p$, 
$|\CSet(\gamma,\sigma)| \ge k-1$.

We repeat this step for all regions $\sigma\in\S_\gamma$. 
Set $\CSet(\gamma)=\bigcup_{\sigma\in\S_\gamma} \CSet(\gamma,\sigma)$ and
$\CSet=\bigcup_\gamma \CSet(\gamma)$, where the union is taken over all $x$-monotone  
connected components $\gamma$ of intersection curves $\bd\sigma^+ \cap \bd\sigma'^+$ (that show up on $\bd\C(\S)$), for $\sigma,\sigma' \in \S$. The argument just given implies the following:
\begin{equation}
  \label{eq:phi-gamma} 
  |\CSet(\gamma)| \ge (k-1) |\S_\gamma| = (k-1)t_\gamma =
  \Omega \left(k \frac{t_\gamma}{\lambda_{s+2}(t_\gamma)}\cdot \lambda_{s+2}(t_\gamma)\right ) 
  \ge \frac{c'k}{\beta(t_\gamma)} \cdot \visib_{\gamma}(\S), %\varphi(\gamma) , 
\end{equation}
for a suitable constant $c'$. Summing over all intersection curves $\gamma$, observing that, by definition, each  
vertical visibility over the left portion of a lower edge is counted in the sum of the right-hand
sides of (\ref{eq:phi-gamma}) exactly once,
and adding the bounds on the number
of vertical crossings over the right portions $e_R$ of the edges $e$ (note that these vertical crossings are counted at most twice in this charging), we obtain the following:
% \micha{I don't understand the equation.}
% \micha{Isn't $\Phi$ here an overloading of notation? Replace by, say, $\Xi$?}
\begin{align}
  \visib(\S)  = \sum_{\gamma} \visib_\gamma(\S) & \le \frac{c\beta(n)}{k} \sum_\gamma |\CSet(\gamma)| +
  O\Bigl(k\beta(k)(\psi(\S)+kn^2\beta(n))\Bigr)  \nonumber \\
  &=
  \frac{c\beta(n)}{k} |\CSet| + O\Bigl(k\beta(k)(\psi(\S)+kn^2\beta(n))\Bigr) ,
  \label{eq:R}
\end{align}
where $c = 1/c'$ is a constant.

As proved in~\cite{SS-97} (in a different context, which also applies in our setup),
besides the inequality~(\ref{eq:R}), $\CSet$ satisfies the following properties:
\begin{itemize}
	\item[(R1)] Each $\chi\in\CSet$ is at level at most $k$.
	\item[(R2)] For each $\chi=(e,e',g)\in\CSet$, the edge $e$ is an edge of $\C(\S)$ 
		such that a cell of $\C(\S)$ lies locally immediately above $e$ (and $e$ is incident to two top boundaries).
	\item[(R3)] For any three sets $\sigma_1, \sigma_2, \sigma_3 \in \S$, there are at most 
		$O(k)$ vertical crossings $\chi=(e,e',g)\in \CSet$ with 
		$e \subseteq \bd\sigma_1^+ \cap \bd \sigma_2^+$ and $\sigma(\chi)=\sigma_3$.
		This immediately follows from the rules by which the sets $\Phi_e$ are constructed,
		namely that we start from only one visibility $\chi$ with a lower edge $e$ contained in (an $x$-monotone 
  connected component of) any given intersection curve $\bd\sigma_1^+\cap \bd\sigma_2^+$ and has 
  $\sigma(\chi) = \sigma_3$.
		% \micha{See an earlier comment, needed to ensure that (R3) holds.}
	\item[(R4)] % \micha{Check that (R4) also includes the case where $\sigma(\chi)$ is incident to $e$.}
	    For each $\chi=(e,e',g)\in \CSet$, and for any set $\zeta\in\S$ whose bottom boundary $\bd\zeta^-$ 
		intersects the relative interior of $g$, there is an intersection point $v$ of % \micha{top? bottom?}
		$\bd\zeta^-$ and $\bd \sigma^- = \bd \sigma(\chi)^-$ on $V_e$ to the left of $\chi$, so that the 
		portion of $\bd\zeta^-\cap V_e$ between $v$ and $g$ does not meet $\bd\sigma^-$, and the portion of 
		$\bd\sigma^- \cap V_e$ between $v$ and $g$ does not meet the boundary $\bd\sigma'^-$ of the other 
		set $\sigma'$ incident on the top endpoint of $g$ (this latter property is an immediate consequence of
		the rule by which we charge crossings).
		In addition, if $g'$ denotes the 
		vertical segment connecting $e$ and $v$, then the trapezoid $\triangle_v$ formed 
		within $V_e$ by $g, g'$, the part of $e$ between $g$ and $g'$, and the part of 
		$\bd\sigma^-\cap V_e$ between $g$ and $g'$, is intersected by the boundaries of at 
		most $k$ sets of $\S$. See Figure~\ref{fig:Re}~(b).
\end{itemize}

Properties (R1)--(R3) are easy to establish ((R3) has already been argued).
The proof of (R4) proceeds as in \cite{SS-97}, with
suitable straightforward modifications.

%-----------------------------
\paragraph{Second stage.}
% \micha{Check $6k$.}
In the second stage, we bound $|\CSet|$ in terms of $\visib_{4k}(\S)$ 
(using $4k$ instead of $k$ is done for technical reasons, and is not important in the final 
analysis). 
%The analysis in this step, as in \cite{SS-97}, is more global and considerably more involved.
%(Strictly speaking, there is no real charging in the second stage, and it is more like an accounting scheme 
%(as it is in \cite{SS-97}).) 
We now switch the roles of edges, and fix an upper (split) edge $e'$ of $\A(\S)$ of depth 
at most $k$
(i.e., separated from $\C(\S)$ by at most $k$ sets; all points on $e'$ have the same depth
after the preprocessing step), incident on two bottom boundaries of regions in $\S$. 
Let $\CSet^{e'}\subseteq \CSet$ be the set of the vertical crossings collected at the first 
stage that have $e'$ as their upper edge.
Note that for any edge-crossing $(e,e',g)\in \CSet^{e'}$, the bottom endpoint of $g$ (on $e$) 
lies on $\bd\C(\S)$. This implies that for any set $\sigma\in\S$ that crosses $g$, its bottom
boundary $\bd\sigma^-$ crosses $g$ (and its top boundary may or may not cross $g$).

We seek an upper bound on $|\CSet^{e'}|$. Some of the crossings in $\CSet^{e'}$ will be counted locally,
for each edge $e'$ separately, while others will be counted in a global manner, considering all edges together.
To carry out the analysis, we classify each vertical crossing $\chi=(e,e',g)\in\CSet^{e'}$ as either covered or
uncovered. We call $\chi$ \emph{covered} if the following condition holds:

\begin{quote}
  Let $\sigma_1, \sigma_2 \in \S$ be the two sets whose top boundaries contain $e$. There is a set $\zeta\in\S$ 
  whose bottom boundary $\bd\zeta^-$ intersects the relative interior of $g$ (the top boundary may 
  also cross $g$ but we ignore it; as already noted, 
  the bottom endpoint of $g$ does not lie in 
  any set of $\S$, so $g$ cannot be crossed by just the top boundary).
  The condition for being covered is that
  (i) one of $\bd\sigma_1^+$, $\bd\sigma_2^+$, say $\bd\sigma_1^+$, crosses $\bd\zeta^-$ within $V^{e'}$,
  to the left or to the right of $g$, at some point $w$, and (ii) if $g^*$ denotes
  the vertical segment connecting $w$ to $e'$, then the pseudo-trapezoid $\triangle_w$ formed 
  by $g$, $g^*$, and the portions of $e'$ and $\bd \sigma_1^+$ between $g$ and $g^*$, is crossed by the
  boundaries of at most $2k$ sets of $\S$.
\end{quote}

(Informally, at most $k$ of these regions cross $g$, and we do not want more than $\approx k$ other regions
to cross the relevant portion of $\bd\sigma_1^+$ between $g$ and $g^*$.)
See Figure~\ref{fig:covered}. 
Otherwise $(e,e',g)$ is \emph{uncovered}. (Informally, in this case it takes `longer' for $\bd\zeta^-$
to reach $\bd\sigma_1^+$, $\bd\sigma_2^+$.) We first bound the number of uncovered vertical crossings 
within $V^{e'}$ using a ``local'' recursive argument, which relates this
number to the overall number of vertices of $\A(\Sigma^e)$ of V-level at most $2k$ within $V^{e'}$. We then 
use a global planarity argument to bound the number of covered crossings.
Both arguments are adapted from \cite{SS-97} and turn out to be
more involved because of the presence of both top and bottom boundary surfaces;
we provide here many of the details, adapted to our context, and refer the reader to~\cite{SS-97} for the missing details.

%\micha{In Fig. 4: Part (b) is wrong: $\bd\sigma_2^*$ does NOT cross. Fix! I also suggest to draw $\bd\sigma_2^*$ in red in part (a) too, and to draw $\bd\sigma_2^*$ in blue in both parts.}
%\esther{I made some modifications, but I am not sure what you intend to illustrate in (b). To be discussed.}

%\esther{Fixed Fig(b). Pls check figs and caption.}

\begin{figure}[htb]
  \begin{center}
    \begin{tabular}{ccc}
      \includegraphics[scale=0.5]{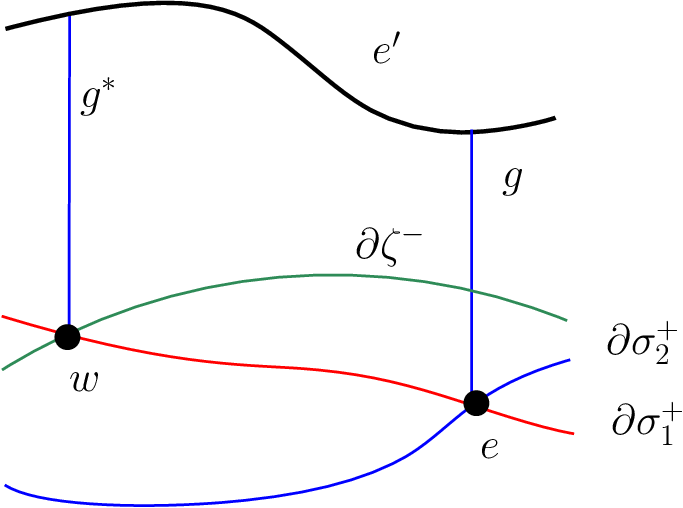} & \hspace*{0.5in} & \includegraphics[scale=0.5]{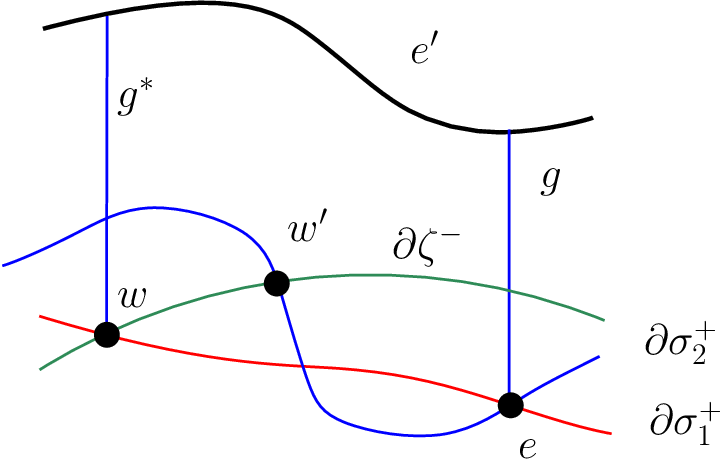}\\ %\hspace*{1.5in}\\
      (a) & \hspace*{1in}& (b)
    \end{tabular}
    \caption{A covered edge crossing $g$. In this figure the relevant $e$-incident surface is $\bd\sigma_1^+$.
      (a) $\bd\sigma_2^+$ crosses $\bd\sigma_1^+$ once before $w$ (at $e$).
      (b) $\bd\sigma_2^+$ crosses $\bd\sigma_1^+$ twice before $w$.
    }
    \label{fig:covered}
  \end{center}
\end{figure}

%----------------------------------------------
%\textbf{\textit{Uncovered vertical crossings.}}
\paragraph{Uncovered vertical crossings.}
The analysis of uncovered edge-crossings is performed separately for each (split) edge $e'$.
Let $e'$ be a split edge of $\A(\S)$ at depth at most $k$ with $\CSet^{e'} \ne \emptyset$. 
Let $t=t_{e'}$ be the number of sets of $\S$ whose top boundaries appear on the first $2k$ V-levels 
(highest $2k$ V-levels from $e'$ down) of $\A(\Sigma^{e'})$.
(Recall that the V-level of a point $p$ in $\Sigma^{e'}$ is the number of top boundary curves of 
$\Sigma^{e'}$ that appear above $p$ (and below $e'$); observe that bottom surfaces are ignored here.) 
The preprocessing step ensures that no endpoint (namely, a
locally $x$-extremal point or  a singular point)
of a curve of $\Sigma^{e'}$ lies in the relative interior of $V^{e'}$ at any V-level $\le 6k$. 

At $g$ itself, the only boundaries that cross it came ``from above,'' as we have reached $g$ while tracing $e$ and
$V_e$ in the first stage. Each of these crossing regions has its bottom boundary cross $g$, and some may also have
their top boundary cross $g$ too. As we trace $e'$ and $V^{e'}$ in the second stage, 
nothing else can reach the moving $g$ from above, as $e'$ is an edge (and nothing that now crosses $g$ can
escape through the upper edge $e'$), 
so anything new that penetrates $g$ comes from below, and contributes a crossing top boundary (and some may also 
contribute, later, their bottom boundary).
But, as just said, $g$ also has at least one bottom boundary that reached it
from above. In the covered situation, one of these ``older'' bottom boundaries, $\bd\zeta^-$, crosses an $e$-incident 
top boundary $\bd\sigma_1^+$ at crossing distance (number of separating surfaces) at most $k$ from $g$. 
This does not happen in the uncovered situation, meaning that as we trace
$\bd\sigma_1^+$ from $g$, the first $k$ vertices that we encounter are formed by new top boundaries that cross
$\bd\sigma_1^+$ from below. Technically, this means that we may ignore all the old top boundaries that cross $g$, 
because they cannot reach $\bd\sigma_1^+$ in the uncovered case (informally, they stay ``shielded'' from $\bd\sigma_1^+$ 
by their corresponding bottom boundaries, which do not reach $\bd\sigma_1^+$). See Figure~\ref{fig:uncovered_edge}.

If $t \le 6k$, $e'$ contributes only $O(k^2)$ vertical crossings to $\CSet$.
Summing over all (split) edges $e'$ of $\subklevel{k}(\S)$, the overall number of vertical crossings in $\CSet$ 
within all the curtains $V^{e'}$ with $t_{e'} \le 6k$ is 
\[
O(k^2(k n^2\beta(n) + \psi_k(\S)) ) = 
O(k^3n^2\beta(n) + k^2\psi_k(\S)) ,
\]
which, by the Clarkson-Shor technique (see~(\ref{eq:CS})), is 
\[
O(k^3n^2\beta(n) + k^5\EE[\psi(\R)]),
\]
where $\R$ is a random subset of $\S$ obtained by choosing each element of $\S$ with probability $1/k$, 
and $\EE$ is the expectation with respect to this random choice.
%\esther{This should be $O(k^3n^2\beta(n) + k^5\EE[\psi(\R)])$, by~(\ref{eq:CS}).}
We may thus assume that $t > 6k$.

\begin{figure}[htb]
  \begin{center}
    \includegraphics[scale=0.5]{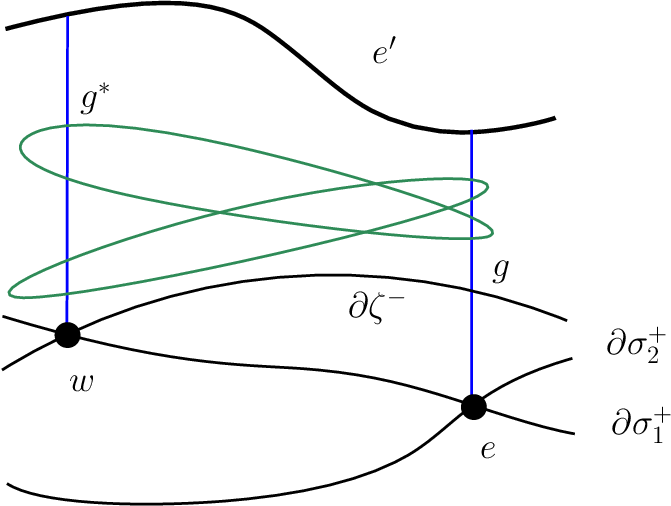}
    %\vspace{1in}
    \caption{The setup at an uncovered edge crossing.}
    \label{fig:uncovered_edge}
  \end{center}
\end{figure}

The arcs of $\Sigma^{e'}$ can be partitioned and clipped into $O(t\beta(t))$ subarcs, so that they cover 
the first $k$ V-levels of $\A(\Sigma^{e'})$, and each subarc contains at most $k$ vertices of 
$\A(\Sigma^{e'})$. To achieve this, we observe that there is a value $k^* \in [k,2k]$ such that 
the $k^*$-V-level in $\A(\Sigma^{e'})$ has $O(t\beta(t))$ vertices. This follows by the pigeonhole 
principle, since the overall number of vertices in the first $2k$ V-levels is $O(kt \beta(t/k)) = O(kt\beta(t))$ 
(a well-known consequence of the Clarkson-Shor argument, already used earlier in a different context).
We then clip the arcs of $\Sigma^{e'}$ at their intersections with the $k^*$-V-level of $\A(\Sigma^{e'})$, 
keeping only the subarcs that span smaller V-levels, creating $O(t\beta(t))$ arcs. Next, we 
further partition these clipped arcs into a total of $O(t\beta(t))$ smaller subarcs, so that each subarc contains at most 
$k$ vertices of $\A(\Sigma^{e'})$ (again, this is possible because, as just argued, the V-levels up to V-level $k^*$
contain a total of $O(kt\beta(t))$ vertices). 
% \micha{Again, this includes the surfaces from $\defn(\chi)$.}
Let $\ASet^{e'}$ denote the resulting set of subarcs, 
and consider their arrangement $\A(\ASet^{e'})$. Every vertical crossing of $\CSet^{e'}$ corresponds to a vertex
of $\A(\ASet^{e'})$ (because the bottom endpoint of the vertical segment of the crossing lies on two top 
boundaries, by construction, and is separated from $e'$ by at most $k$ surfaces).
%Somewhat abusing the notation,
We define the \emph{V-depth} of an arc $\xi\in \ASet^{e'}$ to be the smallest V-level in 
$\A(\ASet^{e'})$ of any point on $\xi$. By construction, the V-depth of $\xi$ is at most $2k$.
%and it is also equal to the smallest V-level in $\A(\Sigma^{e'})$ of any point on $\xi$
%\esther{this last sentence is a repetition of what is written in the definition of V-depth, keep just first part of this sentence?}. 

For $h \in [1,k^*]$, let $\ASet_h \subseteq \ASet^{e'}$ be the subset of arcs of V-depth $h$, and
let $t_h=|\ASet_h|$; we have $\sum_h t_h \le |\ASet^{e'}|=O(t\beta(t))$. Let 
$\chi=(e,e',g)$ be an uncovered vertical crossing in $\CSet^{e'}$ such that the V-level 
(in $V^{e'}$) of its bottom endpoint is $h$. 
As in the proof of~\cite[Lemma~2.5]{SS-97}, 
the crucial observation is that the bottom endpoint of $g$ is a vertex of the \emph{upper envelope} 
of $\ASet_h$. 
This is because the uncoveredness of $\chi$ means that, as we follow any of the two
arcs containing the bottom endpoint of $\chi$, say $\bd\sigma_1^+$, away from that endpoint, after each vertex 
that we cross, the V-level increases, for otherwise $\chi$ would have been covered. More precisely, the analysis 
applies to only top boundaries. Since the area just above $e$
is not contained in any region, any top surface that lies above $e$ has its bottom counterpart also above $e$.
Thus if a top surface $\bd\sigma^+$ crosses $\bd\sigma_1^+$, the corresponding bottom surface $\bd\sigma^-$ 
must have crossed it earlier, implying that $\chi$ is a covered crossing, contrary to assumption.
Hence, the number of uncovered vertical crossings in $\CSet^{e'}$ within $V^{e'}$ is at most
(where $s$ is the same parameter used to define $\beta(t)$ as $O(\lambda_{s+2}(t)/t)$)
\begin{equation}
	\sum_{h=1}^{k^*} \lambda_{s+2} (t_h) =
	\sum_{h=1}^{k^*} O(t_h\beta(t_h)) = 
	O(t\beta^2(t)).
\end{equation}
On the other hand, the number of vertices in $\A(\Sigma^{e'})$ whose V-level is at most $3k$ is $\Omega(tk)$. 
Indeed, at least $t$ arcs of $\Sigma^{e'}$ appear on the first $2k$ V-levels,
therefore at least $t-3k \ge t/2$ (recall that $t\ge 6k$) 
arcs of $\Sigma^{e'}$ contain a point at V-level larger than $3k$ and also appear on
a V-level smaller than $2k$ (within $V^{e'}$),
so these arcs contain a total of $\Omega(tk)$ vertices whose  
V-level is at most $3k$ in $\A(\Sigma^{e'})$.
Each such vertex $v$ is the bottom endpoint of a vertical crossing $\chi_v = (e_v,e',g_v)$ of $\A(\S)$, where 
$e_v$ is the edge of $\A(\S)$ containing $v$. Since $g_v$ intersects the top boundaries of at most $3k$ sets of 
$\S$ and the depth of $e'$ (with respect to any subset of $\S$) is at most 
$k$, the level of $\chi_v$  with respect to $\C(\S)$ is at most $4k$.
%\esther{Shall we complete these details to show how to obtain the bound $\frac{\beta^2 (n)}{k} \visib_{4k}(\S)$ in Lemma~\ref{lem:uncovered-crossings}?}

Summing over all (split) edges $e'$ of $\subklevel{k}(\S)$,
we obtain the following (see~\cite[Lemma~2.5]{SS-97} for a more detailed proof, in a different but similar context):
\begin{lemma}
  \label{lem:uncovered-crossings}
  The total number of uncovered vertical crossings in $\CSet$ is 
  \[
  O\left( k^3n^2\beta(n) + k^5\EE[\psi(\R)] + \frac{\beta^2 (n)}{k} \visib_{4k}(\S)\right ),
  \]
  where $\R$ is a random subset of $\S$ obtained by choosing each set of $\S$ with probability $1/k$,
and $\EE$ is the expectation with respect to this random choice.
\end{lemma}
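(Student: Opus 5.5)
The plan is to combine the per-curtain estimates already derived before the statement with a charging argument, and then sum over all (split) upper edges $e'$ of $\subklevel{k}(\S)$. I split these edges into two classes according to $t=t_{e'}$.

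\emph{Edges with $t_{e'}\le 6k$.} Each such $e'$ contributes $O(k^2)$ crossings. By the bound on the number of split edges of $\subklevel{k}(\S)$ and by (\ref{eq:CS}), these edges together contribute $O(k^3n^2\beta(n)+k^5\EE[\psi(\R)])$. This yields the first two terms of the lemma.

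\emph{Edges with $t_{e'}>6k$.} Here I use the upper-envelope argument: the uncovered crossings in $V^{e'}$ number $O(t\beta^2(t))$. The task is to show that this quantity is at most $O(\beta^2(n)/k)$ times the number of vertices of $\A(\Sigma^{e'})$ in $V^{e'}$ at V-level at most $3k$.

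The lower bound on these vertices comes from the $\ge t/2$ arcs that appear above V-level $2k$ and also pass below V-level $3k$. Each such arc must cross at least $k$ other top-boundary arcs while its V-level moves between $2k$ and $3k$. Since each vertex lies on two arcs, this gives $\Omega(tk)$ distinct vertices. Consequently the uncovered count in $V^{e'}$ is $O\bigl((\beta^2(t)/k)\cdot N_{e'}\bigr)$, where $N_{e'}$ is the number of such vertices, and $\beta(t)\le\beta(n)$.

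Next I charge each counted vertex $v$ to the vertical crossing $\chi_v=(e_v,e',g_v)$, and bound its level. The segment $g_v$ meets at most $3k$ top boundaries inside $V^{e'}$. A region containing part of $g_v$ without its top boundary crossing $g_v$ must contain the whole upper part near $e'$, so it is one of the at most $k$ regions containing $e'$. Hence $\chi_v$ has level at most $4k$.

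The remaining step is injectivity: each crossing in $\visib_{4k}(\S)$ should be charged only $O(1)$ times. A crossing determines its segment $g$, and the segment determines its upper edge $e'$ and its bottom endpoint $v$. So each crossing is charged at most once from the single curtain $V^{e'}$ containing it. Summing over $e'$ then gives the third term $\frac{\beta^2(n)}{k}\visib_{4k}(\S)$.

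The main obstacle is the vertex counting in the second class. There are two concerns:
\begin{itemize}
\item Arcs may have endpoints inside the band, so the "crosses $k$ arcs" argument could fail. The preprocessing step forbids endpoints at V-levels $\le 6k$, so each arc entering the band from above must cross every V-level between $2k$ and $3k$.
\item Each vertex is counted on two arcs, which only costs a factor of $2$.
\end{itemize}
I would also double-check that the vertices counted are genuine crossings with $e_v$ an edge of $\A(\S)$ formed by two top boundaries. This holds, since $v$ lies on two top-boundary curves within $V^{e'}$.
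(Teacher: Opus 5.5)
Your proposal is correct and follows the paper's argument. The edges with $t_{e'}\le 6k$ give the first two terms via the split-edge count and Clarkson--Shor. For $t_{e'}>6k$ you compare the $O(t\beta^2(t))$ upper-envelope bound (taken, as in the paper, from the preceding discussion) with the $\Omega(tk)$ vertices at V-level at most $3k$, each charged injectively to a crossing of level at most $4k$. Your extra justifications — no arc endpoints in the band because of the preprocessing, the depth-of-$e'$ argument for the level bound, and injectivity via $g_v$ — spell out what the paper states only tersely.
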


%--------------------------------------------
\paragraph{Covered vertical crossings.}
Let $\CSet^c \subseteq \CSet$ be the set of all covered vertical crossings in $\CSet$. We prove that 
$|\CSet^c|=O(k^6\EE[\psi(\R)])$, where $\R$ is a random subset of $\S$ as defined above. Recall 
that each covered crossing $\chi=(e,e',g)\in\CSet^c$ involves some (at least one) set $\zeta\in\S$, 
such that (i) $\bd\zeta^-$ crosses the relative interior of $g$, 
(ii) $\bd\zeta^-$ crosses the (top) boundary of one of the sets, say, $\sigma_1$, that define the lower edge $e$, 
(iii) $\bd\zeta^-$ crosses $\bd\sigma(\chi)^-$ (before reaching $\xi$), and
(iv) there are at most $k$ vertices along $\bd\sigma(\chi)^-\cap V_e$ between the top endpoint of 
$g$ and the intersection of $\bd\zeta^-$ with $\bd\sigma(\chi)^-$, and at most $2k$ vertices along $\bd\sigma_1^+\cap
V^{e'}$ between the bottom endpoint of $g$ and the intersection of $\bd\zeta^-$ with $\bd\sigma_1^+$; 
see Figure~\ref{fig:covered_zeta_sigma}. 
Property (ii) follows from the definition of covered crossings, property (iii) follows from the construction of
$\Phi$ in the first stage, and property (iv) follows by the same construction and definition of covered crossings. 
Let $\cross(\chi)$ denote the set $\zeta$ that we use as the witness of $\chi$ being a covered crossing; 
if there are several such sets then choose arbitrarily one of them. 
Following the analysis of \cite{SS-97}, we bound $|\CSet^c|$ in two steps. First, we reduce the problem to 
bounding the number of covered crossings that are retained in a suitable random subset $\R$ of $\S$. 
Then we draw a path on $\bd\C(\R)$ for each crossing retained in $\R$, argue that the resulting 
system of paths forms a collection of planar graphs, possibly with digons (faces of degree $2$), and 
use a planarity argument, along with property (R3), to bound the number of retained crossings.
\smallskip

%\micha{In Fig. 6: in (a) it should be $\bd\sigma(\chi)^-$, no? And draw $\bd\zeta^-$ in a different color (blue?) in both parts.}
%\esther{Fixed. What was wrong with the green color of  $\bd\zeta^-$?}
\begin{figure}[htb]
  \begin{center}
  \begin{tabular}{ccc}
    \includegraphics[scale=0.4]{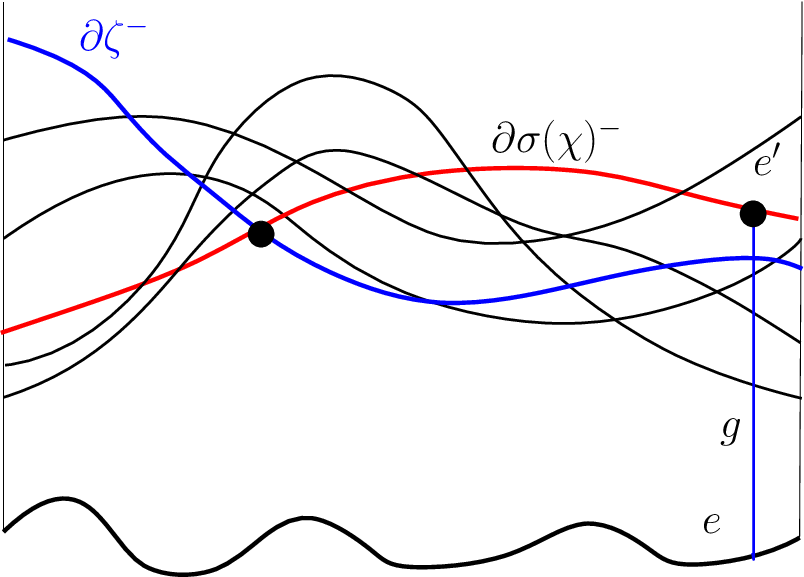} & \hspace*{1in} & 
    \includegraphics[scale=0.4]{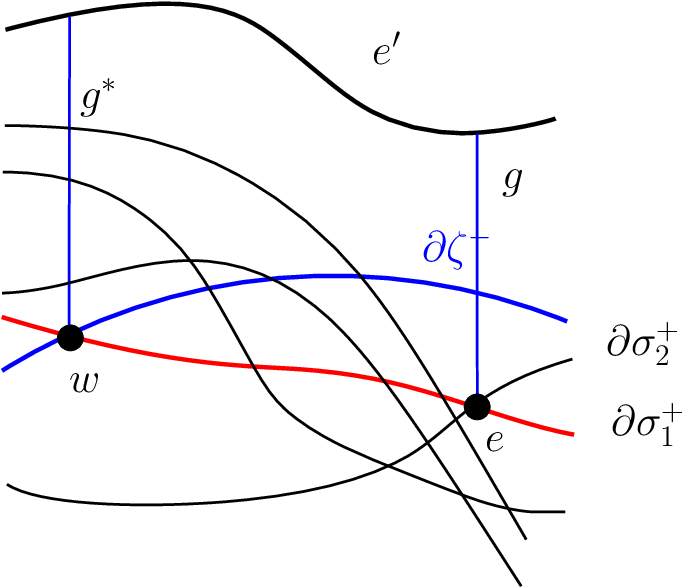}\\ %\hspace*{1.5in}\\
    (a) & \hspace*{1in} &(b)
  \end{tabular}
  \caption{The vertices (a) along $\bd\sigma(\chi)^-\cap V_e$ between the top endpoint of $g$ and the
intersection $\bd\zeta^- \cap \bd\sigma(\chi)^-$, and (b) along 
$\bd\sigma_1^+ \cap V^{e'}$ between $\chi=(e,e',g)$ and the intersection $w$ of $\bd\zeta^-$ and $\bd\sigma_1^+$.}
  \label{fig:covered_zeta_sigma}
  \end{center}
\end{figure}

%----------------------------
%\begin{lemma}
%$|\CSet^c| = O(k^6\psi(n/k))$.
%\end{lemma}
%----------------------------
%\begin{proof}
%
\textbf{\textit{Pruning and sampling.}}
We begin by pruning some of the vertical crossings from $\CSet^c$,
so that the following two properties hold for the pruned $\CSet^c$:
\begin{description}
\item[(P1)] For every subset $A\subset\S$ of size at most $4$, there is at most 
  one vertical crossing $\chi$ in $\CSet^c$ with $\defn(\chi) = A$ (recall that  $\defn(\chi)$ is the defining set of $\chi$),
  i.e., if there are multiple such crossings, we keep one and remove the (constantly many) others.
  %\esther{Shall we remind the notation of  $\defn(\chi)$?}
  See Figure~\ref{fig:prune}. %\esther{Fig label should be fixed to 7.}
\item[(P2)] For any three surfaces $\sigma_1, \sigma_2, \sigma_3 \in \S$, there is  at most one 
  crossing $\chi=(e,e',g)\in\CSet^c$ with $e\subset \bd\sigma_1^+\cap\bd\sigma_2^+$ 
  and $\sigma(\chi)=\sigma_3$ (again,  if there are multiple such crossings in the original 
  $\CSet^c$, we keep only one). 
\end{description}
Property (R3) and the fact that there are at most $s$ vertical crossings defined by the same four regions, 
for the previously defined parameter $s$, imply that
the pruning step reduces the size of $\CSet^c$ by only a factor of $O(k)$.
%Therefore it suffices to show that $|\CSet^c| = O(k^5\psi(\R))$ after this pruning step.

%\micha{In Fig. 7: Delete $\sigma_2^-$? Does not seem to belong.}
%\esther{Done. Also, change label to $\bd{\sigma_2^-}$?}
\begin{figure}[htb]
  \begin{center}
    \includegraphics[scale=0.4]{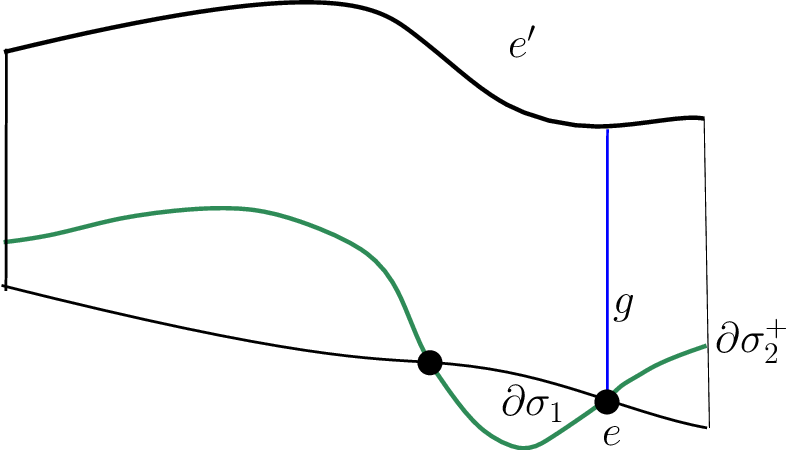}
    \caption{The two highlighted points represent vertical crossings, but we keep only the right one.}
    \label{fig:prune}
  \end{center}
\end{figure}

Let $\chi = (e,e',g)$ be a covered crossing in the pruned $\CSet^c$, and 
let $\triangle (\chi)$ (resp., $\triangle' (\chi)$) be the trapezoid within $V_e$ (resp., $V^{e'})$ introduced 
in property (R4) (resp., in the definition of covered crossings). By (R4) and this definition,
there are at most $k+2k=3k$ regions in $\S$ whose boundaries intersect $\triangle(\chi)\cup\triangle'(\chi)$. Denote this set of intersecting regions as $K(\chi)$. Since the interior of $\triangle'(\chi)$ may intersect one of the top surfaces incident on the edge $e$, $K(\chi)$ may contain one of the elements of $\defn(\chi)$. (See Figure~\ref{fig:covered}(b).)
We say that $\chi$ is \emph{retained} in a subset 
$\R \subseteq \S$ if (i) $\defn(\chi) \subseteq \R$, and (ii) 
$\R \cap (K(\chi)\setminus \defn(\chi)) =\{\cross(\chi)\}$.
If $\chi$ is retained in $\R$, then $\cross(\chi)$ is the only set in 
$\R$ whose boundary (bottom or top) intersects the relative interior of $\triangle(\chi)$ and thus of $g$. As 
mentioned above, besides $\bd\zeta^-$, one of the top surfaces incident on the lower edge $e$ of $\chi$ may intersect the 
interior $\triangle'(\chi)$ (and the corresponding bottom surface may also intersect $\triangle'(\chi)$)
but the boundary of no other set in $\R$ intersects the interior of that trapezoid.

As above, let $\R$ be a random subset of $\S$ obtained by choosing each set of $\S$ independently with probability $1/k$. 
Let $\CSet_\R^c \subseteq \CSet^c$ be the subset of covered crossings (after the pruning) that are retained in $\R$. 
Standard probabilistic arguments imply that a vertical crossing in $\CSet^c$ is retained in 
$\CSet_\R^c$ with probability at least $a_1/k^5$ for some constant $a_1>0$. Therefore we have
\begin{equation}
  \label{eq:covered_crossings}
  \EE[|\CSet_\R^c|] \ge \frac{a_1}{k^5} |\CSet^c|, \quad\text{or}\quad |\CSet^c| = O( k^5\EE[|\CSet_\R^c|])  
\end{equation}
(where expectation is with respect to the random choice of $\R$). 
Our main claim for covered crossings is that $|\CSet_\R^c| = O(\psi(\R))$, 
which would imply %the claimed inequality
$|\CSet^c| = O( k^5\EE[\psi(\R)])$. Once we show this, we obtain the asserted bound 
$|\CSet^c| = O( k^6\EE[\psi(\R)])$ for the original unprunned set $\CSet^c$. The proof proceeds as follows.

%\esther{Shall we label the latter equation, in order to refer it later in the analysis?}
%\micha{The inequality has $\psi(n/k)$. Need to justify why the expectation also satisfies this inequality.}
\smallskip

\textbf{\textit{Construction of the paths.}}
For every $\chi=(e,e',g) \in\CSet_\R^c$, we construct a path $\pi=\pi(\chi)$ on the 
boundary of the cells of $\C(\R)$, which is the concatenation of three subpaths
$\pi_1(\chi)$, $\pi_2(\chi)$, $\pi_3(\chi)$, defined as follows. Let $\sigma_1$, $\sigma_2$ be
the two regions such that $e\subseteq \bd\sigma_1^+\cap\bd\sigma_2^+$, and let $\zeta=\cross(\chi)$.
Let $v$ be the point on $\bd\sigma(\chi)^-$ at which $\bd\zeta^-$ crosses it for the first time (between 
the vertical visibility $\chi_0$ from which $\chi$ was charged and $\chi$, tracing $\bd\sigma(\chi)^-$ from 
$\chi_0$ to $\chi$). Let $u=g\cap\bd\zeta^-$, and let $w$ be the intersection point of $\bd\zeta^-$ with $\bd\sigma_1^+$
or $\bd\sigma_2^+$, say with $\bd\sigma_1^+$ for specificity,
that certifies $\chi$ as being a covered crossing (see the definition). We then define:
%\pankaj{Add the second case in the figure. Also add $\pi_1$, $\pi_2$, $\pi_3$ to the fig. Also, $\beta$ should be $\zeta$.}
%\esther{Done.}

%\micha{In Fig. 8: (i) Perhaps move $e$ further down along its curves (both parts). (ii) Why do we have two red points in (b)?}
%\esther{Fixed.}
\begin{figure}[htb]
  \begin{center}
    \begin{tabular}{ccc}
      \includegraphics[scale=0.4]{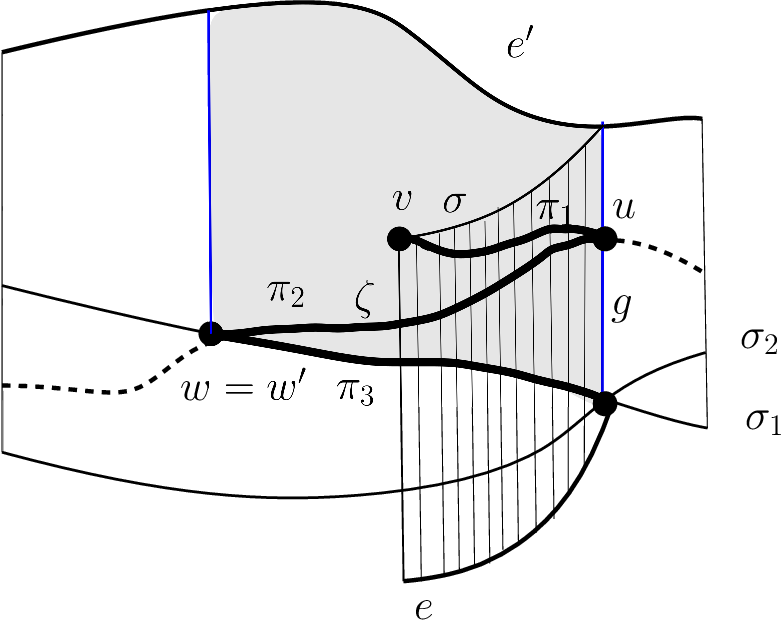} & \hspace*{0.5in} 
	    & \includegraphics[scale=0.4]{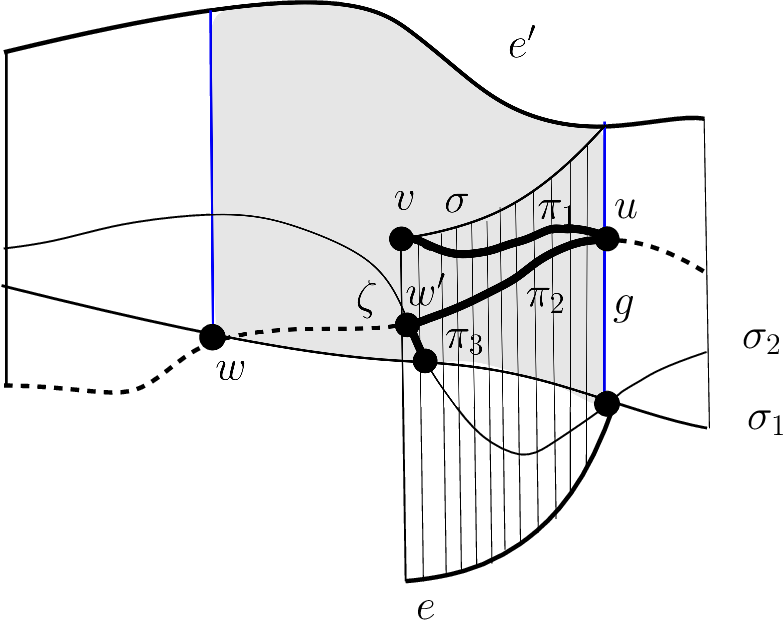}\\ %\hspace*{1in}\\
      (a)  &  & (b)
    \end{tabular}
    \caption{An illustration of a covered edge-crossing $(e,e',g)$ and its associated path 
      $\pi = \pi_1 \circ \pi_2 \circ \pi_3$ shown in bold. The two possible scenarios are depicted.  
    }
    \label{fig:path-2.6}
  \end{center}
\end{figure}

\begin{itemize}
\item[(i)] $\pi_1(\chi)$ is the subarc of $\bd\zeta^-\cap V_e$ connecting $v$ to $u$.
\item[(ii)] $\pi_2(\chi)$ is the subarc of $\bd\zeta^-\cap V^{e'}$ extending from $u$ 
  towards $w$, stopping as soon as it hits either $\bd\sigma_1^+$ or $\bd\sigma_2^+$ 
  at a point $w'$ ($w'$ may or may not be equal to $w$, depending on which of the
  two surfaces $\bd\sigma_1^+$, $\bd\sigma_2^+$ we encounter first; see the two cases in 
  Figure~\ref{fig:path-2.6}). 
\item[(iii)] $\pi_3(\chi)$ extends from $w'$ along its $e$-incident curve, say $\bd\sigma_1^+ \cap V^{e'}$,
  towards $g$, and stops as soon as it hits the second 
  $e$-incident surface $\bd\sigma_2^+$. 
  This terminal point may be either the bottom endpoint of $g$ 
  (see Figure~\ref{fig:path-2.6}(a)), or an earlier (more to the left) intersection point
  (see Figure~\ref{fig:path-2.6}(b)).
\end{itemize}

%As in \cite{SS-97}, we treat each boundary surface as two sided, 
%the first two subpaths $\pi_1(\chi), \pi_2(\chi)$ are drawn on the \emph{bottom} side of 
%$\bd\zeta^-$, and the third subpath $\pi_3(\chi)$ is drawn on the \emph{top} side of the respective surface 
%$\bd\sigma_1^+$ or $\bd\sigma_2^+$ (informally, they are drawn `outside' the respective regions of $\S$). Furthermore, $\pi_1(\chi) \subset \triangle(\chi)$ and $\pi_2(\chi) \circ \pi_3(\chi) \subset \triangle'(\chi)$.

Roughly, $\bd\zeta^-$ crosses one of the surfaces that form $e$ (within $V^{e'}$)
and one of the surfaces that form $e'$ (within $V_e$), at points that lie in the $2k$-neighborhood 
of $g$ in $\A(\S)$ along the respective surface-curtain intersection curves,
and are immediate neighbors of $g$ in $\C(\R)$. 
The path $\pi(\chi)$ walks along $\bd\zeta^-$ within the two respective curtains,
between $g$ and the intersection points $v$ and $w$ (or $w'$) of $\bd\zeta^-$
with these top and bottom surfaces (where, as just said, the bottom surface is incident with $e'$ and the top 
surface with $e$), and $\pi(\chi)$ also contains an additional connection along the $e$-incident top surface.
See Figure~\ref{fig:path-2.6}. Informally, drawing the path only along $\bd\zeta^-$ identifies, by the nature of its
endpoints, one bottom surface (among the pair forming $e'$) and one top surface (among the pair forming
$e$). This does not suffice to identify the edge-crossing $(e,e',g)$. The additional connection allows us to 
identify one more top $e$-incident surface, and thereby identify $\chi$, up to a factor of $k$ (before the pruning) 
or uniquely (after the pruning; again, recall property (R3)).
Note that each of $\pi_1(\chi)  \circ \pi_2(\chi)$ and $\pi_3(\chi)$ is fully contained (except for its endpoints) in a single face of $\C(\R)$. %\esther{Should be ``each of $\pi_1(\chi)$, $\pi_2(\chi)$ and $\pi_3(\chi)$ is fully contained...''?}
%\micha{Perhaps say that we draw the curves `on the outside' of the corresponding surfaces, so $\pi_1$ and $\pi_2$ are drawn on the lower face of the boundary, and $\pi_3$ is drawn on the upper face? Helps to visualize this, but not essential.}
%\esther{I Added the following sentence.}
%\esther{
We also emphasize that $\pi_1(\chi)$ and $\pi_2(\chi)$ are drawn on the ``bottom side'' of $\bd\zeta^-$,
and $\pi_3(\chi)$ is drawn on the ``top side'' of either $\bd\sigma_1^+$ or $\bd\sigma_2^+$. 
Furthermore, $\pi_1(\chi) \subset \triangle(\chi)$ and $\pi_2(\chi) \circ \pi_3(\chi) \subset \triangle'(\chi)$.

%The main technical step in~\cite{SS-97}, which we repeat and adapt here, is an argument that the 
\begin{lemma}
  \label{lem:path-disjoint}
  The paths $\pi(\chi)$, for $\chi\in \CSet_\R^c$,  are pairwise disjoint.
  %taking into account the sidedness of the boundary surfaces on which they are drawn.
\end{lemma}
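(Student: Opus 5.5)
We argue by contradiction: suppose two distinct retained crossings $\chi=(e_1,e'_1,g_1)$ and $\eta=(e_2,e'_2,g_2)$ in $\CSet_\R^c$ have paths that meet at some point $p$. Each path is drawn on a definite side of a definite surface: $\pi_1,\pi_2$ on the bottom side of $\bd\zeta^-$, and $\pi_3$ on the top side of $\bd\sigma_1^+$ or $\bd\sigma_2^+$. Hence a subpath of type $\pi_1\circ\pi_2$ can only meet another subpath of the same type, and likewise a $\pi_3$-subpath only meets another $\pi_3$-subpath; the two types are never compared. Throughout we use that each path lies in $\triangle(\chi)\cup\triangle'(\chi)$, and that, by retention, no boundary of a set of $\R$ other than $\cross(\chi)$ (and possibly one $e$-incident top surface) meets the interiors of these trapezoids.

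\emph{Case A: $p$ lies on $\pi_1\circ\pi_2$ of both paths.} Then $\cross(\chi)=\cross(\eta)=\zeta$, and $p\in\bd\zeta^-$ lies on the same face of $\C(\R)$. The path $\pi_1(\chi)$ runs inside the curtain $V_{e_1}$ over $e_1$, and $\pi_2(\chi)$ runs inside $V^{e'_1}$; similarly for $\eta$. Suppose first that $p$ lies in $V_{e_1}\cap V_{e_2}$. Then $p$ projects to a point of both $e_1$ and $e_2$, and the segment below $p$ down to $\bd\C(\S)$ is crossed in $\R$ only by $\zeta$. Retention then forces $e_1$ and $e_2$ to lie on the same edge of $\C(\R)$, and in fact on the same $x$-monotone curve $\bd\sigma_1^+\cap\bd\sigma_2^+$. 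Continuing $\bd\zeta^-$ leftward from $p$ inside the common curtain, the first crossing with a bottom surface of $\R$ is the same point $v$ for both paths, since no other surface of $\R$ meets $\triangle$. So $\sigma(\chi)=\sigma(\eta)$, and property (P2) gives $\chi=\eta$. The case $p\in V^{e'_1}\cap V^{e'_2}$ is handled symmetrically. Here $e'_1$ and $e'_2$ lie on the same bottom intersection curve. Following $\bd\zeta^-$ from $p$ to the first $e$-incident top surface gives the same $w'$, and following that top surface to the second top surface gives the same terminal point. This identifies $\sigma_1,\sigma_2$, and together with $\zeta$ and the upper pair, $\defn(\chi)=\defn(\eta)$, so (P1) gives $\chi=\eta$. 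For the mixed situation, $p\in\pi_1(\chi)\cap\pi_2(\eta)$, we argue as follows. The point $p$ sits just below $\bd\zeta^-$, above $e_1$ and below $e'_2$. The vertical line through $p$ then contains $g_1$ and $g_2$ in overlapping positions, with $\zeta$ as the only separating set of $\R$. This forces $e_1=e_2$ and $e'_1=e'_2$ as edges of $\A(\R)$, and we reduce to the previous subcases.

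\emph{Case B: $p$ lies on $\pi_3$ of both paths.} The point $p$ lies on the top side of a common $e$-incident top surface, say $\bd\sigma_1^+$, inside $V^{e'_1}\cap V^{e'_2}$. The subpath $\pi_3$ ends at the first crossing with the other $e$-incident surface. Starting from $p$ and walking toward $g$, both paths therefore reach the same $\bd\sigma_2^+$, using retention to exclude interfering surfaces, so the lower pair coincides. Walking the other way, both reach the same $w'$ on $\bd\zeta^-$, so $\zeta$ coincides. The segment above $p$ to $e'$ is crossed in $\R$ only by $\zeta$, which identifies the upper pair. Then (P1) again gives $\chi=\eta$.

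The main obstacle is the mixed subcase of Case A, where one path's $\pi_1$ meets the other's $\pi_2$. There we must rule out that $\pi_1(\chi)$ enters $V^{e'_2}$ without $e_1,e_2$ coinciding. I would first try the monotonicity and visibility structure: inside $\triangle(\chi)\cup\triangle'(\chi)$ the vertical segment through any path point sees exactly the edges of $\chi$ in $\C(\R)$. This would make $(e,e')$ recoverable from any point of the path, giving disjointness directly once the two-sidedness convention is respected.
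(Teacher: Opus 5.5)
Your case split matches the paper's (types (a)--(d)), and your handling of $\pi_1(\chi)\cap\pi_1(\eta)$ via (P2) is valid, and slightly shorter than the paper's argument forcing $g_1=g_2$. The genuine gap is the mixed subcase $p\in\pi_1(\chi)\cap\pi_2(\eta)$, which you yourself leave open.

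Your sketch there rests on the claim that the vertical line through $p$ contains $g_1$ and $g_2$. This is false in general. $\pi_1(\chi)$ runs along $\bd\zeta^-\cap V_{e_1}$ from $v$ to $u_1$, and $\pi_2(\eta)$ runs along $\bd\zeta^-\cap V^{e'_2}$ from $u_2$ to $w'_2$, so a typical common point lies on neither segment. The conclusion $e_1=e_2$, $e'_1=e'_2$ is then only asserted. Your fallback heuristic is also false as stated. From an interior point of $\pi_1(\chi)$, the upward vertical leaves $\triangle(\chi)$ through the face $\bd\sigma(\chi)^-$, not through $e'_1$. From a point of $\pi_2(\chi)$, the downward vertical hits the face $\bd\sigma_1^+$ (or a penetrating $\bd\sigma_2^+$), not $e_1$. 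So $(e,e')$ cannot simply be read off from an arbitrary path point.

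The missing idea is to play the emptiness of the two \emph{different} trapezoids $\triangle(\chi)$ and $\triangle'(\eta)$ against each other. Use the vertical line through $p$, which meets $e_1$ below and $e'_2$ above; this is a crossing $(e_1,e'_2,g^*)$, not necessarily in $\CSet$.

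Above $p$: first check, as the paper does, that the bottom counterpart of a penetrating $e_2$-incident top surface cannot cross or pass above $\pi_2(\eta)$. Then retention in $\triangle'(\eta)$ says the first non-$\zeta$ surface of $\R$ above $p$ is met on $e'_2$. Retention in $\triangle(\chi)$ says it is met on $\bd\sigma(\chi)^-$. The second bottom surface through $e'_2$ would therefore cut into the interior of $\triangle(\chi)$, unless that point is the top endpoint of $g_1$. That exception puts $e'_1$ and $e'_2$ on the same pair of bottom surfaces.

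Below $p$: the point of $e_1$ must lie on an $e_2$-incident top surface. The other top surface through it would cut into $\triangle'(\eta)$ unless it is the other $e_2$-incident surface. Hence $\defn(\chi)=\defn(\eta)$, and (P1) finishes.

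This same asymmetry also affects the $\pi_2\cap\pi_2$ subcase, which you call ``handled symmetrically''. $\triangle'$, unlike $\triangle$, may contain an $e$-incident top surface together with its bottom counterpart, and your argument there never excludes it. Your step ``following $\bd\zeta^-$ to the first $e$-incident top surface gives the same $w'$'' also assumes both $\pi_2$'s run in the same direction. That is not guaranteed, since the certifying point $w$ may lie on either side of $g$. The penetration argument above sidesteps both issues.
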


%\micha{Not clear what Fig. 9(a) is trying to show in the context of type(a) intersections.}
%\esther{The two paths $\pi_1(\chi_1)$, $\pi_1(\chi_2)$ lie on the same curve, but have different endpoints $u_1 \neq u_2$.}

\begin{figure}[htb]
%  \vspace{1in}
  \begin{center}
    \begin{tabular}{ccc}
      \includegraphics[scale=0.5]{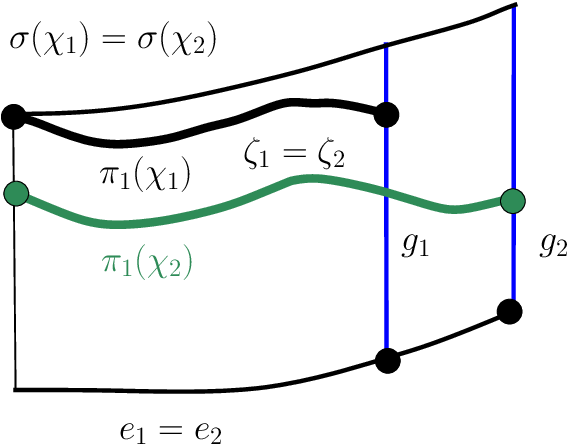}  & \hspace*{1in} & \includegraphics[scale=0.5]{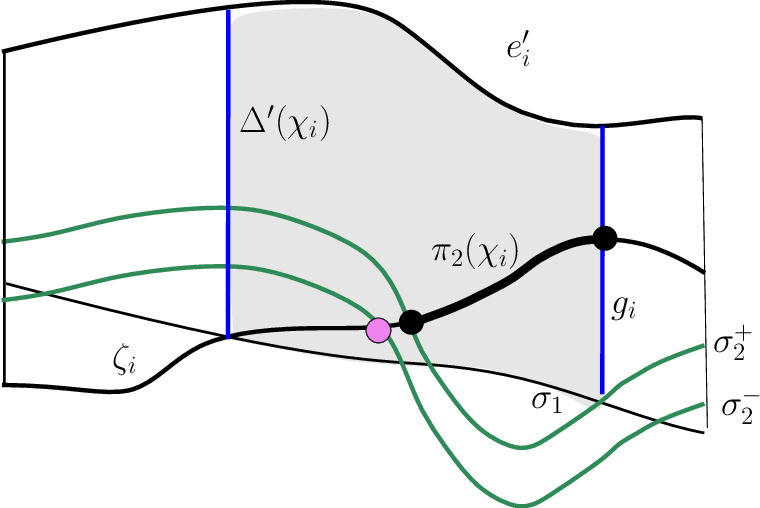}\\
      (a) & \hspace*{1in} & (b)
    \end{tabular}
  \end{center}
  \caption{Scenarios that arise in the analysis of the first two cases of path intersections.
  In (a) $\pi_1(\chi_1)$ and $\pi_1(\chi_2)$ are drawn on the same curve, and we separate them to aid the reader's visibility.}
  \label{fig:casei-ii}
\end{figure}

\begin{proof}
Let $\chi_1 = (e_1,e'_1,g_1)$ and $\chi_2 = (e_2,e'_2,g_2)$ be two distinct vertical crossings in 
$\CSet_\R^c$ (after the pruning), such that $\pi(\chi_1)\cap\pi(\chi_2)\ne\emptyset$. Let $q$ be such an intersection point. 
Let $\zeta_1 = \cross(\chi_1)$ and $\zeta_2 = \cross(\chi_2)$.
The way $\pi(\chi_1)$ and $\pi(\chi_2)$ are drawn,
$q$ can only be of one of the following four types (since $\pi_1(\chi_i), \pi_2(\chi_i)$ are drawn on 
bottom surfaces and $\pi_3(\chi_i)$ on top surfaces, $\pi_1(\chi_1), \pi_2(\chi_1)$ do not intersect $\pi_3(\chi_2)$ and 
	vice-versa):
\begin{description}
\item[(a)]
$q \in \pi_1(\chi_1) \cap \pi_1(\chi_2)$,
\item[(b)]
$q \in \pi_2(\chi_1) \cap \pi_2(\chi_2)$,
\item[(c)]
$q \in \pi_1(\chi_1) \cap \pi_2(\chi_2)$ or, symmetrically,
$q \in \pi_1(\chi_2) \cap \pi_2(\chi_1)$, or
\item[(d)]
$q \in \pi_3(\chi_1) \cap \pi_3(\chi_2)$.
\end{description}

\paragraph{Intersections of type (a).}
       Recall that $\bd\zeta_i^-$ is the only boundary surface that intersects the relative interior of 
       $\triangle(\chi_i)$ for $i=1,2$. It follows that if $q\in \pi_1(\chi_1)\cap \pi_1(\chi_2)$ then $e_1=e_2$, 
       i.e., the two subpaths are drawn on the same curtain; $\zeta_1=\zeta_2$; and the upper edges of
       $\triangle(\chi_1)$ and $\triangle(\chi_2)$ lie on the same bottom surface, namely the bottom surface that lies
       directly above $q$, i.e., $\sigma(\chi_1)=\sigma(\chi_2)$. See Figure~\ref{fig:casei-ii}~(a).
       Thus the two paths $\pi_1(\chi_1)$, $\pi_1(\chi_2)$ lie on the same curve.
         We show next that the endpoints $u_1 = g_1 \cap \zeta_1$ and $u_2 = g_2 \cap \zeta_2$ are equal, from which it will follow that $\chi_1 = \chi_2$.
       If $g_1$ and $g_2$ lie on the same vertical line, then obviously $\chi_1=\chi_2$, so, without 
       loss of generality, assume that $g_1$ lies to the left of $g_2$. By construction, $e'_1$ 
       lies on the intersection curve of $\bd\sigma(\chi_1)^-$ and another bottom surface 
       $\bd\eta_1^-\ne \bd\zeta_1^-=\bd\zeta_2^-$. Then $\bd\eta_1^-$ lies below $\bd\sigma(\chi_1)^-$ 
       immediately to the right of $g_1$ and thus intersects the relative interior of $\triangle(\chi_2)$. 
       This, however, contradicts the fact that $\bd\zeta_2^-$ is the only bottom surface that intersects 
       the relative interior of $\triangle(\chi_2)$. Hence, $g_1=g_2$ and thus $\chi_1=\chi_2$.

\paragraph{Intersections of type (b).}
        For $i=1,2$, $\bd\zeta_i^-$ and possibly one of the surfaces $\bd\sigma^-_1$ or $\bd\sigma^-_2$,
        whose top counterparts are incident on the bottom endpoint of $g_i$, are
	the only bottom boundary surfaces that intersect the relative interior of $\triangle'(\chi_i)$ 
        (in the random sample $\R$). Concretely, if $\triangle'(\chi_i)$ is bounded on its lower side 
        by $\bd\sigma_1^+$, and $\bd\sigma^+_2$ penetrates into $\triangle'(\chi_i)$, as in 
        Figure~\ref{fig:casei-ii}(b), then it is possible for $\bd\sigma_2^-$ to also penetrate 
        $\triangle'(\chi_i)$. But then, as is easily checked, $\bd\sigma_2^-$ cannot meet $\bd\zeta_i^-$ 
        at the portion on which $\pi_2(\chi_i)$ is drawn. Hence, if $\pi_2(\chi_1) \cap \pi_2(\chi_2)$ 
        is non-empty then $\bd\zeta_1^- = \bd\zeta_2^-$ and the intersection point $q$ lies below $e_1'$ 
        and $e_2'$, implying that $e'_1=e'_2$, i.e., $\pi_2(\chi_1)$ and $\pi_2(\chi_2)$ are drawn on 
        the same curtain, and on the same surface $\bd\zeta_1^- = \bd\zeta_2^-$. This in turn implies 
        that $\pi_2(\chi_1)$ and $\pi_2(\chi_2)$ overlap and one of them contains the endpoint of the other. 
	By construction, for any vertical crossing $\chi=(e, e',g)$, $\pi_2(\chi)$ starts vertically above $e$
	and ends as soon as it meets another edge of $\A(\R)$ (namely, the edge formed by the intersection 
        of $\cross(\chi)$ and a top surface incident on $e$; see Figure~\ref{fig:casei-ii}~(b)).
        Thus the relative interior of $\pi_2(\chi)$ does not intersect any edge of $\A(\R)$,
        implying that the left endpoints of $\pi_2(\chi_1)$ and $\pi_2(\chi_2)$ are the same. 
        Furthermore, the only top surfaces that may intersect the relative interior of 
        $\triangle'(\chi)$ below $\cross(\chi)$ are the ones incident on the lower edge $e$ of $\chi$,
	concluding that %the left endpoints of $\pi_2(\chi_1)$ and $\pi_2(\chi_2)$ are the same, and that
        $e_1$ and $e_2$ lie on the intersection curve of the same pair of top surfaces, i.e.,
	$\defn(\chi_1)=\defn(\chi_2)$. This, however, contradicts property (P1) of $\CSet^c$ (after pruning). 

\begin{figure}[htb]
	\begin{center}
	\begin{tabular}{ccc}
	\includegraphics[scale=0.4]{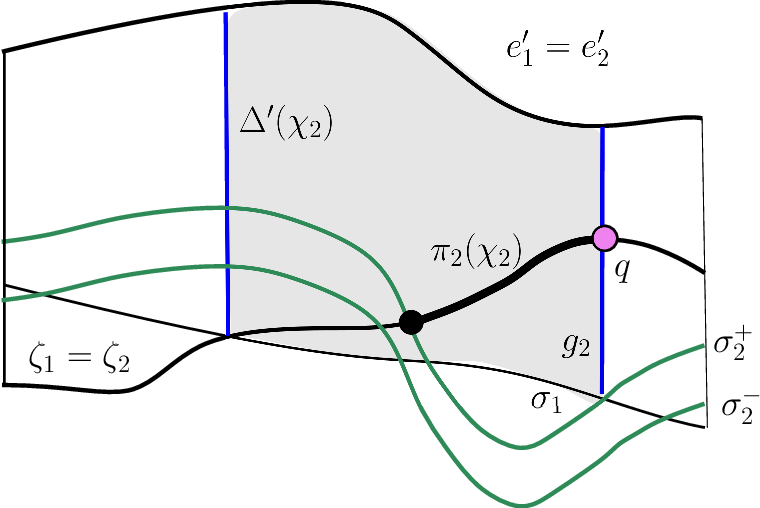} & \hspace*{0.5in}& \includegraphics[scale=0.4]{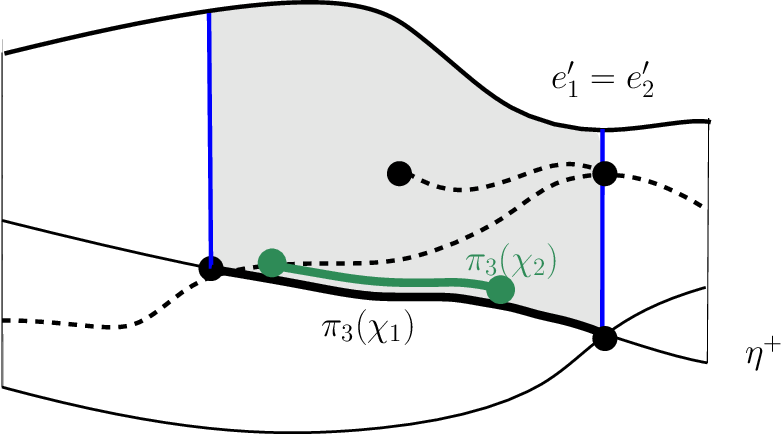}\\
	(a) & & (b)
	\end{tabular}
	\end{center}
	\caption{Path intersections of types (c) and (d).}
	\label{fig:caseiii-iv}
\end{figure}

\paragraph{Intersections of type (c).}
In this case, $q \in V_{e_1} \cap V^{e'_2}$. Thus $q \in g^*$ for some vertical crossing
$\chi^* = (e_1,e'_2,g^*)$ (which is not necessarily a crossing in $\CSet$). Since $\bd\zeta_i^-$ is the only 
bottom surface that intersects the relative interior of $\triangle(\chi_1)$ (in the sample $\R$), $\zeta_1=\zeta_2$.
Moreover, the only bottom surface  that might cross $\triangle'(\chi_2)$ is the bottom counterpart of one 
of the top surfaces that are incident on the bottom endpoint of $g_2$.
However, as argued in case~(b),  that bottom surface cannot cross $\pi_2(\chi_2)$, or pass above it,
implying that $e'_1=e'_2$. See Figure~\ref{fig:caseiii-iv}~(a).
Furthermore, no top surface intersects $\triangle(\chi_1)$ below $\bd\zeta_1^-$, and the only 
top surfaces that may intersect $\triangle'(\chi_2)$ below $\bd\zeta_2^- = \bd\zeta_1^-$ are the surfaces that 
are incident on the lower edge $e_2$ of $\chi_2$. We can conclude that $e_1$ and $e_2$ lie  on the intersection curve
of the same pair of top surfaces,
and thus $\defn(\chi_1)=\defn(\chi_2)$, which, as above, is a contradiction.

\paragraph{Intersections of type (d).}
By construction, $\pi_3(\chi_i)$ lies below an edge of $\A(\R)$ (the one containing $e'_i$), and the bottom 
surface of exactly one set of $\R$, namely $\bd\zeta_i^-$, lies between the path and the edge.
Thus, the existence of a point $q \in \pi_3(\chi_1)\cap\pi_3(\chi_2)$ implies that (i) $e'_1=e'_2$;
(ii) both $\pi_3(\chi_1)$ and $\pi_3(\chi_2)$  lie on the same curtain, namely $V^{e'_1}=V^{e'_2}$; and (iii)
both of these paths are drawn on the same top surface, say, $\bd\eta^+$.
%\esther{Important: Why do we have the same surface (either  $\bd\sigma_1^+$ or $\bd\sigma_2^+$)?  We think the reason is the following: If $\pi_3(\chi_1)$ and $\pi_3(\chi_2)$ were sitting on two different surfaces, $\bd\sigma_1^+$ and $\bd\sigma_2^+$, then the two subpaths could never intersect each other, as by definition we draw $\pi_3$ on $\bd\sigma_2^+$ and stop as soon as we hit $\bd\sigma_1^+$. So they have to sit on the same surface.}
See Figure~\ref{fig:caseiii-iv}~(b).
Since the endpoints of $\pi_3(\chi_i)$ are two consecutive vertices of 
$\A(\{ \bd\sigma^+ \cap V^{e'_i} \mid \sigma \in \R\})$
(i.e., $\pi_3(\chi_i)$ is an edge of this cross-sectional two-dimensional arrangement), the existence of $q$ implies that 
$\pi_3(\chi_1)=\pi_3(\chi_2)$. Since $e_i$ lies on the intersection curve of $\bd\eta^+$ and the top surface that
is incident on the right endpoint of $\pi_3(\chi_i)$,
we conclude that both $e_1$ and $e_2$ lie on the intersection curve of the same pair of top surfaces, and 
thus $\defn(\chi_1)=\defn(\chi_2)$, which again leads to a contradiction.

Hence, we conclude that the relative interiors of any pair of
distinct paths $\pi(\chi_1)$ and $\pi(\chi_2)$ are disjoint, as claimed.
\end{proof}

We thus obtain a system of pairwise-disjoint paths drawn on the boundaries of cells of $\C(\R)$. 
Each path $\pi(\chi)$ is a concatenation of its two subpaths 
$\pi_1(\chi) \circ \pi_2(\chi)$ and $\pi_3(\chi)$, each of which is drawn on a single face of $\bd \C(\R)$ and connects two 
edges of that face. By planarity, the number of subpaths within a single face is proportional to the complexity of 
that face plus the number of digons in the drawing of that graph within that face.
(Technically, a digon is really drawn as a quadrangular region, bounded by two subpaths and by two
portions of boundary arcs of the face that are connected by both subpaths.)
Hence the overall number of subpaths is $O(\psi(\R))$ plus a term proportional to the overall
number of digons. Using property (P2) of $\CSet^c$ and following exactly the same argument
as in \cite{SS-97}, it can be shown that the number of digons is also $O(\psi(\R))$.
We refer to~\cite{SS-97} for details.
Hence, $|\CSet_\R^c| = O(\psi(\R))$. We thus obtain the following (see once again~(\ref{eq:covered_crossings}) and the discussion around it):

%----------------
\begin{lemma}
  \label{lem:covered-crossings}
  The overall number of covered edge-crossings in $\CSet$ is $O(k^6 \EE[\psi(\R)])$, where $\R$ is a random subset of 
  $\S$ obtained by choosing each set of $\S$ independently with probability $1/k$. 
\end{lemma}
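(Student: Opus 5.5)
The plan is to combine the three ingredients already assembled in this subsection: the pruning step, the sampling inequality~(\ref{eq:covered_crossings}), and the planarity bound $|\CSet_\R^c| = O(\psi(\R))$ derived from Lemma~\ref{lem:path-disjoint}. The chain of reductions runs from the original $\CSet^c$, to the pruned set satisfying (P1) and (P2), to the retained subset $\CSet_\R^c$, and then to $\psi(\R)$. Each step loses at most a polynomial factor in $k$, and together they give $k \cdot k^5 = k^6$.

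First, I would quantify the pruning loss. Fix a triple $(\sigma_1,\sigma_2,\sigma_3)$. By (R3), at most $O(k)$ crossings of $\CSet$ have lower edge on $\bd\sigma_1^+\cap\bd\sigma_2^+$ and $\sigma(\chi)=\sigma_3$. Moreover, any fixed defining quadruple determines at most $s=O(1)$ crossings. Keeping one representative per quadruple and then one per triple therefore discards at most an $O(k)$ fraction. Hence $|\CSet^c_{\rm orig}| = O(k)\,|\CSet^c_{\rm pruned}|$.

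Second, I would justify the retention probability. A pruned covered crossing $\chi$ is retained in $\R$ if two things happen. The first is that the at most four sets of $\defn(\chi)$ together with $\cross(\chi)$ are all chosen, which has probability at least $k^{-5}$. The second is that none of the other at most $3k$ sets of $K(\chi)\setminus(\defn(\chi)\cup\{\cross(\chi)\})$ is chosen, which has probability at least $(1-1/k)^{3k}\ge e^{-4}$ for $k\ge 2$. Independence gives $\Pr[\chi\in\CSet_\R^c]\ge a_1/k^5$. Linearity of expectation then yields~(\ref{eq:covered_crossings}).

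Third, I would establish $|\CSet_\R^c|=O(\psi(\R))$ deterministically for every sample $\R$. By Lemma~\ref{lem:path-disjoint}, the paths $\pi(\chi)$ are pairwise disjoint. Each path consists of two subpaths, $\pi_1\circ\pi_2$ and $\pi_3$, and each subpath lies inside a single face of $\bd\C(\R)$ and connects two boundary edges of that face. Within each face the subpaths therefore form a planar system of non-crossing chords. By Euler's formula, the number of pairwise non-homotopic chords is linear in the complexity of the face, so the remaining concern is parallel chords, i.e.\ digons. To bound the digons I would use (P2), following~\cite{SS-97}. Two parallel subpaths connecting the same pair of face edges must encode the same triple (the two $e$-incident top surfaces and $\sigma(\chi)$), so (P2) forces at most one crossing per digon class, up to $O(1)$ ambiguity. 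Summing over faces gives $O(\psi(\R))$.

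Taking expectation, $|\CSet^c_{\rm pruned}| = O(k^5\EE[\psi(\R)])$, and multiplying by the $O(k)$ pruning factor gives $O(k^6\EE[\psi(\R)])$. I expect the main obstacle to be the digon count. One must check that a pair of homotopic subpaths in a face really determines the triple appearing in (P2), despite the extra top-surface penetration allowed in $\triangle'(\chi)$. I would handle this by treating the $\pi_1\circ\pi_2$ chords and the $\pi_3$ chords separately and reading off the surfaces from the chord endpoints.
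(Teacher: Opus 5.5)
Your outline matches the paper's proof step for step. Pruning to (P1)--(P2) costs a factor $O(k)$ by (R3), and a pruned crossing is retained with probability $\Omega(k^{-5})$. The paths are disjoint by Lemma~\ref{lem:path-disjoint}, and per-face planarity plus a digon count based on (P2) gives $|\CSet_\R^c|=O(\psi(\R))$. The paper simply defers that digon count to~\cite{SS-97}. (Minor: $(1-1/k)^{3k}\ge e^{-4}$ needs $k\ge 3$, which is harmless.)

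However, the concrete mechanism you propose for the digons does not work. A $\pi_1\circ\pi_2$ chord lies on a face of $\bd\zeta^-$. It joins an edge on $\bd\sigma(\chi)^-\cap\bd\zeta^-$ to an edge $b$ on $\bd\sigma_i^+\cap\bd\zeta^-$, where $\sigma_i$ is whichever $e$-incident surface is met first. So two parallel such chords share $\zeta$, $\sigma(\chi)$ and $\sigma_i$, but nothing forces them to share the other $e$-incident surface $\sigma_{3-i}$. A $\pi_3$ chord lies on a face of $\bd\sigma_i^+$ and joins $b$ to an edge on $\bd\sigma_i^+\cap\bd\sigma_{3-i}^+$. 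Two parallel such chords share $\sigma_1,\sigma_2,\zeta$, but not $\sigma(\chi)$. Thus neither kind of digon, read off from its own endpoints, yields two crossings with the same triple $(\sigma_1,\sigma_2,\sigma(\chi))$. Treating the two chord types separately, as you plan, therefore cannot invoke (P2).

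The fix is to couple the two faces across the edge $b$ where $\pi_2$ ends and $\pi_3$ begins; both faces bound the same cell of $\C(\R)$ along $b$.
\begin{itemize}
\item Order the transition points $w'$ along $b$.
\item On the $\bd\zeta^-$ side, the $\pi_1\circ\pi_2$ chords ending on $b$ fall into homotopy classes, and each class is a contiguous run along $b$. The reason is that a chord starting between two homotopic chords is trapped in the quadrilateral they bound. The same holds for the $\pi_3$ chords on the $\bd\sigma_i^+$ side.
\item Suppose two consecutive points on $b$ lie in the same run on both sides. Then the two crossings have the same $\sigma(\chi)$ and the same pair $\{\sigma_i,\sigma_{3-i}\}$, contradicting (P2).
\item Hence the number of points on $b$ is at most the number of runs on its two sides.
\end{itemize}
Summing over all edges $b$ bounds $|\CSet_\R^c|$ by the total number of homotopy classes of chords over all faces of $\C(\R)$. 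By Euler's formula this is $O(\psi(\R))$. With this argument replacing your per-face digon count, the rest of your proof yields $O(k^6\EE[\psi(\R)])$ as claimed.
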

%----------------

\paragraph{Putting all the pieces together.}
Combining Lemmas~\ref{lem:uncovered-crossings} and~\ref{lem:covered-crossings}, the total number of vertical crossings in $\CSet$ is
\begin{equation}
  \label{eq:R2}
  |\CSet| = O\left( k^6\EE[\psi(\R)]) + k^3n^2\beta(n) + \frac{\beta^2(n)}{k}\cdot \visib_{4k}(\S) \right) .
\end{equation}

Substituting (\ref{eq:R2}) in (\ref{eq:R}) and using the Clarkson-Shor analysis technique to bound 
$\visib_{4k} (\S)$ by $O(k^4 \EE[\visib_0(\R)])$ (recalling that each element counted in 
$\visib_{4k} (\S)$ is defined by four surfaces), we obtain the final recurrence
\begin{equation}
  \label{eq:final}
  \visib_0(\S) = O\Bigl( k^5\beta(n) \EE[\psi(\R)] + k^2\beta^2(n)(n^2+\psi(\S)) 
+ k^2\beta^3(n)\cdot \EE[\visib_0(\R)] \Bigr) ,
\end{equation}
where $\R$ (in both the first and last terms) is a random subset of $\S$ as above.\footnote{%
  We actually use a separate sample for each part of the analysis, but we might as well use the same sample twice.} 
Using Chernoff's bound, one can show that 
$$
\Pr[|\R| \ge 2n/k] \le \exp(-n/3k).
$$
Since $\psi(n)=O(n^3)$, we obtain $\EE[\psi(\R)] = O(\psi(n/k))$ (since the part of the expectation contributed by samples with
$|\R| > 2n/k$ is negligible). 
For $\psi(n) = \Omega(n^2)$, $\psi(n/k) = O( \psi(n)/k^2)$.
Similarly, we can argue that $\EE[\visib_0(\R)] = O(\visib_0(n/k))$.
Therefore (\ref{eq:final}) can be rewritten as
\begin{equation}
	\label{eq:visib-rec}
	\visib_0(n)  \le c_1 k^3 \beta^2(n) \left( n^2 + \psi(n) \right) + c_2k^2\beta^3(n)\cdot \visib_0(n/k) , 
\end{equation}
where $c_1, c_2>0$ are constants.

We solve (\ref{eq:visib-rec}) in a more explicit manner, before resorting at the end back to our
$O^*(\cdot)$ notation. Specifically, let $\eps>0$ be an arbitrarily small constant.
We choose the parameter $k$, as a function of $n$, such that\footnote{%
  Recall our earlier comment that $k$ needs to be chosen as a non-constant parameter.}
$$k \ge (4c_2\beta^3(n))^{1/\eps}.$$ 
We claim that, with this choice of $k$, the solution of the recurrence in~(\ref{eq:visib-rec}) is 
\begin{equation} 
	\label{eq:recur-sol}
\visib_0(n) \le A k^3 \beta^2(n) (n^2+\psi(n)) \cdot n^\eps ,
\end{equation}
for a sufficiently large constant $A \ge 2c_1$ that also depends on $\eps$. 
If $n\le n_0$ for some constant $n_0$ that depends on $\eps$, 
then (\ref{eq:recur-sol}) holds by choosing $A$ sufficiently large. If $n>n_0$, we use induction, and plug
(\ref{eq:recur-sol}), with $n/k$ replacing $n$, into (\ref{eq:visib-rec}), to obtain 

\begin{align*}
	\visib_0(n)  & \le c_1 k^3 \beta^2(n) \left( n^2 + \psi(n) \right) + 
	c_2k^2\beta^3(n) \cdot A k^3 \beta^2(n/k) \left [ \biggl(\frac{n}{k} \biggr)^2+
	\psi \biggl(\frac{n}{k} \biggr )\right ] \biggl(\frac{n}{k} \biggr )^\eps .
\end{align*}
As already noted, it can be shown, with some care, that
\[ 
\biggl(\frac{n}{k} \biggr)^2+ \psi \biggl(\frac{n}{k} \biggr ) \le
\frac{2}{k^2} \left (n^2+ \psi (n)\right) .
\]
Plugging this into the above inequality, we obtain
\begin{align*}
  \visib_0(n)
  &\le c_1 k^3 \beta^2(n) \left( n^2 + \psi(n) \right) + 
  2c_2\beta^3(n) \cdot Ak^3\beta^2(n) (n^2+\psi(n)) \biggl(\frac nk \biggr )^\eps \\
  & \le A k^3\beta^2(n) (n^2+\psi(n)) n^\eps \left [ \frac{c_1}{An^\eps} + 
    \frac{2c_2\beta^3(n)}{k^\eps} \right ] \\
  & \le A k^3\beta^2(n) (n^2+\psi(n)) n^\eps ,
\end{align*}
where the last inequality follows from the choices of $A$ and $k$.
Hence, returning to our usual notation,
$\visib_0(n) = O^*(n^2+\psi(n))$, which finally completes the proof of Theorem~\ref{thm:main}.
$\Box$

%-----------------------------------------------------------
\subsection{Output-sensitive bound for the vertical decomposition of the entire arrangement}
\label{subsec:output_sensitive}

We next extend the above analysis to obtain an output-sensitive bound on $\VeD{\A}(\S)$, 
the vertical decomposition of the entire $\A(\S)$. In particular, we prove that the complexity 
of $\VeD{\A}(\S)$ is at most $O^*(n^{2})+c\beta(n)^{O(1/\eps)}\cdot\mu $, 
where $\mu := \mu(\S)$ is the number of vertices in $\A(\S)$, $\eps>0$ is  any constant,
$c$ is a suitable constant, 
and $\beta(n)=\lambda_s(n)/n$ is an extremely slowly growing function.
For any subset $\R\subseteq \S$, let $\psi(\R)$ and $\visib_0(\R)$ denote the complexity of $\A(\R)$ and 
the number of vertical crossings of level $0$ (i.e., vertical visibilities) in $\A(\R)$, respectively.
As mentioned in the beginning of the section, 
$\psi(\R)=O(|\R|^2+\mu(\R))$, where $\mu(\R)$ is the number of vertices of $\A(\R)$.
The analysis in the previous subsection can be adapted (in fact, somewhat simplified)  to the vertical decomposition of the 
entire arrangement, but instead of repeating the argument, we reduce the problem of bounding the complexity of 
$\VeD{\A}(\S)$ to the case of the complement of the union, as follows. As above, assume that each face of the
boundary of a region in $\S$ is $xy$-monotone and does not contain any singular point in its relative interior.
Let $\Sigma$ be the resulting set of $xy$-monotone surface patches; $\A(\S)$ is the same as $\A(\Sigma)$,
except for additional features caused by the splitting of surfaces into subsurfaces, whose
number is easy to estimate, and is subsumed by the overall bound that we get.
We treat each surface patch $\sigma$ as an infinitesimally thin region. More specifically, 
for $\delta>0$, let $B_\delta \subset \reals^3$ be the ball of radius $\delta$ centered at the origin. Let
$\sigma_\delta :=\sigma\oplus B_\delta$ be the Minkowski sum of $\sigma$ with $B_\delta$. 
Set $\Sigma_\delta = \{\sigma_\delta \mid \sigma\in \Sigma\}$. We view $\sigma$ as 
$\lim_{\delta\rightarrow 0} \sigma_\delta$. Each full-dimensional
cell of $\A(\Sigma)$ can be identified with a cell of $\lim_{\delta\rightarrow 0} \C(\Sigma_\delta)$. Hence,
the complexity of $\VeD{\A}(\Sigma)$ is bounded by that of $\VeD{\C}(\Sigma_\delta)$, 
for some sufficiently small $\delta$.
(We also recall our general position assumption made at the beginning of this section.)
%\micha{Do we need to assume here general position?}
%\esther{We assume general position at the beginning of Section~\ref{sec:union}, then I assume we still need to assume it here (to have a unique quadruple that defines a vertical crossing).}
We thus use (\ref{eq:final}) to bound $\visib_0(\S)$. 
For $|\S|\ge n_0$, where $n_0$ is a sufficiently large constant,
we obtain the following recurrence for $\visib_0(\S)$:
\begin{equation}
  \label{eq:final-full}
  \visib_0(\S) = O\Bigl(k^5\beta(n)\EE[\psi(\R)] + k^2\beta^2(n)(n^2+\psi(\S)) +
  k^2\beta^3(n)\cdot \EE[\visib_0(\R)] \Bigr) .
\end{equation}
As above, $\R$ is a random subset of $\S$ in which each element of $\S$ is chosen with probability $1/k$,
and $\EE[\cdot]$ denotes expectation with respect to this choice.  Let $n_{\R}$ be the random variable 
that denotes the number of sets in $\R$, let $\mu_{\R}$ denote the random variable that denotes the 
number of vertices in $\A(\R)$, and let $\visib_0(n,\mu)$ denote the maximum number of vertical 
visibilities in $\A(\S)$ with $|\S| \le n$ and $\mu(\S) \le \mu$.
%As in the previous section, we obtain that 
%$$
%\Pr[n_R \ge 2n/k] \le \exp(-n/3k),
%$$ 
%and thus
We now observe that, for $n \ge n_0$,
\begin{equation}
  \label{eq:exp-bound}
  \EE[n_{\R}^{\alpha}] \le 3\left ( \frac{2n}{k}\right)^{\alpha} \;\; \mbox{for $\alpha\le 3$},
  \quad \mbox{and} \quad \EE[\mu_{\R}] \le a_1 \left(\left(\frac{n}{k}\right)^2+\frac{\mu}{k^3} \right) ,
\end{equation}
for some constant $a_1 \ge 1$, where the first inequality follows from the property (implied by
Chernoff's bound, and also stated in the previous section): 
$$
\Pr[n_{\R} \ge 2n/k] \le \exp(-n/3k) .
$$ 
Therefore 
$$
\EE[\psi(\R)] = O(\EE[n_{\R}^2]+\EE[\mu_{\R}]) = O((n/k)^2+\mu/k^3).
$$
Plugging this bound into (\ref{eq:final-full}), we obtain the following recurrence
\begin{equation}
  \label{eq:upper-bound}
  \visib_0(n,\mu) \le c_1 k^3\beta^2(n)(n^2+\mu)+  c_2 k^2\beta^3(n)\cdot \EE[\visib_0(n_{\R},\mu_{\R})],
\end{equation}
where $c_1, c_2>0$ are constants.

As above, let $\eps>0$ be an arbitrarily small constant. We choose 
$$
k := k(n) = a_1(96c_2\beta^3(n))^{1/\eps} ,
$$
and claim that, with this choice,
the solution to the recurrence in~(\ref{eq:upper-bound}) is:
\[
\visib_0 (n,\mu) \le A k(n)^3 \beta^2(n)(n^{2+\eps} + \mu) ,
\]
where $A > 0$ is a sufficiently large constant that depends on $\eps$ and the other parameters. Indeed, 
the case $n\le n_0(\eps)$, for a suitable constant threshold $n_0(\eps)$, is easy.
For larger values of $n$,
using induction and observing that $k(n_{\R}) \le k(n)$ and $\beta(n_{\R}) \le \beta(n)$, we have
\begin{align*}
	\visib_0(n,\mu) & \le  c_1 k^3\beta^2(n)(n^2+\mu)+  c_2 k^2\beta^3(n)\cdot 
			\EE \left [A k^3\beta^2(n)(n_{\R}^{2+\eps}+\mu_{\R}) \right ]\\
	 & \le  c_1 k^3\beta^2(n)(n^2+\mu)+  c_2 k^2\beta^3(n)\cdot 
			A k^3\beta^2(n) \left ( \EE[n_{\R}^{2+\eps}]+ \EE[\mu_{\R}]\right )\\
	 & \le  c_1 k^3\beta^2(n)(n^2+\mu)+  c_2 k^2\beta^3(n)\cdot 
			A k^3\beta^2(n) \left (3\left ( \frac{2n}{k}\right)^{2+\eps} + 
			a_1 \left(\left(\frac{n}{k}\right)^2+\frac{\mu}{k^3} \right)\right )\\
			&\qquad\mbox{(Using (\ref{eq:exp-bound}))}\\
			&\le Ak^3\beta^2(n) \left [ n^{2+\eps} \left (\frac{c_1}{An^\eps}+\frac{a_1c_2\beta^3(n)}{n^\eps}+ 
			\frac{24c_2\beta^3(n)}{k^\eps}\right ) + 
			\mu \left ( \frac{c_1}{A}+\frac{a_1c_2\beta^3(n)}{k} \right ) \right ]. %\\
\end{align*}          
%&\qquad\mbox{(
We choose $\eps\le 1/2$,  $A \ge 2c_1$,
$n_0(\eps)$ sufficiently large so that for $n \ge n_0(\eps)$ we have $n^\eps > 4a_1c_2\beta^3(n)$,
and observe that $k^\eps = 96a_1^{\eps}c_2 \beta^3(n)$. It then follows that the latter expression is at most:
\[
%&\le
Ak^3\beta^2(n) ( n^{2+\eps} + \mu) .
\]
%\end{align*}
This completes the induction step and thus implies:

 %------------------------------------------
 \begin{theorem}
   \label{thm:arrg}
   Let $\S$ be a collection of $n$ constant-complexity semi-algebraic sets
   in $\reals^3$, and let $\mu$ be the number of vertices in $\A(\S)$. Then 
   the complexity of the vertical decomposition of $\A(\S)$ is 
   $O\Bigl(n^{2+\eps} + \beta(n)^{c/\eps}\mu\Bigr)$, for any constant $\eps>0$ and a suitable constant $c> 0$, 
   where $\beta(n)=\lambda_s(n)/n$ is an extremely slowly growing function.
   That is, the complexity of the decomposition is $O^*(n^2+\mu)$.
\end{theorem}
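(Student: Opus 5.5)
The plan is to derive Theorem~\ref{thm:arrg} from the recurrence just established, together with Lemma~\ref{lem:VD-visib} and the thickening reduction to the complement-of-union setting.

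\textbf{Reduction and recurrence.} First split every region boundary into $xy$-monotone patches $\Sigma$ with no singularities in their relative interiors. This adds only $O(n^2+\mu)$ features. Then replace each patch $\sigma$ by the thin region $\sigma_\delta=\sigma\oplus B_\delta$. For $\delta$ small enough, the cells of $\A(\Sigma)$ correspond to cells of $\C(\Sigma_\delta)$, so vertical visibilities in $\A(\S)$ inject into vertical visibilities of $\C(\Sigma_\delta)$. In that setting the charging analysis of Section~\ref{subsec:SS} applies verbatim and yields (\ref{eq:final-full}). Plugging in the Chernoff-based moment bounds (\ref{eq:exp-bound}) gives the recurrence (\ref{eq:upper-bound}) in the two parameters $(n,\mu)$.

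\textbf{Solving the recurrence.} Fix $\eps\le 1/2$ and $k(n)=a_1(96c_2\beta^3(n))^{1/\eps}$. The claim to prove by induction on $n$ is
$\visib_0(n,\mu)\le A\,k(n)^3\beta^2(n)(n^{2+\eps}+\mu)$. The base case $n\le n_0(\eps)$ follows by taking $A$ large. For the inductive step, use the monotonicity $k(n_\R)\le k(n)$ and $\beta(n_\R)\le\beta(n)$, pass the expectation inside, and apply (\ref{eq:exp-bound}) with $\alpha=2+\eps$. The resulting bracket is then controlled term by term:
\begin{itemize}
\item the $n^{2+\eps}$ coefficient is at most $1$, using $A\ge 2c_1$, $n^\eps>4a_1c_2\beta^3(n)$ for $n\ge n_0$, and $k^\eps\ge 96c_2\beta^3(n)$;
\item the $\mu$ coefficient $c_1/A+a_1c_2\beta^3(n)/k$ is at most $1$, since $k\gg a_1c_2\beta^3$.
\end{itemize}
Note that the $\mu$ term gains a factor $k^{-3}$ from sampling, while the recursion multiplier is only $k^2\beta^3(n)$. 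It therefore needs no $n^\eps$ slack, and this is why $\mu$ appears without an $n^\eps$ factor.

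\textbf{Conclusion.} By Lemma~\ref{lem:VD-visib}, $|\VeD{\A}(\S)|=O(\visib_0(\S)+n^2+\mu)$. Since $k(n)^3\beta^2(n)=\beta(n)^{O(1/\eps)}$, the visibility bound becomes
\[
k^3\beta^2 n^{2+\eps}=O\bigl(n^{2+2\eps}\bigr), \qquad k^3\beta^2\mu=\beta(n)^{c/\eps}\mu .
\]
Rescaling $\eps$ then gives $O(n^{2+\eps}+\beta(n)^{c/\eps}\mu)$.

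\textbf{Main obstacle.} The delicate point is the thickening step. One must check that general position survives thickening, so that the tangencies created between a patch's top and bottom boundaries and the new silhouettes do not inflate the vertex count of $\C(\Sigma_\delta)$ beyond $O(n^2+\mu)$. One must also check that the quadruple defining each crossing is preserved in the limit $\delta\to0$. The constants can be handled through the $n_0(\eps)$ threshold.
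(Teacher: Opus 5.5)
Your proposal is correct and follows the paper's proof essentially step for step. It uses the same thickening $\sigma\mapsto\sigma\oplus B_\delta$ to reduce to the complement-of-union setting, and the same recurrence (\ref{eq:upper-bound}) derived from (\ref{eq:final-full}) and the Chernoff-based bounds (\ref{eq:exp-bound}). It also uses the same choice $k(n)=a_1(96c_2\beta^3(n))^{1/\eps}$ and the same inductive claim $\visib_0(n,\mu)\le A\,k(n)^3\beta^2(n)(n^{2+\eps}+\mu)$, and your term-by-term coefficient check (including why the $\mu$ term needs no $n^{\eps}$ slack) matches the paper's computation. Your final steps (passing from visibilities to prisms via Lemma~\ref{lem:VD-visib} and rescaling $\eps$) and the thickening caveats you flag are only implicit in the paper, which simply asserts that for small $\delta$ the cells of $\A(\Sigma)$ correspond to those of $\C(\Sigma_\delta)$.
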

%------------------------------------------

 \noindent{\bf Remark:}
 We note that the output-sensitive bound shown in Theorem~\ref{thm:arrg} carries an $n^{\eps}$-artifact.

%-----------------------------------------------------
\section{Vertical Decomposition in $\reals^4$}
\label{sec:env}

In this section we extend the analysis of the previous section to the case of 
vertical decomposition of lower envelopes and
of the entire arrangement in $\reals^4$. We reduce the problem to bounding the 
total size of the vertical decompositions of many instances of substructures of 3D 
arrangements. Although the complexity of the vertical decomposition of one instance might be large, 
we argue that the total size, summed over all instances, is $O^*(n^3)$ for lower envelopes and 
$O^*(n^4)$ for the entire arrangement. We begin with the case of lower envelopes, which is somewhat simpler.

%--------------------------------------
\subsection{Lower envelopes}
\label{subsec:lower_env}

Let $\F$ be a collection of $n$ trivariate semi-algebraic functions of constant complexity,
let $E = E_\F$ denote the lower envelope of $\F$, let $E^- = E^-_\F$ denote the portion of $\reals^4$ below $E$, 
and let $M = M_\F$ denote the minimization diagram of $E$, namely the projection of $E$ onto the
$xyz$-space. Note that $M$ forms a partition of ${\reals}^3$ into cells of arbitrary complexity (but this partition is not an arrangement of semi-algebraic surfaces).
Our goal is to estimate the combinatorial complexity of the vertical decomposition of $M$.
This three-dimensional decomposition can then be lifted up in the $w$-direction to induce a suitable
decomposition of $E^-$, or a suitable decomposition of $E$ itself; we refer to either of these 
decompositions as the vertical decomposition of $E$. We note that the complexity of (the undecomposed) 
$E$ and of $M$ is $O^*(n^3)$~\cite{SA}. The main result of this subsection yields the same asymptotic 
bound, up to the $O^*(\cdot)$ notation, for their vertical decomposition:
%------------------------------------------
\begin{theorem}
  \label{thm:env4d}
  The complexity of the vertical decomposition of the lower envelope (that is, of the minimization diagram) 
  of a collection of $n$ trivariate semi-algebraic functions of constant complexity is $O^*(n^3)$.
\end{theorem}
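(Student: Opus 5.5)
Our plan is to reduce the four-dimensional problem to a family of three-dimensional substructures, following the reduction announced at the beginning of this section, and then to transfer the charging scheme of Section~\ref{subsec:SS} to this family. By Lemma~\ref{lem:VD-visib}, applied within the minimization diagram $M$ in $xyz$-space, the size of $\VeD{M}$ is proportional to $|M| = O^*(n^3)$ plus the number of vertical visibilities in the cells of $M$. Each such visibility is a $z$-vertical segment $g$ contained in a single cell of $M$, hence in the region where one fixed function $F\in\F$ attains the envelope. It connects two edges of $M$, each of which is the projection of an edge of $E$ defined by three functions (or fewer, for silhouette or singular features).

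For each $F\in\F$ we let $M_F\subset\reals^3$ be the region where $F$ attains the envelope. This is the complement of the union of the $n-1$ semi-algebraic regions $K_{G}=\{\xi\mid G(\xi)<F(\xi)\}$, for $G\ne F$. The cells of $M$ are exactly the cells of these complements $\C(\{K_G\})$, so the visibilities of $M$ are the visibilities of these $n$ three-dimensional instances. Applying Theorem~\ref{thm:main} to each instance separately only gives $O^*(n^2+\psi_F)$ per instance, and $\sum_F \psi_F = O^*(n^3)$. That would give $O^*(n^3)$ in total, except that the additive $n^2$ term summed over $F$ is $n^3$. This is fine in total, but Theorem~\ref{thm:main} really depends on the \emph{maximum} union complexity $\psi(m)$ over subsets, and a single $F$ may have $M_F$ of complexity $\Theta(n^3)$ for some subsets. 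Hence we cannot invoke Theorem~\ref{thm:main} as a black box.

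Instead, we will redo the recurrence globally, summing over all $F$ simultaneously. We will run the two-stage charging of Section~\ref{subsec:SS} inside each instance. Every vertical crossing of level $\le j$ in instance $F$ is now defined by at most five functions ($F$ plus four others), and its level counts functions $G$ with $G<F$ somewhere along $g$. The ingredients are the following.
\begin{itemize}
\item Preprocessing splitting costs $O(kn^2\beta(n))$ per instance, hence $O(kn^3\beta(n))$ overall.
\item The first stage gives inequality (\ref{eq:R}) per instance.
\item The bound for uncovered crossings per edge $e'$ is local, so it summed over edges of the $(\le k)$-level of $E$ in $\reals^4$. By Clarkson--Shor, with vertices of $E$ defined by four functions, that level has complexity $O(k^4\,\EE[|E_\R|]) = O^*(k\,n^3)$.
\item The planarity argument for covered crossings works on the boundary of each $M_F(\R)$. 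Summing over $F$, it is bounded by $O(\sum_F \psi_F(\R)) = O(|M_\R|)$, where $\R$ is the random sample.
\end{itemize}
Defining $\visib(\F)$ as the total number of visibilities over all instances, we expect the recurrence
\[
\visib(n) \le c_1 k^{O(1)}\beta^{O(1)}(n)\, n^{3+\eps'} + c_2 k^{3}\beta^3(n)\,\visib(n/k).
\]
Here the exponent on the recursive term increases by one because crossings of level $\le 4k$ are now defined by five objects, which is matched by the $n^3$ scaling: $(n/k)^3\cdot k^3 = n^3$. We would then solve it exactly as in the proof of Theorem~\ref{thm:main}, with $k$ a large function of $\beta(n)$ and $\eps$, to get $\visib(n)=O^*(n^3)$.

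The main obstacle is checking that the recursive term's exponent of $k$ is strictly less than three times the scaling factor. The sample probability loss for the fifth defining function ($F$ itself) must be compensated by the $n^3$ rather than $n^2$ growth of the target bound. If the exponent ends up as $k^3$ against $k^{-3}$, we get only the borderline $k^{\eps}$ slack, which the choice $k\ge(4c_2\beta^3)^{1/\eps}$ absorbs. A secondary difficulty is ensuring that the planarity and digon argument for covered crossings still yields $O(|M_\R|)$ when paths live on the boundaries of different $M_F$. Since these regions are distinct for distinct $F$ and the pruning properties (P1), (P2) are imposed per $F$, we expect this to go through unchanged.
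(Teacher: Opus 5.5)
Your proposal is correct and follows essentially the same route as the paper. You reduce to the $n$ complement-of-union instances, one per function, run the first charging stage and the uncovered-crossing analysis per instance and sum using $|M|=O^*(n^3)$, and handle covered crossings through one global random sample. Clarkson--Shor with five defining functions then gives $\visib_0(n)=O^*(k^3n^3)+O(k^3\beta^3(n)\visib_0(n/k))$, which you solve exactly as in \secref{union}. The one point you leave implicit is that the identity $\sum_F\psi_F(\R)=|M_\R|$ requires $F$ itself to lie in the sample. So the retention probability is $\Omega(1/k^6)$ (six functions), and you need the monotonicity $\C_F(\R)\supseteq\C_F(\F)$ for $F\in\R$. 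This is how the paper obtains its covered-crossing bound of $O^*(k^4n^3)$.
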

%------------------------------------------

\noindent{\bf Proof.}
We assume that the functions of $\F$ are in general position, continuous and totally defined.
None of these assumptions are essential, and well known standard techniques can be used to handle 
setups where these assumptions fail, but these assumptions simplify the analysis. 
We identify each function of $\F$ with its three-dimensional graph. 
We recall the way in which the vertical decomposition $\VeD{M}$ of $M$ is constructed. 
We fix a function $a$ in $\F$. For each function $b\in\F\setminus\{a\}$, 
we use $\sigma_{ab} = \sigma_{ba}$ to denote the $xyz$-projection of the two-dimensional intersection 
surface $a\cap b$. The surface $\sigma_{ab}$ partitions the $xyz$-space into the regions $\sigma_{ab}^+$ and
$\sigma_{ab}^-$, where $\sigma_{ab}^+$ (resp., $\sigma_{ab}^-$) consists or those points $(x,y,z)$
for which $a(x,y,z) \ge b(x,y,z)$ (resp., $a(x,y,z) \le b(x,y,z)$). We observe that the complement
$\C_a$ of the union $\U_a := \bigcup \left\{ \sigma_{ab}^+ \mid b \in\F\setminus\{a\} \right\}$ 
(which is the same as the intersection $\bigcap \left\{ \sigma_{ab}^- \mid b \in\F\setminus\{a\} \right\}$)
is precisely the portion of the $xyz$-space over which $a$ attains the envelope $E$. We denote the collection
$\left\{ \sigma_{ab}^+ \mid b \in\F\setminus\{a\} \right\}$ as $\Sigma^+_a$.

We now construct the three-dimensional vertical decomposition $\VeD{\C_a}$ of $\C_a$, and repeat this construction 
to each complement $\C_a$, over $a\in\F$, observing that the regions $\C_a$ are pairwise openly disjoint. The union 
of all these decompositions yields the vertical decomposition of $M_\F$, and, as mentioned above, the vertical
decomposition of $E_\F$ is obtained by lifting this decomposition to $E_\F$ (or to $E_\F^-$, see above), in 
a straightforward manner. Concretely, for the decomposition of $E_\F^-$, each cell $\tau$ in the decomposition of $M$ is lifted to the semi-unbounded region
\[
\{ (x,y,z,w) \mid (x,y,z)\in\tau \;\text{and}\; w\le E(x,y,z) = f_\tau(x,y,z) \} ,
\]
where $f_\tau$ is the unique function of $\F$ that attains $E$ over $\tau$. (For the decomposition of $E_\F$,
replace the inequality in the above expression by an equality.) We have thus (almost) reduced the 
problem to the problem studied in Section~\ref{sec:union}. The difference is that there we had a
uniform bound on the complexity of the union of any subcollection of at most $m$ of the given regions, 
which was assumed to be $O^*(m^2)$. Here, in contrast, this no longer holds. That is, considering 
the entire collection $\F$, and denoting by $N_a$ the complexity of $\U_a$, all we know is that 
$\sum_a N_a = O^*(n^3)$, so we have the bound $O^*(n^2)$ only for the \emph{average} value of $N_a$. As the analysis
below shows, this is not a major issue, but it means that we cannot use the analysis of Section~\ref{sec:union}
as a black box. (Informally, this is because we have no control on this complexity for subsets of regions,
if we construct each sample independently.)
Instead, we apply it in a somewhat modified manner, which we spell out next.

We apply, within each $\C_a$, a suitably adjusted version of the analysis of Section~\ref{sec:union}, making occasional 
references, as needed, to the original analysis of \cite{SS-97}, and to the analysis of Section~\ref{sec:union}.

The first charging step constructs, for each $a\in\F$, a set $\CSet_a$ of $z$-vertical edge-crossings,
with respect to $\C_a$, using the same machinery as in Section~\ref{sec:union} (and in \cite{SS-97}).
Referring to~(\ref{eq:R}), we obtain the modified estimate (where here we denote by $\visib_0(\Sigma_a^+)$ the number of $z$-vertical visibilities within $\C_a$):
\begin{equation}
  \label{eq:Ra4d}
 % N_0(\Sigma_a^+)
  \visib_0(\Sigma_a^+)
  \le  \frac{c \beta(n)}{k} |\CSet_a| +
  O^*\left( k\beta(k) \left( |\C_a| + k\beta(n/k)n^2 
  \right) \right) ,
\end{equation}
where $c > 0$ is an appropriate constant.
As before, the overhead expression bounds the number of $z$-vertical visibilities that involve the right portions 
of the edges of $\C_a$, where the first term accounts for the original unsplit edges, and the second term accounts for
the number of splits.
Summing up the inequalities (\ref{eq:Ra4d}) over all $a\in\F$, we get (recalling that $|M| = O^*(n^3)$):
\begin{equation}
  \label{eq:R4d}
  \begin{aligned}
  %N_0(\F)
    \visib_0(\F) & \le
    \frac{c\beta(n)}{k} |\CSet| 
    + O^*\left( k\beta(k) \left( |M| + k\beta(n/k)n^3 
    \right) \right) \\
    & =
    \frac{c\beta(n)}{k} |\CSet| 
    + O^*\left( k^2 \beta^2(n) n^3 \right) ,
  \end{aligned}
\end{equation}
where $\visib_0(\F)$ is the overall number of $z$-vertical visibilities in $M$, and $\CSet = \bigcup_{a\in\F} \CSet_a$.

The second charging step is governed by 
Lemmas~\ref{lem:uncovered-crossings} and~\ref{lem:covered-crossings}
of Section~\ref{sec:union} 
(which are analogous to Lemmas 2.5 and 2.6 in \cite{SS-97}),
which bound, respectively, the number of uncovered and covered edge-crossings. The proof of \lemref{uncovered-crossings} 
is applied, roughly unchanged, separately for each $a\in\F$, with respect to the complement $\C_a$ and the 
collection $\Sigma_a^+$, with a suitable interpretation of the edges $e'$ and the quantities $t_{e'}$.

\lemref{covered-crossings}, in our new setting, requires some further discussion. Recall that the analysis in its proof 
takes place with respect to a random sample, call it $K_a$, of $n/k$ surfaces of $\Sigma_a^+$. Rather than doing
it for each $a$ separately, we take a single `global' random sample $K$ of $n/k$ surfaces of $\F$.
Consider some specific $z$-vertical covered edge-crossing $(e,e',g)$, with respect to some function $a\in\F$
and its associated structures $\Sigma_a^+$ and $\C_a$, with a suitable surface $\zeta$ 
that crosses $g$.
Then there are five additional functions $f_e^1$, $f_e^2$, $f_{e'}^3$, $f_{e'}^4$, $f_\zeta$ in $\F$, such that
$e$ is a portion of $a\cap f_e^1\cap f_e^2$, $e'$ is a portion of $a\cap f_{e'}^3\cap f_{e'}^4$, and
$\zeta$ is the surface $a\cap f_\zeta$. Then, for $(e,e',g)$ and $\zeta$ to show up in the sample $K$, as a level-1 
vertical edge-crossing, with $\zeta$ as the only `interfering' surface, we need to choose the six functions 
$a$, $f_e^1$, $f_e^2$, $f_{e'}^3$, $f_{e'}^4$, and $f_\zeta$ in $K$, and not choose any of the other $O(k)$ surfaces
whose intersections with $a$ either cross $g$, or cross the portion of the bottom surface, within $V^{e'}$, between 
$g$ and its intersection with $\zeta$ 
(see Figure~\ref{fig:path-2.6}).
The probability for this to happen is at least $c/k^6$, for some suitable constant $c$.
Another useful feature, which we use in the analysis, is that the region $\C_a$ expands in the sample $K$ (i.e.,
the region $\C_a$ in the sample, denoted $\C_a(K)$, contains the region $\C_a$ in the full set $\F$, 
denoted $\C_a(\F)$; this is the monotonicity property observed in Section~\ref{sec:union}), 
provided that $a$ is chosen in $K$. This guarantees that $(e,e',g)$ with $\zeta$ 
(where $\zeta$ is the only `interfering' surface in the sample) does show up as a level-1 
vertical edge-crossing in $\C_a(K)$, once the relevant six surfaces have been chosen in $K$ 
(and the other interfering surfaces have not been chosen).

The proof of the property then proceeds as before, using path drawing and a planarity argument, to bound the number 
of these covered vertical edge-crossings of level 1 in $\C_a(K)$. As before, the bound is linear in the 
complexity $|\C_a(K)|$. Summing this over all the surfaces $a$ that have been chosen in $K$, and taking the
expectation of this sum (with respect to the choice of $K$, using Chernoff's bound to get rid of large deviations),
we get an overall bound, for $K$, of $O^*((n/k)^3)$.
Multiplying by $k^6$, to move from $K$ to the entire collection $\F$, (and replacing $n/k$ by $n$
\cite{CS89}) and by another factor of $k$, the same factor that arises in the original analysis 
in \cite{SS-97} and has been briefly explained in Section~\ref{sec:union}, we get an overall bound of $O^*(k^4n^3)$.

The final inequalities and recurrences are now
\[
|\CSet| = O^*\left( k^4n^3 \right) + O\left( \frac{\beta^2(n)}{k} \visib_{4k}(n) \right) ,
\]
and, from~(\ref{eq:R4d}),
\[
\visib_0(n) \le \frac{\beta(n)}{k} |\CSet| + O^*\left( k^3 n^3\beta^2(n)\right) ,
\]
so
\[
\visib_0(n) = O^*\left(k^3n^3\right) + O\left(k^3\beta^3(n) \visib_0(n/k)\right) ,
\]
which follows from the Clarkson-Shor bound $\visib_{4k}(n) = O\left( k^5 \visib_0\left(\frac{n}{k} \right)\right)$, because 
one needs five functions to define a vertical visibility.
With a suitable choice of $k$, and arguing similarly to solutions of earlier recurrences, 
the final recurrence solves to $\visib_0(n) = O^*(n^3)$, which completes the proof of Theorem~\ref{thm:env4d}.

%------------------------------------------------
\subsection{Entire arrangement}
\label{subsec:4darr}

Let $\S$ be a collection of $n$ semi-algebraic surfaces in $\reals^4$ of constant complexity. For simplicity, assume 
that the surfaces of $\S$ are graphs of totally defined continuous semi-algebraic functions of constant 
complexity, and that they are in general position (with some additional effort, these assumptions can 
be dropped, as above). Koltun~\cite{Koltun-04a} has shown that the complexity of the vertical decomposition 
of $\A(\S)$ is $O^*(n^4)$, but his proof is somewhat involved. In this subsection we give a 
considerably simpler proof, which is based on the machinery developed earlier in the paper.

The basic approach to analyzing the complexity of vertical decompositions in higher dimension uses the following
dimension-reducing strategy, which we review (and adapt) only for $d=4$ dimensions. We fix a pair of surfaces $\sigma_1$,
$\sigma_2\in\S$, and form the intersection surfaces $\sigma\cap\sigma_1$ and $\sigma\cap\sigma_2$, 
for all $\sigma\in\S\setminus\{\sigma_1,\sigma_2\}$, including also $\sigma_1\cap\sigma_2$. We then project these $2n-1$
two-dimensional surfaces onto the $xyz$-space, denote the resulting set of surfaces as $\S_{\sigma_1,\sigma_2}$,
and form the three-dimensional vertical decomposition of their arrangement $\A(\S_{\sigma_1,\sigma_2})$.
We then lift each of the prisms of this decomposition to a four-dimensional prism in $\reals^4$, and clip it to its portion between $\sigma_1$ and $\sigma_2$, keeping only the prisms that are not crossed by any other surface of $\S$.

In a na\"ive implementation of this technique, the number of resulting prisms is $O(n^2)$ (the number of pairs 
$(\sigma_1,\sigma_2)$) times $O^*(n^3)$ (the complexity of each three-dimensional decomposition), namely $O^*(n^5)$.
We improve this bound to $O^*(n^4)$, as follows.\footnote{%
  The weakness of the na\"ive approach is that it does not count only the `empty' lifted prisms,
  namely those not crossed by any surface. Our improved analysis does take this issue into account.}

For each surface $\sigma\in\S\setminus\{\sigma_1,\sigma_2\}$, put 
\[
K_\sigma := \{(x,y,z)\in\reals^3 \mid \sigma_1(x,y,z)\le \sigma(x,y,z) \le \sigma_2(x,y,z) \}.
\]
(Here we only consider the portion of 3-space over which $\sigma_1$ lies below $\sigma_2$.)
Let $\Sigma_{\sigma_1, \sigma_2} := \{ K_\sigma \mid \sigma \in \S \setminus \{\sigma_1,\sigma_2 \}$.
It suffices to construct the vertical decomposition of the complement $\C_{\sigma_1,\sigma_2}$ of
$\bigcup_{\sigma \in \Sigma_{\sigma_1, \sigma_2} } K_\sigma$,
because any point $(x,y,z)$ in the union has an ``interfering'' surface $\sigma$ between
$\sigma_1(x,y,z)$ and $\sigma_2(x,y,z)$.
We then lift each resulting three-dimensional prism $\tau_0$ to the four-dimensional prism
\[
\tau =  \{(x,y,z,w)\in\reals^4 \mid (x,y,z)\in\tau_0 \mbox{ and } \sigma_1(x,y,z)\le w \le \sigma_2(x,y,z) \}.
\]
By construction, the interior of each such prism $\tau$ is not crossed by any surface in $\S$.
These prisms, over all (ordered) choices of $\sigma_1$, $\sigma_2$, are pairwise openly disjoint,
and their union is the entire 4-space. To bound the number of prisms, we proceed as follows.

It suffices to bound the overall number of three-dimensional prisms in the various projections into 3-space.
The analysis in the preceding sections suggests that we begin by bounding the overall complexity of the 
(undecomposed) regions $\C_{\sigma_1,\sigma_2}$ (the complements of the respective unions), 
over all pairs of surfaces $\sigma_1$, $\sigma_2\in\S$.

To bound this complexity, we count the number of vertices in the projected arrangements. Fix $\sigma_1$, $\sigma_2$
as before. Call each projected intersection 2-surface red (resp., blue) if it is the projection of an intersection 
of some surface $\sigma$ with $\sigma_1$ (resp., $\sigma_2$). A vertex of $\A(\S_{\sigma_1,\sigma_2})$ is an intersection 
of either three red surfaces, or three blue surfaces, or two red surfaces and one blue surface, or two blue 
surfaces and one red surface; we refer to these vertices respectively as RRR-vertices, BBB-vertices, RRB-vertices,
and RBB-vertices.

The number of RRR-vertices and BBB-vertices is easy to bound: Each RRR-vertex, say, is a vertex of the three-dimensional
arrangement, within $\sigma_1$, of the intersection 2-surfaces of the form $\sigma_1\cap\sigma$. Their number is therefore
$O(n^3)$, for each $\sigma_1$. By the same argument, the number of BBB-vertices is also $O(n^3)$, for each $\sigma_2$.
Summing over all surfaces, the number of these vertices is $O(n^4)$. 

It is important to notice the following property. Each such vertex $v$, say an RRR-vertex, is independent of $\sigma_2$, 
but, fortunately, $v$ can arise as an RRR-vertex for at most one surface $\sigma_2$. This is because $v$ is a vertex of
$\C_{\sigma_1,\sigma_2}$, which means that the upward $w$-vertical segment erected from $v$ to $\sigma_2$ does not 
intersect any surface of $\S$ before it reaches $\sigma_2$. Hence $v$ arises only once as an RRR-vertex (when
we pair $\sigma_1$ with $\sigma_2$). Similarly, it can arise  at most once as a BBB-vertex. That is, these vertices
contribute a total of $O(n^4)$ to the overall complexity of the regions $\C_{\sigma_1,\sigma_2}$.

Consider then RRB-vertices. Let $v$ be such a vertex, formed by the intersection of two red surfaces, which are
projections of two respective 2-surfaces $\sigma_1\cap\sigma$ and $\sigma_1\cap\sigma'$, and of one blue surface, which 
is the projection of a 2-surface $\sigma_2\cap\sigma''$. Interpreting this back in 4-space, $v$ corresponds to a
vertical visibility between the curve $\gamma = \sigma_1\cap\sigma\cap\sigma'$ and the 2-surface
$\sigma_2\cap\sigma''$. The number of such vertical visibilities, for each curve $\gamma$, is $O^*(n)$, as 
each of them corresponds to a vertex of the lower envelope of the cross-sections of the surfaces of $\S$ within the 
upward vertical curtain erected from $\gamma$. There are in total $O(n^3)$ such curves $\gamma$ (over all choices 
of $\sigma_1$), so the overall number of these vertical visibilities is $O^*(n^4)$. By a fully symmetric argument,
the overall number of RBB-vertices is also $O^*(n^4)$.

That is, the overall complexity of the (undecomposed) regions $\C_{\sigma_1,\sigma_2}$ 
(the complements of the respective unions), over all pairs of surfaces $\sigma_1$, $\sigma_2\in\S$, is $O^*(n^4)$.

We now apply the analysis of the previous sections in this context.
As in the case of lower envelopes in $\reals^4$, reviewed in \subsecref{lower_env}, 
we do not have any sharp bound on the complexity of individual complements $\C_{\sigma_1,\sigma_2}$, 
but only on their overall complexity. We therefore proceed as before, using the adapted machinery of 
\cite{SS-97} (and Section~\ref{sec:union}). This technique charges
vertical visibilities in $\C_{\sigma_1,\sigma_2}$, in the corresponding three-dimensional arrangements
$\A(\S_{\sigma_1,\sigma_2})$, to features of the undecomposed complements. It is also based on random sampling, in which
each 2-surface of the form $\sigma_1\cap\sigma$ or $\sigma_2\cap\sigma'$ is chosen with probability $1/k$, 
for a suitable parameter $k$. To handle these samplings, we take, as in the case of lower envelopes in $\reals^4$,
one global random sample $\R\subset\S$, where each surface is chosen with probability $1/k$. The expected overall complexity
of the undecomposed complements $\C_{\sigma_1,\sigma_2}$, over all pairs $\sigma_1$, $\sigma_2$ that are 
both chosen in $\R$, is $O^*(n^4/k^4)$. Indeed, each vertex of any complement $\C_{\sigma_1,\sigma_2}$ 
is defined by at most four or five surfaces, where there are four defining surfaces when all of them
involve intersections with $\sigma_1$ or all with $\sigma_2$, and five when one surface intersects $\sigma_1$
and two intersect $\sigma_2$ or the other way around. In the former case the bound $O(n^4)$ is immediate,
and in the latter case (five defining surfaces) we argue using lower or upper envelopes within the two-dimensional
curtain erected from a curve along $\sigma_1$ or $\sigma_2$, as appropriate;
we also need to apply Chernoff's bound to get rid of large deviations.

Similarly to the analysis in \subsecref{lower_env} and the derivation of~(\ref{eq:Ra4d}),
we derive the following recurrence:
\begin{equation}
  \label{eq:vcross_arr4d_ab}
  \visib_0(\Sigma_{\sigma_1, \sigma_2})
  \le  \frac{c \beta(n)}{k} |\CSet_{\sigma_1, \sigma_2}| +
  O^*\Bigl( k\beta(k) \left( |\C_{\sigma_1,\sigma_2}| + k\beta(n/k)n^2 
  \right) \Bigr) ,
\end{equation}
where $\visib_0(\Sigma_{\sigma_1, \sigma_2})$ is the number of $z$-vertical visibilities in $\C_{\sigma_1,\sigma_2}$,
$c > 0$ is an appropriate constant, and $\CSet_{\sigma_1, \sigma_2}$ is a set of $z$-vertical edge-crossings,
with respect to $\C_{\sigma_1,\sigma_2}$ (constructed as in Section~\ref{sec:union}). 
Then, summing up over all pairs $\sigma_1, \sigma_2 \in \S$, we obtain (similarly to~(\ref{eq:R4d})):
\begin{equation}
  \label{eq:vcross_arr4d}
  \begin{aligned}
  \visib_0(\S)  & \le
  \frac{c\beta(n)}{k} |\CSet|  + O^*\left( k\beta(k) \left( |\C| + k\beta(n/k)n^4 
  \right) \right) \\
  & =
  \frac{c\beta(n)}{k} |\CSet| 
  + O^*\left( k^2 \beta^2(n) n^4 \right) ,
  \end{aligned}
\end{equation} 
where $\visib_0(\S)$ is the overall number of $z$-vertical visibilities in $\C = \bigcup_{\sigma_1, \sigma_2 \in \S}
\C_{\sigma_1,\sigma_2}$, and $\CSet = \bigcup_{\sigma_1, \sigma_2 \in \S} \CSet_{\sigma_1, \sigma_2}$. 
As argued above, $|\C| = O^*(n^4)$.

As in \subsecref{lower_env}, we need to apply Lemmas~\ref{lem:uncovered-crossings} and~\ref{lem:covered-crossings}
in order to bound $|\CSet|$. In particular, we need to construct a global random sample $\R\subset\S$, 
as in \subsecref{lower_env}, and argue that the expected overall complexity of the undecomposed complements is 
$O^*(n^4/k^4)$. In order to move from $\R$
to the entire collection $\S$, we need to multiply by a factor of $k^6$, and by another factor of $k$ (see once again 
\subsecref{lower_env} for these details), so we therefore obtain an overall bound of $O^*(k^3n^4)$. This bound is
integrated into the corresponding term in Lemma~\ref{lem:uncovered-crossings}, from which we obtain:
\[
|\CSet| = O^*\left( k^3n^4 \right) + O\left( \frac{\beta^2(n)}{k} \visib_{4k}(n) \right) ,
\]
where $\visib_{4k}(n)$ is the overall number of vertical edge-crossings in the three-dimensional 
arrangements, which are at level at most $4k$. Using~(\ref{eq:vcross_arr4d}) we obtain:
\[
\visib_0(n) \le
\frac{c\beta(n)}{k} |\CSet| + O^*\left( k^2 \beta^2(n) n^4 \right) =
O^*\left(k^2 n^4  +  \frac{1}{k^2} \visib_{4k}(n) \right)  .
\]
We next use the Clarkson-Shor technique, which bounds this expression by
\[
O^*\left( k^2 n^4  + \frac{1}{k^2} k^6 \visib_0(n/k) \right) =
O^*( k^2n^4 + k^4 \visib_0(n/k) ) , 
\]
because each vertical visibility (or vertical edge-crossing) is defined by up to six surfaces, namely the two surfaces 
$\sigma_1$, $\sigma_2$ that define the three-dimensional projected arrangement, and by four additional 2-surfaces
in that arrangement, each one of the forms $\sigma_1\cap\sigma$ or $\sigma_2\cap\sigma'$, for
$\sigma$, $\sigma'\in\S$. That is, we obtain a recurrence that, with a suitable sufficiently large value of $k$, 
solves to $O^*(n^4)$, as is easily verified by induction on $n$ (similarly to treatment of earlier recurrences).
We have thus re-established Koltun's bound~\cite{Koltun-04a}, by a considerably simpler argument, based on 
a suitable adaptation of  the machinery developed earlier in this paper. 

\paragraph{Remark.}
It would be tempting to try to extend this approach to vertical decompositions in five dimensions. However, 
even if everything else goes well (a big if), the final recurrence would have the leading term
\[
O^*\left( \frac{1}{k^2} \visib_{4k}(n) \right) =
O^*\left( \frac{1}{k^2} k^8 \visib_0(n/k) \right) =
O^*( k^6 \visib_0(n/k) ) , 
\]
because now a vertical visibility is defined by up to eight surfaces. With such a leading term, the solution would 
have been $O^*(n^6)$, which does not improve the already known bound in five dimensions. Improving this bound is thus still
a major open problem.

%--------------------------------------------------
\section{Constructing Vertical Decompositions and Cuttings, and Point-Enclosure Reporting Queries}
\label{sec:pt-encl}

In this section we first describe an algorithm for computing the vertical decomposition 
and then use it for computing $(1/r)$-cuttings and for constructing a data structure to 
answer point-enclosure reporting queries. For the sake of concreteness, we focus on computing the 
free space (the complement of the union) of a set of geometric (constant-complexity semi-algebraic)
objects in $\reals^3$. The same approach extends to the other cases studied above---see below.

%----------------------------------------------
\subsection{Computing vertical decompositions}
\label{subsec:algo}

Let $\S$ be a collection of $n$ semi-algebraic sets of constant complexity in $\reals^3$.
For any subset $\S' \subseteq \S$ of size $m \le n$, let
$\C(\S')$ denote the complement of the union of $\S'$, $\VeD{\C}(\S')$ its vertical decomposition,
$\psi(m)$ the maximum complexity of $\C(\S')$, and $\VeD{\psi}(m)$ the 
maximum complexity of $\VeD{\C}(\S')$ (over all subsets $\S'$ of size $\le m$). 
By \thmref{main}, $\VeD{\psi}(m) = O^*(m^2+\psi(m))$.
We present a randomized algorithm that constructs $\VeD{\C}(\S)$ in $O^*(n^2+\psi(n))$ 
expected time.
More precisely, it constructs the set of pseudo-prisms (prisms for short) in $\VeD{\C}(\S)$ and its adjacency graph 
$\adj(\S)$ whose vertex set is the set of prisms in $\VeD{\C}(\S)$ and $(\cell,\cell')$ 
is an edge in $\adj$ if $\cell$ and $\cell'$ have overlapping vertical walls.

We use the lazy randomized incremental construction (RIC) of de~Berg~\etal~\cite{BDS}. 
Let $\langle \sigma_1, \sigma_2, \ldots, \sigma_n \rangle$ be a random permutation of $\S$, 
and let $\S_i$ be the prefix of length $i$ of this permutation. We add the sets of $\S$ one 
by one in this order and maintain $\C(\S_i)$, the complement of the union after $i$ steps, 
and $\VeD{\C}(\S_i)$, its vertical decomposition,
in a lazy manner, as in \cite{BDS}. Concretely, suppose we have already computed $\VeD{\C}(\S_{i-1})$. 
To compute $\VeD{\C}(\S_i)$, we insert $\sigma_i$ into $\VeD{\C}(\S_{i-1})$, which requires 
splitting the cells of $\VeD{\C}(\S_{i-1})$ that $\bd\sigma_i$ crosses, removing all cells 
of the resulting subdivision that lie inside $\sigma_i$, and merging some of the other
exterior cells; see, e.g.,~\cite{BDS,SA}. Performing the last two tasks at each 
step may be expensive, so the lazy RIC only splits the cells at each step but performs 
the \emph{clean-up} task, consisting of the last two steps, only at certain steps.
Roughly speaking, after $i$ steps, we maintain a subdivision $\Pi_i$ of prisms,
such that $\bigcup\Pi_i$ contains $\C(\S_i)$ and 
each prism in $\Pi_i$ lying in $\C(\S_i)$ is contained in a prism of $\VeD{\C}(\S_i)$.
In step $i$, we find all cells (prisms)
of $\Pi_{i-1}$ that $\bd\sigma_i$ crosses, split every such prism by $\bd\sigma_i$ into $O(1)$
subcells, and partition the resulting subcells into prisms (all this takes $O(1)$ time and space 
per prism). Let $\Pi'_i$ be the resulting subdivision. If $i<n$ and $i$ is not a power of $2$,
we set $\Pi_i=\Pi'_i$ and proceed to the next step. Otherwise (i.e., $i=2^k$ for some 
integer $k$ or $i=n$), we also perform the following
clean-up task: we identify all cells (prisms) of $\Pi_i$ that lie in $\U(\S_i)$,
mark them \emph{inactive}, and remove them from $\Pi_i$ so that they are not refined any further. 
Next, we merge all cells that lie in one prism of 
$\VeD{\C}(\S_i)$ into a single prism, and set $\Pi_i$ to be the resulting subdivision, which 
is $\VeD{\C}(\S_i)$. To perform these operations efficiently, we maintain two auxiliary structures
(which are standard in randomized incremental constructions):
\begin{enumerate}[(i)] 

\item {\emph{History DAG}:} We maintain a directed acyclic graph $\HDAG$ whose vertex set
is the set of all prisms created 
by the algorithm so far, and $\cell \rightarrow \cell'$ is an edge of $\HDAG$ 
if $\cell'$ was created by splitting $\cell$ (during an insertion step) or by merging $\cell$ 
with some other cells. The cells marked \emph{inactive} become (inactive) leaves of $\HDAG$.
Each node of $\HDAG$ has $O(1)$ out-degree, although its in-degree may be unbounded (because 
of the merging of subregions into a common region). It follows from well-known results
that the expected depth of $\HDAG$ is $O(\log n)$~\cite{BDS,SA}. 

\item {\emph{Adjacency graph}:} We maintain an undirected graph $\adj$ whose vertex set 
is the set of active leaves of $\HDAG$, and $(\cell,\cell')$ is an edge in $\adj$ if 
$\cell$ and $\cell'$ have overlapping  vertical walls. If the clean-up task is performed 
at step $i$, then $\adj$ is the adjacency graph of $\VeD{\C}(\S_i)$ at the end of step $i$. 
\end{enumerate}

By tracing $\bd\sigma_i$ through $\HDAG$, we identify the cells of $\Pi_{i-1}$ that are crossed by 
$\bd\sigma_i$, as well as the cells of $\Pi_{i-1}$ that lie inside $\sigma_i$. 
By traversing $\adj$, starting from the cells of $\Pi'_i$ that were created in Step~$i$
(i.e., the children of cells of $\Pi_{i-1}$ that are crossed by $\bd\sigma_i$),
we can identify all cells of $\Pi_i$ that need to be merged into a single prism. 
We omit the details from here, which are studied, and can be found in~\cite{BDS}.

Set $u =\floor{\log_2 n}$. For $k \in [0, u]$, let $\R_k = \S_{2^k}$, i.e., the subset of $\S$ 
that already has been inserted when the $k$-th clean-up step was performed. 
Set $\R_{u+1} = \S$. For a prism $\cell\in \VeD{\C}(\R_k)$, let 
$\S_{\cell,k} \subseteq \R_{k+1}\setminus\R_k$ 
(resp., $\S_\cell \subseteq \S\setminus\R_k$) be the set of regions of $\R_{k+1}$ 
(resp., $\S$) that intersect $\cell$ (i.e., either their boundaries cross $\cell$ or 
they contain $\cell$). As proved in~\cite{BDS}, for any constant $t>0$,
\begin{equation}
\label{eq:conflict}
\EE\left [ \sum_{\cell\in \VeD{\C}(\R_k)} |\S_{\cell}|^t \right ] = O \left ( \VeD{\psi}(2^k) \biggl ( 
\frac{n}{2^k} \biggr )^t \right ), \qquad
\EE\left [ \sum_{\cell\in \VeD{\C}(\R_k)} |\S_{\cell,k}|^t \right ] = O ( \VeD{\psi}(2^k) 2^t ) , 
\end{equation}
where expectation is with respect to the random insertion order and the second equality is 
obtained using the fact that $|\R_{k+1}| \le 2|\R_k|$.
Roughly speaking, on average, each cell of $\VeD{\C}(\R_k)$ is refined into $O(1)$ prisms 
before the next clean-up step, so the expected time spent in splitting the cells during the
steps $[2^k+1, 2^{k+1}]$ and performing the clean-up at step $2^{k+1}$ can all be charged to 
the cells of $\VeD{\C}(\R_k)$ and $\VeD{\C}(\R_{k+1})$. Using the first bound in~\eqref{conflict}, with $t=1$,
and assuming $\VeD{\psi}(n) = \Omega(n^{1+\delta})$ for some constant $\delta>0$, we also conclude the following:
\begin{equation}
	\label{eq:conflict-size}
\sum_{k=1}^{\ceil{\log_2n}} \sum_{\cell\in \VeD{\C}(\R_k)} |\S_{\cell}| = 
	 O(\VeD{\psi}(n)) = O^* ( n^2 + \psi(n)) .
\end{equation}
The analysis in~\cite{BDS}, whose details are omitted here, implies that the 
expected time spent in computing $\VeD{\C}(\S)$ is $O(\VeD{\psi}(n)\log n) = O^*(n^2+\psi(n))$.

The above algorithm extends to computing the vertical decomposition of the entire arrangement 
$\A(\S)$ (in $\reals^3$). Using the fact that the expected number of vertices in $\A(\R)$,
for a subset $\R\subseteq \S$ of size at most $m$, is $O(m^2+ \chi m^3/n^3)$, where $\chi$ 
is the number of vertices of $\A(\S)$, a similar analysis implies that the expected time 
spent in computing $\VeD{\A}(\S)$ is $O^*(n^2+\chi)$. Finally, the approach
also extends to computing the vertical decomposition of the 
minimization diagram (or the region lying below the lower envelope) of a set 
$\F$ of semi-algebraic trivariate functions of constant complexity in $\reals^4$. 
Since the complexity of the vertical decomposition of the minimization diagram is near cubic, 
the expected time spent in computing $\VeD{M}_\F$, the vertical decomposition of the minimization 
diagram $M_\F$, is $O^*(n^3)$. Omitting further details, we conclude the following:
%----------------------------------------
\begin{theorem}
  \label{thm:algo}
  (i) Let $\S$ be a collection of $n$ constant-complexity semi-algebraic sets in $\reals^3$ such that 
  the complexity of the union of any subset of $\S$ of size at most $m$ is at most $\psi(m)$, for 
  some superlinear function $\psi$.
  Then the vertical decomposition of $\C(\S)$ %(resp., $\A(\S)$)
  can be constructed in $O^*(n^2+\psi(n))$ 
  %(resp., $O^*(n^2+\chi)$)
  randomized expected time.

\medskip\noindent
  (ii) Let $\S$ be as in (i), and let $\chi$ be the number of vertices in $\A(\S)$.
  Then the vertical decomposition of $\A(\S)$ can be constructed in 
  $O^*(n^2+\chi)$ randomized expected time.

\medskip\noindent
(iii) Let $\F$ be a collection of $n$ trivariate semi-algebraic functions of constant complexity.
Then the vertical decomposition of the minimization diagram of $\F$, or of the region in $\reals^4$
below the lower envelope of $\F$, can be constructed in randomized expected time $O^*(n^3)$.
\end{theorem}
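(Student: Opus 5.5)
The plan for Theorem~\ref{thm:algo} is to use the lazy randomized incremental construction of de~Berg~\etal~\cite{BDS}, exactly as set up above, and to argue separately for each part that the expected cost is proportional, up to polylogarithmic factors, to the expected total size of the intermediate decompositions plus the total conflict-list size. For part~(i) I would fix a random insertion order, maintain $\Pi_i$, the history DAG $\HDAG$ and the adjacency graph $\adj$, and run clean-ups at the steps $i=2^k$. The first step is a cost decomposition. The work in a phase $[2^k+1,2^{k+1}]$ is (a) locating, through $\HDAG$, the prisms crossed by each inserted $\bd\sigma_i$; (b) splitting each crossed prism into $O(1)$ prisms; and (c) the clean-up at step $2^{k+1}$, namely classifying prisms as lying inside $\U(\S_i)$ and merging, via a traversal of $\adj$, the fragments that belong to one prism of $\VeD{\C}(\R_{k+1})$.

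Next, I bound the splitting work. Every prism created during the phase descends from some $\cell\in\VeD{\C}(\R_k)$. The number of descendants of $\cell$ is at most a constant power of $|\S_{\cell,k}|$, because they are prisms of the vertical decomposition of $\cell$ cut by the $|\S_{\cell,k}|$ regions. So the second bound in~\eqref{conflict}, with a suitable constant $t$, gives expected work $O(\VeD{\psi}(2^k))$ for the phase. The clean-up is linear in the number of fragments plus $|\VeD{\C}(\R_{k+1})|$, which gives the same bound.

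For point location, each region is traced down $\HDAG$ only through nodes it conflicts with. Summing these conflicts using~\eqref{conflict-size} gives $O(\VeD{\psi}(n))$, and the $O(\log n)$ expected depth of $\HDAG$ adds at most a logarithmic factor. Summing over the $O(\log n)$ phases, and using Theorem~\ref{thm:main} to write $\VeD{\psi}(2^k)=O^*(4^k+\psi(2^k))$, the total is $O^*(n^2+\psi(n))$ because $\psi$ is superlinear.

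Part~(ii) uses the same algorithm with no clean-up deletions: all cells are kept, and only merges are performed. The bound on the sizes comes from Theorem~\ref{thm:arrg} applied to a random prefix $\R$ of size $m$. Its expected vertex count is $O(m^2+\chi m^3/n^3)$, so the expected size is $O^*(m^2+\chi m^3/n^3)$. Plugged into the analogue of~\eqref{conflict} with $t=1$, this sums over the phases to $O^*(n^2+\chi)$.

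For part~(iii), I would insert the functions in random order and maintain the vertical decomposition of the minimization diagram, equivalently the region below the envelope, lifted as in Section~\ref{subsec:lower_env}. The expected size for a prefix of size $m$ is $O^*(m^3)$ by Theorem~\ref{thm:env4d}. The conflict analysis is unchanged: a 4D prism conflicts with a function whose graph passes below its top.

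The main obstacle is the clean-up merging step and the validity of~\eqref{conflict} for these prisms. Each prism must be determined by $O(1)$ regions (a defining set), and conflicts must be exactly the regions intersecting the prism. That holds since a prism is fixed by at most four regions, or six functions in part~(iii). The merging step needs care because a prism of $\VeD{\C}(\R_{k+1})$ may absorb many fragments. I would charge that cost to the fragments, whose number is already bounded, and rely on~\cite{BDS} for the bookkeeping.
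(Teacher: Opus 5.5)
Your proposal is correct and follows essentially the paper's route. Both use the lazy randomized incremental construction of~\cite{BDS} with clean-ups at powers of two, and both charge the splitting, location, and clean-up work of each phase to the cells of $\VeD{\C}(\R_k)$ through~\eqref{conflict} and~\eqref{conflict-size}. Both then plug in the size bounds of Theorems~\ref{thm:main}, \ref{thm:arrg} (with the $O(m^2+\chi m^3/n^3)$ expected vertex count of a sample), and~\ref{thm:env4d}.

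One minor slip: ``four regions'' (resp.\ ``six functions'') is the size of the defining set of a vertical visibility, not of a prism. A prism's defining set is larger, but still of constant size, and that is all the conflict-list bounds require.
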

%---------------------------------------

%---------------------------------------------------
\subsection{Cuttings}
\label{subsec:cut}

Let $\S$ be a set of $n$ semi-algebraic sets in $\reals^3$ as above.
For a parameter $1<r\le n$, a \emph{$(1/r)$-cutting} of $\C(\S)$
is a set $\Xi$ of pseudo-prisms (again, prisms for short) with pairwise-disjoint relative interiors that cover $\C(\S)$,
such that at most $n/r$ sets of $\S$ \emph{properly intersect}
the relative interior of each prism $\tau \in \Xi$, i.e., the number of regions 
whose interiors contain the relative interior of $\tau$ or whose boundaries cross the 
relative interior of $\tau$ is at most $n/r$. The subset of $\S$ properly intersecting 
$\cell$ is called the \emph{conflict list} of $\cell$.
%\footnote{We emphasize that the conflict list may also include sets that fully contain $\cell$, which is somewhat non-standard in the theory of $(1/r)$-cuttings---see below for the construction.} \esther{This is a new footnote, pls check.}
This definition of cuttings extends to the entire arrangement of a set $\S$ of surface patches in $\reals^3$ by treating 
each surface patch as an infinitely thin semi-algebraic set, as described 
in~\subsecref{output_sensitive}. Finally, it also extends to the region lying below the 
lower envelope of a family $\F = \{ f_1, \ldots, f_n\}$  of trivariate functions, by defining,
for each $i$, the semi-algebraic set $f_i^+ = \{ (x,y) \in \reals^3 \times \reals \mid y\ge f_i(x) \}$,
and then the complement of the union of these semi-algebraic sets is the region below the lower envelope of $\F$.

It is well known that the random-sampling paradigm can be used to construct 
a $(1/r)$-cutting of $\C(\S)$~\cite{Ag:rs,dBS-95,HW87,m-ept-92}. Namely, set $s=cr\log r$, where 
$c$ is a sufficiently large constant. Let $\R \subseteq \S$ be a random subset of $\S$ of size $s$,
and let $\VeD{\C}(\R)$ be the vertical decomposition of $\C(\R)$. For each cell $\cell\in\VeD{\A}(\R)$, let 
$\S_\cell \subset \S$ be the subset of $\S$ that properly intersect $\cell$. 
By construction, $\S_\cell \cap \R = \emptyset$ and $\cell$ is a semi-algebraic set of 
constant complexity. Therefore using a standard random-sampling argument~\cite{Clarkson-87,HW87}, 
it can be shown that $|\S_\cell| \le n/r$ for all $\cell\in\VeD{\A}(\R)$ with probability at least $1/2$,
assuming the constant $c$ is chosen sufficiently large. The sets $\S_\cell$ are the
conflict lists of the respective cells $\cell$. Therefore, $\VeD{\C}(\R)$ is
a $(1/r)$-cutting $\Xi$ of $\C(\S)$ with probability at least $1/2$. 
$\VeD{\C}(\R)$ along with the conflict lists can be constructed by running the algorithm described in \subsecref{algo} for $r$ 
steps (and performing the clean-up task at the end of the $r$-th step as well). Following the analysis 
in~\cite{BDS} and using \eqref{conflict-size}, the expected time taken by the first $r$ steps of the algorithm is
$O^*(nr + n\psi(r)/r)$. 
Cuttings for the arrangements of surface patches in $\reals^3$ or for the minimization diagram (or region below the lower envelope) of trivariate functions can be computed in a similar manner.
Omitting further details, we obtain the following:\footnote{%
  It is possible to reduce the size of the cuttings by a polylogarithmic factor,
  using a two-level sampling scheme as described in~\cite{AMS98,dBS-95,cf-dvrsi-90,m-ept-92}. 
  Since we are using the $O^*(\cdot)$ notation, and ignore subpolynomial factors, 
  we derive the above simpler, albeit slightly weaker, construction.}

%------------------------------------
\begin{theorem}
  \label{thm:cuttings}
(i) Let $\S$ be a collection of $n$ constant-complexity semi-algebraic sets in $\reals^3$, such that 
  the complexity of the union of any subset of $\S$ of size $m$ is at most $\psi(m)$, for some
  superlinear function $\psi$. For any parameter $r \in [1,n]$, a $(1/r)$-cutting of $\C(\S)$,
  of size $O^*(r^2 + \psi(r))$, along with the conflict lists of its cells,
 can be computed in expected time $O^*(nr + n\psi(r)/r)$.
  
  \medskip\noindent
(ii) Let $\S$ be as in (i), and let $\chi$ be the number of vertices in $\A(\S)$.
  For any parameter $r \in [1,n]$, a $(1/r)$-cutting of $\A(\S)$,
  of size $O^*(r^2 +  r^3 \chi/n^3)$, along with the conflict lists of its cells,
 can be computed in expected time $O^*(nr +  \chi(r/n)^2)$.
  
  \medskip\noindent
  (iii) Let $\F$ be a collection of $n$ trivariate semi-algebraic functions of constant complexity. 
For any parameter $r \in [1,n]$, a $(1/r)$-cutting of the minimization diagram of $\F$, or
  of the region lying below the lower envelope of $\F$, of size $O^*(r^3)$, along with the 
  conflict lists of its cells, can be computed in expected time $O^*(nr^2)$.
\end{theorem}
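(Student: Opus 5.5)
The plan is the standard random-sampling construction of cuttings, using Theorem~\ref{thm:algo} and the combinatorial bounds (Theorems~\ref{thm:main}, \ref{thm:arrg}, \ref{thm:env4d}) to control both the size and the running time. Fix $s = cr\log r$ for a suitably large constant $c$. Draw a random sample $\R\subseteq\S$ of size $s$; concretely, take the prefix $\S_s$ of the random permutation used in the lazy RIC of Section~\ref{subsec:algo}. Run that algorithm for $s$ steps and force a clean-up at step $s$, so that we hold exactly $\VeD{\C}(\R)$.

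\textbf{Correctness.} Each prism $\cell$ of $\VeD{\C}(\R)$ is a constant-complexity semi-algebraic set defined by $O(1)$ elements of $\R$. The ranges ``sets of $\S$ properly intersecting $\cell$'' therefore have bounded VC-dimension. The $\eps$-net / Clarkson--Haussler argument then shows that, with probability at least $1/2$, every cell has at most $n/r$ sets of $\S$ in its conflict list. A set $\sigma\in\R$ cannot properly intersect a cell of $\C(\R)$, so all conflicting sets come from $\S\setminus\R$. If the check fails, we resample; the expected number of repetitions is $O(1)$.

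\textbf{Size.} By Theorem~\ref{thm:main}, $|\VeD{\C}(\R)| = O^*(s^2+\psi(s)) = O^*(r^2+\psi(r))$, since the $\log r$ factor is absorbed by the $O^*$ notation and $\psi$ is polynomially bounded.

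\textbf{Time and conflict lists.} The History DAG supplies the conflict lists: every element of $\S\setminus\R$ is traced through $\HDAG$, exactly as the unprocessed elements are in the lazy RIC. This is the step I expect to be the main obstacle. The total cost is governed by \eqref{conflict} summed over clean-up rounds $k\le\log_2 s$ with $t=1$. This gives
\[
\sum_{k} \VeD{\psi}(2^k)\,\frac{n}{2^k} = O^*\Bigl(\sum_k n\bigl(2^k + \psi(2^k)/2^k\bigr)\Bigr) = O^*\bigl(nr + n\psi(r)/r\bigr).
\]
The last step uses superlinearity of $\psi$, so that $\psi(m)/m$ is nondecreasing and the geometric-type sum is dominated by its last term. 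We must also check that the merging and clean-up work across the first $s$ steps is bounded by the same sum; this follows from the BDS analysis.

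\textbf{Part (ii).} Treat each face patch as an infinitesimally thin set, as in Section~\ref{subsec:output_sensitive}, and apply Theorem~\ref{thm:arrg}. The expected number of vertices of $\A(\R)$ is $O(s^3\chi/n^3)$, which gives size $O^*(r^2+r^3\chi/n^3)$. The time is
\[
\sum_k n\bigl(2^k + 2^{2k}\chi/n^3\bigr) = O^*\bigl(nr+\chi r^2/n^2\bigr).
\]

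\textbf{Part (iii).} Apply the same construction to the sets $f_i^+$, using Theorem~\ref{thm:env4d} and Theorem~\ref{thm:algo}(iii). The size is $O^*(r^3)$ and the time is
\[
\sum_k n\,2^{3k}/2^k = O^*(nr^2).
\]
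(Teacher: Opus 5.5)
Your proposal is correct and takes the paper's approach. The paper uses a sample of size $cr\log r$ taken as a prefix of the lazy randomized incremental construction with a forced clean-up, and the standard random-sampling argument for the $n/r$ conflict bound. For the running time it applies \eqref{conflict} with $t=1$ summed over the clean-up rounds, deferring the intermediate-prism and merging costs to the lazy-RIC analysis, and it makes the same substitutions in parts (ii) and (iii). The one loose spot is your justification of $\psi(cr\log r)=O^*(\psi(r))$: polynomial boundedness of $\psi$ does not give this by itself. The doubling property $\psi(2m)=O(\psi(m))$ does, and it follows from a Clarkson--Shor sampling argument, because every vertex of $\C(\R)$ remains a vertex of $\C(\R')$ for any $\R'\subseteq\R$ that contains its three defining sets.
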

%------------------------------------

%-------------------------------------------------------------
\subsection{Point-enclosure reporting queries}
\label{subsec:pt-encl}

Let $\S$ be a set of $n$ semi-algebraic regions of constant complexity in $\reals^3$. The goal is to 
preprocess $\S$ into a data structure so that all regions of $\S$ that contain a query point 
can be reported quickly; we refer to such queries as \emph{point-enclosure reporting} queries. 
Specifically, by adapting Chan's data structure 
for 3D halfspace range searching~\cite{Chan00}, we present a data structure of $O^*(n^2+\psi(n))$ 
size and expected preprocessing cost, where $\psi$ is as above, that can answer a point-enclosure
query in $O(\log n+k)$ expected time, where $k$ is the output size.
%the worst-case query time is $O((1+k)\log n)$. 
The approach can be extended to preprocess a set $\F$ of $n$ trivariate functions into a data structure 
that reports all functions of $\F$ whose graphs lie below a query point in $\reals^4$, where
the storage and expected preprocessing are now $O^*(n^3)$, and the cost of a query is as above.

\paragraph{Data structure.}

Let $\R_0 \subset \R_1 \subset \cdots \subset \R_u$ be a sequence of random subsets of the input set 
$\S$, where $|\R_i| = 2^i$ and $u = \floor{\log_2 \left(\tfrac{n}{\log_2 n}\right )}$
(i.e., $|\R_u| = 2^u \in [\tfrac{n}{2\log_2 n}, \frac{n}{\log_2 n}]$).
Using the algorithm described in \subsecref{algo}, we construct  
$\VeD{\C}(\R_0), \ldots, \VeD{\C}(\R_u)$, along with the conflict lists of the respective prisms,
as well as the overall history DAG $\HDAG$. For each cell $\cell\in\VeD{\C}(\R_i)$, let $\S_\cell$ 
be the conflict list of $\cell$ with respect to the entire set $\S$. 
For technical reasons, detailed below, to boost the probability of correctness we construct 
three copies of the above data structure, i.e., we choose three independent random permutations 
of $\S$, define the sequence of subsets $\R_{i,j}$, for $1 \le i \le 3$ and $0 \le j \le u$, and 
build $\VeD{\C}(\R_{i,j})$ for each subset separately. Let $\Psi_1, \Psi_2, \Psi_3$ denote 
these three data structures. 
If the size of any $\Psi_i$ 
exceeds $c_0(n^2+\psi(n))n^\eps$, where $c_0>0$ is a sufficiently large constant, we 
reconstruct $\Psi_i$.
The total size and expected preprocessing time are $O^*(n^2+\psi(n))$ (cf.\ \thmref{algo}).

\paragraph{Query procedure.}
Let $q\in\reals^3$ be a query point, let $\S_q \subset\S$ be the set of regions that contain $q$, 
and put $k = |\S_q|$. We assume that we know the value of $k$, up to a factor of $2$, say;
otherwise we perform an exponential search to guess the value of $k$. 
For simplicity of exposition, we assume that $k=n/2^\ell$.
For each $i \le 3$, we search the history DAG $\HDAG_i$ of $\Psi_i$ and find the smallest index 
$\nu_i$ such that  $q \not\in \C(\R_{i,\nu_i})$ (i.e., $q$ lies in the corresponding union). 
For $0 \le j < \nu_i$, let 
$\cell_{i,j}$ be the cell of $\VeD{\C}(\R_{i,j})$ that contains $q$, and let 
$\S_{i,j}$ be the conflict list of $\cell_{i,j}$. By definition $\S_q \subseteq \S_{i,j}$. 
It is known~\cite{BDS} that $\EE[|\S_{i,j}|] \le c n/2^j$ for some constant $c>0$.

We now execute the following iterative procedure, starting with $j = \ell$:
if there exists an $i \le 3$ such that $j < \nu_i$ and $|\S_{i,j}| \le c2^{\ell-j}n/2^j$, then we 
 traverse the conflict list $\S_{i,j}$, report all of its regions that contain $q$, and stop.   
Otherwise we decrement the value of $j$ and repeat. 
Note that $|\S_{i,0}| \le  n$ and $2^\ell = n/k \ge 1$, so the algorithm traverses
the conflict list of one of the $\S_{i,j}$'s and reports $\S_q$. 

We now bound the expected query time. We note that $q \not\in \C(\R_{i,j})$ iff $\S_q\cap\R_{i,j} \ne\emptyset$.
Therefore 
\begin{equation}
\Pr[q \not\in  \C(\R_{i,j})] \le  k\frac{2^j}{n} = \frac{n}{2^\ell} \cdot \frac{2^j}{n} = 2^{-(\ell-j)} .
\end{equation}
Furthermore, by Markov's inequality,
\begin{equation}
	\Pr \biggl [|\S_{i,j}| > 2^{\ell-j} \frac{cn}{2^j}\biggr ] 
	\le \Pr \biggl[|\S_{i,j}| > 2^{\ell-j} \EE[|\S_{i,j}|] \biggr ]\le  \frac{1}{2^{\ell-j}}=2^{-(\ell-j)} .
\end{equation}
Let $\XX_j$ be the event that one of these conditions holds for all three 
data structures (namely, for each $i$ we have $\S_q\cap\R_{i,j} \ne\emptyset$ or 
$|\S_{i,j}| > 2^{\ell - j} \frac{cn}{2^j}$). Then 
\[ 
\Pr[\XX_j] = (2^{-(\ell-j)+1})^3 = 2^{-3(\ell-j)+3} .
\]
If we scan the conflict list for index $j-1$, the time spent is  at most
\[
c 2^{\ell-j+1}\frac{n}{2^{j-1}} = 4c 2^{2(\ell-j)} \frac{n}{2^\ell} = 4c2^{2(\ell-j)} k .
\]
Hence, the expected query time is 
\begin{equation}
	\label{eq:exp-time}
O(\log n) + \sum_{j\leq \ell} \left( 2^{-3(\ell-j)+3} \cdot  4c2^{2(\ell-j)} \right) k = O(\log n+k) .
\end{equation}
%\newtext{
Finally, if the algorithm reports more than $k$ sets, we double the guessed value of $k$ and repeat the procedure. Note that it is only the iterative procedure that we have 
to execute repeatedly for different guesses of $k$, as $\nu_i$ is independent of the value of $k$.
% }
Hence, 
the total expected time in performing the exponential search is 
$O(\log n) + \sum_{i\le \ceil{\log_2 k}} O(2^i) = O(\log n +k)$, 
We thus conclude the following:

%------------------------------------------------
\begin{theorem}
  \label{thm:enc}
  (i) Let $\S$ be a set of $n$ semi-algebraic regions of constant complexity in $\reals^3$.
  $\S$ can be preprocessed, in $O^*(n^2+\psi(n))$ randomized expected time, where $\psi(n)$
  is as above, into a data structure of size $O^*(n^2+\psi(n))$, 
  which supports point-enclosure reporting queries in expected time $O(\log n+k)$, where $k$ is the output size. 

  \smallskip \noindent
  (ii) Let $\F$ be a set of $n$ semi-algebraic trivariate functions of constant complexity. 
  $\F$ can be preprocessed, in $O^*(n^3)$ randomized expected time, 
  into a data structure of size $O^*(n^3)$ that supports point-enclosure reporting queries
  (of the functions that pass below the query point),
  in expected time $O(\log n+k)$, where $k$ is the output size. 
	%The worst-case query time is $O(\log n+k\log(n/k))$.
\end{theorem}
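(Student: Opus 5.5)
For part (i), I would read off the construction already described in Section~\ref{subsec:pt-encl}. The storage and preprocessing bound follows from \thmref{algo}(i) applied to the three independent random permutations. The lazy RIC stops at $|\R_u|\approx n/\log n$ and records the conflict list of every prism in every $\VeD{\C}(\R_{i,j})$. By \eqref{conflict} with $t=1$ and \thmref{main}, the expected total size of all these prisms and conflict lists is $\sum_j O(\VeD{\psi}(2^j)\, n/2^j)$. Since $\VeD{\psi}(m)=O^*(m^2+\psi(m))$ is superlinear, and in fact grows at least like $m^{1+\delta}$ in the relevant regime (see the discussion around \eqref{conflict-size}), this geometric-type sum is dominated by its last term and is $O^*(n^2+\psi(n))$. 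The rebuild rule, which reconstructs $\Psi_i$ whenever its size exceeds $c_0(n^2+\psi(n))n^\eps$, turns the expected bound into a worst-case size bound. By Markov's inequality each build succeeds with constant probability, so the expected number of rebuilds is $O(1)$.

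For the query, I would follow the procedure already given. Locating $\nu_i$ and the cells $\cell_{i,j}$ costs $O(\log n)$ expected time, using point location in the history DAG of expected depth $O(\log n)$. For correctness, every reported conflict list $\S_{i,j}$ with $j<\nu_i$ contains $\S_q$: the cell $\cell_{i,j}$ lies in $\C(\R_{i,j})$, so every region containing $q$ either crosses $\cell_{i,j}$ or contains it, and in both cases it appears in the conflict list. For the time bound, I would combine the two tail estimates per copy, $\Pr[\S_q\cap\R_{i,j}\neq\emptyset]\le 2^{-(\ell-j)}$ and the Markov bound on $|\S_{i,j}|$, with the independence of the three copies. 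This gives $\Pr[\XX_j]\le 2^{-3(\ell-j)+3}$, which beats the $2^{2(\ell-j)}k$ cost of scanning at level $j-1$, and the sum \eqref{exp-time} yields $O(\log n+k)$.

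The step I expect to be the main obstacle is the case where $k$ is unknown, together with the cutoff at $|\R_u|\approx n/\log n$. For the exponential search, each wrong guess of $k$ reuses the already computed $\nu_i$ and the cells, and runs only the scanning loop, whose expected cost is $O(2^i)$ when $k$ is guessed as $2^i$. A run under a guess that is too small is aborted once more elements than the guessed number have been reported, so its cost stays $O(\text{guess})$, and these costs sum to $O(k)$. For $k$ smaller than about $\log n$, the deepest level $u$ already has expected conflict size $O(\log n)$, so the scan costs $O(\log n)$ and is absorbed in the $O(\log n)$ term.

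Part (ii) follows by the same argument. Replace each function $f$ by the semi-algebraic set $f^+=\{(x,w): w\ge f(x)\}$, so that complements of unions become regions below lower envelopes. Then use \thmref{algo}(iii) and \thmref{env4d} in place of \thmref{algo}(i) and \thmref{main}; the resulting sum $\sum_j O^*(2^{3j}\, n/2^j)=O^*(n^3)$ gives the storage and preprocessing bound. The query analysis is verbatim.
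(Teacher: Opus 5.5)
Apart from one step, your proposal is the paper's proof. It uses three independent lazy-RIC hierarchies with conflict lists, the rebuild rule, and history-DAG location of $\nu_i$ and the cells $\cell_{i,j}$. It plays the two $2^{-(\ell-j)}$ tail bounds, cubed over the three copies, against the $O(2^{2(\ell-j)}k)$ scanning cost, then uses doubling on $k$, and passes to the sets $f^+$ for part (ii). You also address the case where $k$ is below about $\log n$ (so $\ell>u$), which the paper leaves implicit.

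\textbf{The step that fails is your justification of the doubling search.} You argue that a run under a guess $k'<k$ costs $O(k')$ because it is aborted once more than $k'$ regions have been reported. But that rule bounds the output, not the work. Besides $\S_q$, the list $\S_{i,j}$ contains every region that merely crosses $\cell_{i,j}$; those entries are examined but never reported.

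Here is how this goes wrong. With $\ell'=\log_2(n/k')$, a level $j>\ell$ is free for a given copy with probability at most $e^{-2^{j-\ell}}$. So, for $k'\le k/4$, with constant probability the run scans a list at level $\ell$. There the threshold $c2^{\ell'-\ell}n/2^\ell=ck^2/k'$ does not stop it. That list always contains the $k$ members of $\S_q$. If $\Theta(k)$ crossing regions precede them in the stored order, reaching the $(k'+1)$st reported region costs $\Theta(k)$. Over the $\Theta(\log k)$ wrong guesses your argument therefore yields only $O(\log n+k\log k)$. The paper asserts the same $O(2^i)$ per-guess cost without argument, so following it does not close this step.

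\textbf{A simple repair is to drop the guess altogether.} It never affects correctness, since every $\S_{i,j}$ with $j<\nu_i$ contains $\S_q$. Let $J=\max_i\nu_i-1$, capped at $u$, be the highest level at which some copy is free, and scan the list of a copy that is free at level $J$. For $j<u$, the event $J=j$ requires every copy to be non-free at level $j+1$. Fix a copy $i$ and condition on $\R_{i,j}$ with $q\in\C(\R_{i,j})$. The next $2^j$ insertions form a uniformly random subset of $\S\setminus\R_{i,j}$, so they hit $\S_q$ with probability at most $2k2^j/n$, regardless of $|\S_{i,j}|$. The other two copies are independent of copy $i$. Hence the event $J=j$ contributes $O\bigl((n/2^j)\min\{1,(k2^j/n)^3\}\bigr)$ to the expected scanning cost. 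The event $J=u$ contributes $O(n/2^u)=O(\log n)$. If no copy is free even at level $0$, you scan all of $\S$; this has probability at most $(k/n)^3$ and contributes at most $k$. Summing gives $O(\log n+k)$ with no knowledge of $k$.
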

%------------------------------------------------

%-----------------------------------------------------
\section{Conclusion}

In this paper we have settled in the affirmative a few long-standing open problems involving the 
vertical decomposition of various substructures of arrangements in three and four dimensions.
In particular, we have obtained
sharp bounds on the complexity of the vertical decomposition of the complement of the union of a family of 
semi-algebraic sets of constant complexity in $\reals^3$, and of the region below the lower envelope 
of a family of semi-algebraic trivariate functions. We also obtained an output-sensitive bound on the 
size of the vertical decomposition of the arrangement of a family of semi-algebraic sets in $\reals^3$. 
These results lead to efficient algorithms for constructing the vertical decompositions themselves, 
for constructing $(1/r)$-cuttings of %above
substructures of arrangements, and for answering point-enclosure reporting queries.
In a companion paper, 
%Finally, we applied these results to obtain more 
we present efficient data structures for various proximity problems involving lines and points (and segments, and triangles) in $\reals^3$ using the results obtained in this paper.

We conclude by mentioning a few open problems. 
The major long-standing open question is, of course, to improve the bound on the complexity of 
the vertical decomposition of the arrangement of a family of $n$ semi-algebraic sets in $\reals^d$,
for $d\ge 5$, from $O^*(n^{2d-4})$ to $O^*(n^d)$. 
But an immediate open question is whether the techniques developed in this 
paper can be extended to obtain improved bounds on the vertical decomposition of 
various additional substructures of arrangements in $\reals^3$ and $\reals^4$.  
Challenges of this kind include:
\begin{itemize}
	\item[(i)] Show that the complexity of the vertical decomposition of all cells of depth at most $k$ in an 
  arrangement of $n$ semi-algebraic regions in $\reals^3$ is $O^*(n^2+\psi_k(n))$, 
  where $\psi_k(n)$ is the maximum complexity of these (undecomposed) cells.
\item[(ii)] Show that the complexity of the vertical decomposition of the sandwich region lying 
  between the upper envelope of a family of 
  $n$ semi-algebraic trivariate functions (of constant complexity) and the lower envelope 
  of another family of $n$ such functions is $O^*(n^3)$.
  \item[(iii)] Show that the complexity of the vertical decomposition of the minimization diagram of $n$ $4$-variate semi-algebraic functions of
  constant complexity is $O^*(n^4)$.
\end{itemize}

%----------------------------

%---------------------------------------------------------------------------

\end{document}